\begin{document}

\sloppy

\title{Single-Round Proofs of Quantumness from Knowledge Assumptions}

\author{}
\institute{}

\author{Petia Arabadjieva\inst{1} \and
Alexandru Gheorghiu\inst{2} \and
Victor Gitton\inst{1} \and
Tony Metger\inst{1}}

\pagestyle{plain}

\institute{%
    ETH Zurich \\
    \email{\{petiaa,vgitton,tmetger\}@ethz.ch}\and
    Chalmers University of Technology \\
    \email{aleghe@chalmers.se}
}

\maketitle              

\begin{abstract}
A \emph{proof of quantumness} is an efficiently verifiable interactive test that an efficient quantum computer can pass, but all efficient classical computers cannot (under some cryptographic assumption). Such protocols play a crucial role in the certification of quantum devices. Existing single-round protocols (like asking the quantum computer to factor a large number) require large quantum circuits, whereas multi-round ones use smaller circuits but require experimentally challenging mid-circuit measurements. As such, current proofs of quantumness are out of reach for near-term devices.

In this work, we construct efficient single-round proofs of quantumness based on existing \emph{knowledge assumptions}. 
While knowledge assumptions have not been previously considered in this context, we show that they provide a natural basis for separating classical and quantum computation.
Specifically, we show that multi-round protocols based on Decisional Diffie-Hellman (DDH) or Learning With Errors (LWE) can be ``compiled'' into single-round protocols using a knowledge-of-exponent assumption~\cite{tkea} or knowledge-of-lattice-point assumption~\cite{k_assumption}, respectively.
We also prove an \emph{adaptive hardcore-bit} statement for a family of claw-free functions based on DDH, which might be of independent interest. 

Previous approaches to constructing single-round protocols relied on the random oracle model and thus incurred the overhead associated with instantiating the oracle with a cryptographic hash function. In contrast, our protocols have the same resource requirements as their multi-round counterparts without necessitating mid-circuit measurements, making them, arguably, the most efficient single-round proofs of quantumness to date.
Our work also helps in understanding the interplay between black-box/white-box reductions and cryptographic assumptions in the design of proofs of quantumness.
\end{abstract}

\section{Introduction}
Demonstrating quantum advantage, the point where a quantum computer can solve a problem that no existing classical computer can, is both a theoretical and technological challenge. It requires a problem that is plausibly intractable for classical algorithms and which admits an efficient quantum algorithm, ideally one that can be performed with noisy intermediate-scale quantum (NISQ) devices~\cite{preskill2018quantum}. 
Additionally, the problem's solution should be efficiently verifiable by a classical computer. This is necessary if one wishes to have a convincing and scalable way of proving quantum advantage. Currently, there are three main paradigms for demonstrating quantum advantage. 

The most straightforward one is to solve a problem which is believed to be classically hard, such as integer factorization. A quantum computer could efficiently solve this task by running Shor's algorithm~\cite{shor} and the solution can be efficiently verified by multiplying the factors reported by the prover. But while Shor's algorithm is efficient in the sense of only requiring a polynomial-size quantum circuit, the actual circuit for any reasonably-sized integer to be factored is too large to implement with NISQ devices~\cite{gidney2021factor,gouzien2021factoring}. Another approach is based on the classical hardness of sampling problems such as \emph{random circuit sampling}~\cite{bouland2019complexity,Arute2019,wu2021strong} or \emph{boson sampling}~\cite{aaronson2011computational,zhong2020quantum,madsen2022quantum}. While these experiments can be implemented with current hardware, they are not efficiently verifiable.

The work of Brakerski et al.~\cite{randomness} introduced a new approach towards testing for quantum advantage, referred to as a \emph{proof of quantumness} protocol. Akin to the cryptographic notions of proof or argument systems~\cite{goldwasser1989knowledge,brassard1988minimum,goldreich1998complexity}, a proof of quantumness is an interactive protocol between a polynomial-time classical \emph{verifier} and an ostensibly quantum polynomial-time \emph{prover}. The verifier issues challenges to the prover and checks the correctness of the prover's responses. The key feature of such a protocol is that there should exist an efficient quantum strategy that allows the prover to correctly answer the verifier's challenges with high probability, whereas any efficient classical strategy can only succeed with low probability (under some plausible cryptographic hardness assumption such as the classical intractability of factoring, Decisional Diffie-Hellman (DDH), or the Learning With Errors (LWE) problem).

In contrast to other paradigms for proving quantum advantage, the cryptographic proofs of quantumness of Brakerski et al.~and follow-up works do not require the quantum prover to break the underlying computational hardness assumption. Instead, they leverage the fact that the restriction imposed on the prover through mid-protocol interaction limits a classical prover's capacity for correctly responding to subsequent challenges from the verifier more strongly than it limits a quantum prover. 

The advantage of these protocols over an integer factorization-based test is that they have much smaller circuits than Shor's algorithm, making them potentially more suitable for implementation on near-term devices~\cite{kahanamoku2024fast,poqbell,zhu2023interactive,hirahara2021test,liu2022depth,alnawakhtha2022lattice}. However, due to their interactive nature, they require the honest quantum prover to perform mid-circuit measurements for each of the verifier's challenges. Mid-circuit measurements on a subset of qubits are difficult to implement on existing quantum devices without disturbing neighboring qubits and can thus degrade the quality of the remaining computation. An experimental implementation of the two-round protocol by Brakerski et al.~\cite{zhu2023interactive} directly compared the performance of an ion-trap quantum computer running the protocol with and without mid-circuit measurements, revealing a significant difference. Thus, implementing this proof of quantumness protocol at the scale required for quantum advantage seems especially challenging.\footnote{Note that the instance sizes of the underlying cryptographic problem in~\cite{zhu2023interactive} are so small that they can easily be broken by a laptop. For a convincing demonstration of quantum advantage, one would have to use instance sizes that cannot be broken even by the fastest classical supercomputers.}
It would therefore be desirable to have proofs of quantumness with relatively small quantum circuits and which involve only one round\footnote{Throughout the paper, we use the convention that a one-round protocol refers to a protocol with two messages, one message from the verifier to the prover and one message back.} of interaction between the verifier and the prover. Beyond the practical motivation, this would also give us a better understanding of the structure required for demonstrating quantum advantage in a way that is efficiently verifiable.

The recent breakthrough work of Yamakawa and Zhandry made progress in this direction by giving a single-round proof of quantumness protocol in the \emph{random oracle model} (ROM)~\cite{yamakawa2022verifiable}. Prior to their work, Brakerski et al.~constructed single-round proofs of quantumness in the ROM that additionally required a structured computational assumption, such as factoring or LWE~\cite{simplerpoq}. However, with both of these approaches the circuits that the prover would have to perform are larger than those in existing multi-round proofs of quantumness~\cite{randomness,poqbell}.

Our main result is a single-round proof of quantumness that only requires the same small circuits as existing multi-round interactive proofs of quantumness.
We achieve this by starting from the two-round protocol of~\cite{randomness} and removing one of the rounds of interaction through the use of a \emph{knowledge assumption}~\cite{koe,naor2003cryptographic,canetti2009towards,k_assumption,comp_with_kas}. 
As explained in~\cite{comp_with_kas}, a knowledge assumption is a statement of the form: 

\begin{displayquote}
``If an algorithm $\cA$ outputs an object of type $X$, it must know a corresponding witness of the type $W$, such that the output and the witness are in some relation $\mathcal{R} \subseteq X \times W$.''
\end{displayquote}

The rationale behind knowledge assumptions is that certain computational tasks, performed by some probabilistic algorithm $\cA$, can only be performed efficiently by following a specific sequence of steps, thus obtaining a series of intermediate values. Informally, we say that $\cA$ must have ``known'' the intermediate values for its specific output. This is made more precise by saying that there exists an efficient \emph{extractor} $\cA^*$ that receives as input $\cA$'s random coins and outputs (or \emph{extracts}) the relevant intermediate values of $\cA$.

Knowledge assumptions have traditionally been used for the design of protocols that require both extractability (like in proof/argument of knowledge protocols) and succinctness~\cite{bitansky2013succinct,comp_with_kas}. In our case, we are able to leverage knowledge assumptions to construct efficient, single-round proofs of quantumness. As far as we are aware, this is the first time knowledge assumptions are used in this context. 
We argue that their application here is natural and in some sense necessary, if one wishes to avoid multi-round interaction or working in the ROM. Intuitively, this is because some aspect of the proof of quantumness protocol has to differentiate between classical and quantum provers.
In multi-round protocols, this distinction comes from the fact that classical provers can be \emph{rewound}, but quantum provers generally cannot.
In a ROM-based protocol, the distinction arises from the ability to \emph{record} classical queries to a random oracle in a way that is not possible quantumly.

In our single-round protocols, the classical-quantum distinction is due to the knowledge assumption: for example, if $f: \cX \to \cY$ is a one-way function with a sparse range (i.e.~most values in $\cY$ are not in the range of $f$), it is plausible that a \emph{classical} prover that produces a value in the range of $f$ must have done so by evalating the function on some $x \in \cX$, and must therefore also ``know'' the corresponding preimage; this can be captured formally as a knowledge assumption~\cite{canetti2009towards,bitansky2014existence}.
However, a \emph{quantum} prover can do the following: first prepare the state $\sum_x \ket{x} \ket{f(x)}$, then measure the second register in the computational basis to obtain an image $y \in {\rm range}(f)$, and measure the first register in the Hadamard basis to ``erase'' any knowledge of the preimage $x$.
Such a prover can be said to produce a value in the range of $f$ without knowledge of a preimage.
It is this distinction between classical and quantum computation that our single-round proofs of quantumness exploit.
We note that a related idea of oblivious LWE sampling was recently explored in~\cite{debris2024quantum}; since a proof of quantumness only requires soundness against classical provers, we do not need to analyse the quantum prover in detail, but it might be interesting to explore the relation between knowledge assumptions used in our work and the (strong) notion of oblivious LWE sampling proposed in~\cite{debris2024quantum}.
We discuss the classical-quantum distinction and the impossibility of single-round proofs of quantumness with black-box security reductions in more detail in \cref{sec:intro_impossibility}.

The rest of this introduction is structured as follows: in \cref{sec:intro_bcmvv}, we give a brief overview of the two-round proof of quantumness from~\cite{randomness}.
This will form the basis for our single-round proof of quantumness.
In \cref{sec:intro_knowledge}, we recall two existing knowledge assumptions from~\cite{tkea,k_assumption} that we use in our protocols.
In \cref{sec:intro_results}, we explain how to make use of these knowledge assumptions to obtain different single-round proofs of quantumness.
In \cref{sec:intro_impossibility} we argue that a white-box assumption (like a knowledge assumption) seems to be necessary for single-round proofs of quantumness.
Finally, in \cref{sec:intro_relatedwork} and \cref{sec:intro_discussion} we discuss additional related works and open questions.

\subsection{A two-round proof of quantumness} \label{sec:intro_bcmvv}

To explain our results, we first need to outline the two-round (four-message) proof of quantumness protocol from~\cite{randomness}. At the heart of this protocol is a collection of functions known as Trapdoor Claw-free Functions (TCF). TCFs are a type of collision-resistant hash function---they are 2-to-1 functions for which it should be intractable to find a colliding pair of inputs (known as a \emph{claw}), given the description of the function. Additionally, the functions are generated with a trapdoor that allows for efficient inversion. The specific TCFs used in~\cite{randomness} require an additional property known as the \emph{adaptive hardcore-bit} (AHCB) property. Informally, this states that it is not only intractable to find collisions, but given any particular input $x_0$ and corresponding image under the TCF, denoted $y = f(x_0)$, it should be intractable to recover even a single bit of $x_1$, the other input with which $x_0$ forms a claw (i.e.~$f(x_1)=f(x_0)=y$).
Prior to our work, constructing TCFs with an AHCB had only been achieved from LWE and (non-standard) hardness assumptions of isogeny-based group actions~\cite{alamati2022candidate}. One of our results shows an AHCB property for a TCF based on DDH.

\begin{myprotocol} 
\caption{The Proof of Quantumness of \cite{randomness} (informal)}
\label{prot:bcmvv_informal}
\begin{myalgo}
    \State The verifier generates a description of a TCF $f$, together with its trapdoor, $t$. It then sends $f$ to the prover.
    \State The prover sends the verifier a point $y$ in the image of $f$. Denote as $x_0$ and $x_1$ the associated preimages, so $y=f(x_0)=f(x_1).$
    \State
    With probability $1/2$, the verifier sends the prover one of the following two challenges:
    \begin{enumerate}
        \item[(a)] \textbf{Preimage test.} The verifier asks the prover for a valid preimage of $y$. Denoting the prover's response as $x$, the verifier accepts iff $f(x)=y$.
        \item[(b)] \textbf{Equation test.} The verifier asks the prover for a non-zero string $d$, such that $d \cdot (x_0 \oplus x_1) = 0$. The verifier uses the trapdoor, $t$, to recover $x_0$ and $x_1$ from $y$ and then checks whether the equation is satisfied, accepting if it is and rejecting otherwise.
    \end{enumerate}
\end{myalgo}
\end{myprotocol}

Given a family of TCFs, with the AHCB property, the Brakerski et al.~protocol from ~\cite{randomness} is described informally in \protref{prot:bcmvv_informal}. Brakerski et al.~showed that, assuming the AHCB property, no polynomial-time classical prover makes the verifier accept with probability non-negligibly larger than $3/4$ (this is referred to as the \emph{soundness} of the protocol). At the same time, they gave a simple quantum strategy which would allow the prover to succeed with probability $1$ (referred to as the \emph{completeness} of the protocol). This honest prover first creates an equal superposition over evaluations of $f$,
\[
\sum_x \ket{x}\ket{f(x)}\,.
\]
The prover then measures the second register, resulting in a value $y = f(x_0) = f(x_1),$ and collapsing the state in the first register to
\begin{equation} \label{eq:preimage_superpos}
\frac{\ket{x_0} + \ket{x_1}}{\sqrt{2}}\,.    
\end{equation}
We can see that if this state is measured in the computational basis, it results in one of the two preimages of $y$, thus providing a valid response to the preimage test. Conversely, if the state is measured in the Hadamard basis (i.e.~applying Hadamard gates to all qubits and measuring in the computational basis), the result will be a string $d$ such that $d \cdot (x_0 \oplus x_1) = 0$, yielding a valid response to the equation test. Thus, a quantum prover implementing this strategy would make the verifier accept with probability 1.

The intuition for why it is classically intractable to succeed in this protocol is that a classical prover that answers both the preimage and equation tests correctly can be \emph{rewound} so as to obtain both a valid preimage and a valid equation. However, having both would contradict the AHCB property. In contrast, a quantum prover cannot be rewound and can use the state from~\cref{eq:preimage_superpos} to answer \emph{either of the two challenges} correctly, but not both at the same time. Communication between the two rounds is crucial for the security proof, as a reduction to the AHCB property requires that the prover holds a preimage and equation \emph{for the same} $y$. This can only be guaranteed if the prover commits to a fixed choice of $y$ before receiving the second challenge. 

A natural question is whether the equation test on its own is already classically intractable. 
Unfortunately, simply removing the preimage test breaks the security proof of~\cite{randomness}: the AHCB property says that it is hard to produce an image, an equation, \emph{and a preimage} together.
Therefore, without the preimage test one cannot use a successful classical prover in the proof of quantumness to break the AHCB property, as one does not have access to a preimage\footnote{We remark here that this is actually not the only problem that arises when removing the preimage test. In fact, Urmila Mahadev observed that for both constructions we consider, there exists a classical winning strategy in the equation test which involves evaluating $f$ on an extended domain. In the original Brakerski et al. protocol~\cite{randomness}, this strategy is excluded by the preimage test. In our protocols, this strategy is excluded by adding another type of test which does not increase the number of rounds of the protocol and is explained in \cref{sec:intro_results}.\label{footn:extdom}}.
Indeed, it can be shown that no \emph{black-box} reduction can reduce the security of  this ``equation-test only'' proof of quantumness to LWE (\cite{morimae2023quantum}, see also \cref{sec:intro_impossibility}).

Our results show that a variation of the "equation-test only" proof of quantumness can be made to work, provided we use a knowledge assumption to replace the role of the preimage test. Intuitively, the knowledge assumption can be used to ``extract'' a preimage from a successful classical prover without the need for an explicit preimage test.
Since a knowledge assumption deals with the inner workings of a prover, our result can be interpreted as a \emph{white-box} reduction from a version of the equation test to the AHCB property.

\subsection{Knowledge assumptions} \label{sec:intro_knowledge}

Knowledge assumptions posit the existence of an efficient extractor that is able to produce certain intermediate values that are in some relation with the output of an algorithm. A canonical example is the so-called \emph{knowledge of exponent assumption} (KEA), introduced by Damgård in~\cite{koe}. Informally, this says that given a generator $g$ of some multiplicative group $\mathbb{G}$, as well $g^\alpha$ for some random power $\alpha$, the only way to produce a new pair of the form $(h, h^\alpha)$ is by \emph{exponentiating} the original pair. In other words, any efficient (randomized) algorithm $\mathcal{A}$ which outputs $(h, h^\alpha)$ given $(g, g^\alpha)$ must ``know'' an exponent $k$ such that $h=g^k$. 
This is formalized by saying that for every such algorithm $\cA$, there exists an efficient extractor $\mathcal{A}^*$ that, given $(g, g^\alpha)$ and the random coins of $\mathcal{A}$ for which $\cA$ outputs $(h, h^\alpha)$, will output (or \emph{extract}) the exponent $k$ such that $h = g^k$. Variations of this assumption have also been considered, such as the $t$-KEA, in~\cite{tkea}, in which the input to $\cA$ is instead of the form $(g^\mathbf{r}, g^{\alpha \mathbf{r}})$, where $\mathbf{r}$ is a $t$-dimensional vector and exponentiation is element-wise.

The rationale behind knowledge of exponent assumptions is that the function that maps $k$ to ($g^k,g^{\alpha k}$) has a sparse image (considered as a subset of $\mathbb{G} \times \mathbb{G}$). Thus, it seems implausible that $\cA$ could come up with a valid image simply by cleverly sampling its output from $\mathbb{G} \times \mathbb{G}$ without exponentiating the input.\footnote{An alternative way of coming up with an image would be if $\cA$ could determine $\alpha$, given $g$ and $g^{\alpha}$. 
However, this would entail solving the discrete logarithm problem, which is believed to be classically intractable.}

More recently, similar knowledge assumptions were proposed for lattices. One example, introduced in~\cite{k_assumption}, is known as the \emph{knowledge of lattice point assumption} (denoted as LK-$\epsilon$). This says the following: suppose there is an efficient randomized classical algorithm $\cA$ which takes as input a lattice basis $\mathbf{A}$ and outputs a point $\mathbf{c}$ that is $\epsilon$-close to the lattice $\cL(\mathbf{A})$. Then $\cA$ must ``know'' the lattice point closest to $\mathbf{c}$. The motivation for this assumption is similar to that of KEA---the set of points that are close to the lattice (for a suitable choice of $\epsilon$) is sparse in the set of all points, and the only apparent efficient strategy to sample from this sparse set is to pick a random lattice point and perturb it. This lattice knowledge assumption and variants of it have been used to construct efficient FHE and SNARK schemes~\cite{gennaro2018zksnarks,ishai2021zksnarks}.

For our results, we will use $t$-KEA and LK-$\epsilon$, which we state formally as~\cref{assump:tkea} and~\cref{assumpt:lke}.

\subsection{Main results} \label{sec:intro_results}

Our main result is that combining knowledge assumptions with standard cryptographic assumptions, like LWE or DDH, leads to efficient single-round proof of quantumness protocols.
To make our results modular, we first show how to construct a general single-round proof of quantumness from a cryptographic primitive that we call Doubly Extended Extractable Noisy Trapdoor Claw-free Functions (abbreviated \eiiintcf).
Second, we give two constructions of \eiiintcf: one from the DDH and $t$-KEA assumptions, and one from the LWE and LK-$\eps$ assumptions.

\subsubsection{Single-round proof of quantumness from \eiiintcf.}
Our starting point is the protocol from~\cite{randomness}, which we explained in \cref{sec:intro_bcmvv}.
Recall that there, one uses a TCF with the AHCB property, and argues that if a classical prover could succeed in the preimage and equation tests, by rewinding we could construct a tuple $(x, y, d)$ of a preimage, image, and equation that would contradict the AHCB property.

Now suppose that the TCF family $\cF = \{f: \cX \to \cY\}$ used in this protocol had an additional extractability property: for any classical prover that produces an image $y \in {\rm range}(f)$, there exists an extractor that produces a corresponding preimage $x$.
This is, in essence, a knowledge assumption.
With such an extractable TCF, we could simply remove the preimage test from~\protref{prot:bcmvv_informal}: then, if we had a classical prover that succeeded in this modified protocol, we could use that prover to find an image $y \in {\rm range}(f)$ and equation $d$, and use the extractor to find a corresponding preimage $x$, such that $(x, y, d)$ break the AHCB property.

Unfortunately, we do not know how to construct such an extractable TCF with AHCB from existing knowledge assumptions such as $t$-KEA or LK-$\eps$.
The key difficulty here is that the extractability and AHCB properties have to hold simultaneously.

One way to circumvent this issue would be to introduce an additional function family $\cH = \{h: \cX \to \cY\}$ that is indistinguishable from the TCF family $\cF$ (i.e.~given a description of a random choice of either kind of function, it is hard to tell which kind of function it is).
This function family $\cH$ can be constructed such that it has an extractability property, i.e.~if a classical algorithm produces a value $y$ in the image of $h$, an extractor can find a preimage $x$ (under a standard knowledge assumption such as $t$-KEA).
However, $\cH$ itself is not a TCF and has no AHCB property.

We now want to combine the AHCB property of $\cF$ with the extractability property of $\cH$.
For this we leverage that the two function families are computationally hard to distinguish\footnote{One may ask here why we do not use computational indistinguishability to directly transfer the extractability property of $\cH$ to $\cF$, given that the extractor is an efficient algorithm. This has to do with the exact form of the extractability property: successful extraction is only guaranteed under the condition $y \in {\rm range}(h)$, which cannot be checked efficiently without the trapdoor. Thus, while we can exploit the computational indistinguishability of $\cF$ and $\cH$ in our protocol, it is not possible to use computational indistinguishability to infer an analogous extractability property for $\cF$.}:
if we send a description of either $f \in \cF$ or $h \in \cH$ to a classical prover, the prover cannot tell which kind of function it received.
This suggests the following protocol.

\begin{myenumi}
\item The verifier generates a description of a TCF $f \in \cF$ (with a trapdoor $t$) or an extractable function $h \in \cH$ and sends this function description to the prover.
\item The verifier receives $(y,d)$ from the prover.
\begin{itemize}
\item If the verifier sent a TCF $f$, it uses the trapdoor to recover $x_0$ and $x_1$ such that $y = f(x_0) = f(x_1)$ and accepts iff $d \cdot (x_0 \oplus x_1) = 0$.
\item If the verifier sent an extractable function $h$, it accepts iff $y \in {\rm range}(h)$.
\end{itemize}
\end{myenumi}

Suppose that a classical prover succeeded in this protocol with high probability.
We can use this prover (and the corresponding extractor for the extractable function family) to break the AHCB property of $f$ as follows:
given a description of $f \in \cF$, run the prover to generate $(y,d)$.
Then use the extractor on this prover on input $f \in \cF$ to generate a preimage $x$.
We claim that $(x, y, d)$ violates the AHCB property.

Note that this reduction runs the extractor for the function family $\cH$ on an input $f \in \cF$, i.e.~it runs the extractor on an input for which it was not designed.
However, recall that descriptions of $f$ and $h$ are computationally indistinguishable.
On the other hand, the extractor as well as the function that checks whether the extractor produced a correct preimage are efficient.
Therefore, since the protocol ensures that on input $h$, the classical prover produces $y \in {\rm range}(h)$, it follows that when we run the extractor for this classical prover on an input $f \in \cF$, it will still produce a valid preimage of $y = f(x_0) = f(x_1)$; otherwise, we could distinguish $f$ from $h$.

When trying to construct such a pair $(\cF, \cH)$ of function families from DDH and $t$-KEA, we are faced with one additional technical challenge: 
the only evident construction of an extractable function family $\cH$ works by extending the functions $f \in \cF$ to a larger domain $\cX'$ (on which $\cF$ no longer satisfies the AHCB property) and constructing extractable functions $h: \cX' \to \cY$, i.e. the extractability property of $h$ only holds with respect to the extended domain $\cX'$. This means that the extractor may, on input $y \in f(\cX)$, produce a $x \in \cX' \setminus \cX$ that also maps to $y$.
While this is a valid preimage to $y$, it is not useful for breaking the AHCB property: for that we need a preimage $x \in \cX$.

We therefore introduce a third function family, $\cG = \{g: \cX' \to \cY\}$, that is computationally indistinguishable from both $\cH$ and the extension of $\cF$.
The functions in $\cG$ are injective even on the larger domain $\cX'$.
In our proof of quantumness, the verifier will check that when sent a function $g$, the prover returns an image $y \in g(\cX)$, i.e.~the image $y$ must be in the range of the restricted domain $\cX$.
This essentially forces the prover to evaluate any function it receives only on the restricted domain\footnote{This also serves to exclude the ``extended-domain'' attack mentioned in \cref{footn:extdom}.} $\cX$, since if it evaluated the \emph{injective} function $g$ on an input $x \in \cX' \setminus \cX$, the verifier would reject the resulting image $y$.

In summary, our single-round proof of quantumness relies on a triplet of function families $(\cF, \cG, \cH)$ with the following properties:\footnote{As usual, this primitive depends on a security parameter $\lambda$, which we suppress for readability.}
\begin{definition}[\eiiintcf, informal]
A tuple of function families $(\cF, \cG, \cH)$ is called a \emph{Doubly Extended Extractable Noisy Trapdoor Claw-free Functions} (abbreviated \eiiintcf) if
\begin{myenumi}
    \item $\cF$ is a TCF family with an AHCB property.
    \item $\cG$ is an injective trapdoor one-way function family.
    \item $\cH$ is an extractable one-way function family, in the sense that for any algorithm that takes the description of $h \in \cH$ as input and outputs $y \in {\rm range}(h)$, there exists an extractor which outputs a preimage $x$ such that $h(x) = y$.
    \item  The functions are computationally indistinguishable from each other. In other words, given a description of a function from one of the three families, no polynomial-time classical algorithm can determine the function's type with probability non-negligibly greater than $1/3$.
\end{myenumi}
\end{definition}
For the formal statement, see~\cref{def:e^2ntcf}.

The notion of an \eiiintcf{} function family is an extension of Injective Invariant Noisy TCFs, which were introduced in~\cite{mahadev2018classical} to derive a protocol for verifying general quantum computations, and which only use the first two types of families $\cF$ and $\cG$.

The preceding discussion naturally leads to the single-round~\protref{prot:informal 3test} (see~\protref{poq:3test} for the formal protocol), based on an \eiiintcf. Our main result is that this protocol is a proof of quantumness, i.e.~that an efficient quantum prover can succeed with high probability, but no efficient classical prover can.

\begin{myprotocol} 
    \caption{Single-Round Proof of Quantumness based on \eiiintcf{} (informal)}
    \label{prot:informal 3test}
    \begin{myalgo}
        \State The verifier samples a challenge type $a \leftarrow_U \{\mathtt{Eq}, \mathtt{sIm},\mathtt{wIm}\}$. 
        \MyIf{$a = \mathtt{Eq}$}
            \MyState Verifier samples $f \in \cF$ with trapdoor $t_f$ and sends a description of $f$ to the prover.
            \State Verifier receives tuple $(y, d)$ and accepts iff $d \cdot (x_0 \oplus x_1) = 0$ (and $d \neq 0$). Here, $(x_0, x_1)$ 
            \StateNoNumber are the two  preimages of $y$, which the verifier can efficiently compute with $t_f$.
        \MyElsIf{$a = \mathtt{sIm}$}
            \MyState Verifier samples $g \in \cG$ with trapdoor $t_g$ and sends a description of $g$ to the prover.
            \MyState Verifier receives tuple $(y, d)$ and accepts iff $y \in g(\cX)$. This check is efficient using $t_g$.
        \MyElsIf {$a = \mathtt{wIm}$}
            \MyState Verifier samples $h \in \cH$ with trapdoor $t_h$ and sends a description of $h$ to the prover.
            \State Verifier receives tuple $(y, d)$ and accepts iff $y \in h(\cX')$. This check is efficient using $t_h$.
        \MyEndIf
    \end{myalgo}
\end{myprotocol}

\begin{theorem}[informal] \label{thm:pos_informal}
    Suppose that $(\cF, \cG, \cH)$ is an \eiiintcf.
    Then \protref{poq:3test} is a proof of quantumness, i.e.,
    \begin{myenumi}
    \item \textbf{Completeness.} There exists an efficient quantum prover that succeeds with probability $1 - \negl(\lambda)$.
    \item \textbf{Soundness.} Every efficient classical prover succeeds with probability at most $5/6+\negl(\lambda)$.
    \end{myenumi}
\end{theorem}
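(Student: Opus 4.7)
For completeness, the plan is to use the challenge-oblivious quantum strategy of~\cite{randomness}. Given a function description $\phi \in \{f, g, h\}$, the prover prepares the uniform superposition $\sum_{x \in \cX}\ket{x}\ket{\phi(x)}$, measures the second register to obtain $y$, measures the first register in the Hadamard basis to obtain $d$, and returns $(y, d)$. On the $\mathtt{Eq}$ challenge, the second-register measurement collapses the first register to $(\ket{x_0} + \ket{x_1})/\sqrt{2}$ with $f(x_0)=f(x_1)=y$, and the Hadamard measurement yields $d$ satisfying $d \cdot (x_0 \oplus x_1) = 0$, with $d \neq 0$ except with exponentially small probability. On $\mathtt{sIm}$ and $\mathtt{wIm}$, $y$ lies in $g(\cX)$ and $h(\cX) \subseteq h(\cX')$ respectively, by construction. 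Hence the quantum prover succeeds with probability $1 - \negl(\lambda)$ in all three cases.

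For soundness, let $p_\mathtt{Eq}, p_\mathtt{sIm}, p_\mathtt{wIm}$ denote the conditional success probabilities of a classical PPT prover $\cA$ and suppose towards contradiction that $(p_\mathtt{Eq} + p_\mathtt{sIm} + p_\mathtt{wIm})/3 > 5/6 + 1/\text{poly}(\lambda)$, so that $p_\mathtt{Eq} + p_\mathtt{sIm} + p_\mathtt{wIm} > 5/2 + 3/\text{poly}(\lambda)$. The plan is to derive from $\cA$ an AHCB-breaker $\mathcal{R}$ against $\cF$, using the $\cH$-extractor $\cA^*$ guaranteed by extractability of $\cH$ applied to the restriction of $\cA$ to $\cH$-inputs. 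Concretely, $\mathcal{R}(f, t_f)$ runs $\cA(f)$ on coins $r$ to get $(y,d)$, runs $\cA^*(f, r)$ to get $x$, and outputs $(x, y, d)$. This breaks AHCB iff $x \in \cX$, $f(x) = y$, $d \neq 0$, and $d \cdot (x_0 \oplus x_1) = 0$.

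The core technical step is to lower bound $\Pr_f[x \in \cX \wedge f(x) = y]$ via a hybrid argument routed through $\cH$ and $\cG$. Extractability on $h$ gives $\Pr_h[h(x) = y] \geq \Pr_h[y \in h(\cX')] - \negl(\lambda) = p_\mathtt{wIm} - \negl(\lambda)$. Because the predicate $\phi(x) = y$ is efficiently verifiable from $\phi$ alone (no trapdoor needed), indistinguishability of $\cH$ and $\cG$ transfers this to $\Pr_g[g(x) = y] \geq p_\mathtt{wIm} - \negl(\lambda)$. Combining within the same distribution with $\Pr_g[y \in g(\cX)] = p_\mathtt{sIm}$ via inclusion-exclusion yields $\Pr_g[g(x) = y \wedge y \in g(\cX)] \geq p_\mathtt{wIm} + p_\mathtt{sIm} - 1 - \negl(\lambda)$. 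Since $g$ is injective on $\cX'$, this event coincides with $x \in \cX \wedge g(x) = y$, which is again trapdoor-free and so transfers via indistinguishability of $\cG$ and $\cF$ to $\Pr_f[x \in \cX \wedge f(x) = y] \geq p_\mathtt{wIm} + p_\mathtt{sIm} - 1 - \negl(\lambda)$. A final inclusion-exclusion with the $\mathtt{Eq}$-success event on the $f$-distribution gives $\Pr[\mathcal{R}\text{ succeeds}] \geq p_\mathtt{Eq} + p_\mathtt{sIm} + p_\mathtt{wIm} - 2 - \negl(\lambda) > 1/2$, contradicting AHCB and proving the bound $5/6 + \negl(\lambda)$.

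The hard part is orchestrating this hybrid so that every event transferred across $\cF$, $\cG$, $\cH$ is efficiently checkable from the function description alone; range membership $y \in \phi(\cX)$ is not trapdoor-free, so it can only be combined with the checkable event $\phi(x) = y$ via inclusion-exclusion \emph{within} a single hybrid, never across hybrids. Injectivity of $g$ on $\cX'$ is then the crucial bridge between extractability (which by itself only guarantees a preimage in the extended domain $\cX'$, useless against AHCB) and the AHCB property of $\cF$ (which requires a preimage in $\cX$).
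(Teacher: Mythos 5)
Your proposal is correct and follows essentially the same approach as the paper's proof: it uses the honest prover's challenge-oblivious superposition-measure strategy for completeness, and for soundness routes the extractor's preimage through the same $\cH \to \cG \to \cF$ hybrid chain, using injectivity of $\cG$ on the extended domain to force the extracted preimage into $\cX$, with inclusion-exclusion within each hybrid (never across) to avoid transferring trapdoor-dependent events. The paper implements the ``$x \in \cX$'' filter explicitly as a modified extractor $\cE^*$ and intersects with the good-key events $\cI_\cG$, $\cI_\cF$, but these are only bookkeeping details of the same argument.
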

In the theorem, $\lambda$ is the security parameter.
As usual, the families $(\cF, \cG, \cH)$ implicitly depend on the security parameter and ``efficient'' classical or quantum provers are those whose runtime scales polynomially in $\lambda$. For the formal statement, see~\cref{thm:poq2}.

We have already sketched the security proof when we explained why we need to introduce the three different function families $\cF, \cG$, and $\cH$.
We briefly summarize the role that each of these function families and the associated challenge types $\mathtt{Eq}, \mathtt{sIm}$, and $\mathtt{wIm}$ play.\
\begin{itemize}
\item[\textbullet] The \emph{weak image test} (challenge type $\mathtt{wIm}$) uses the extractable function family $\cH$. This test ensures that for any classical prover in the protocol, there exists an extractor that extracts a preimage $x \in \cX'$ to the output $y = h(x)$ produced by the prover.
Note that as discussed above, this preimage might in principle be from the extended domain $\cX'$, not just $\cX$.
\item[\textbullet] The \emph{strong image test} (challenge type $\mathtt{sIm}$) uses the injective function family $\cG$.
This test, combined with the indistinguishability of the function families $\cF, \cG$, and $\cH$, ensures that the prover evaluates any function $f$, $g$, or $h$ only on inputs in the restricted domain $\cX$.
Furthermore, from this we can prove that for any $y \in \cY$ that has a preimage in $\cX$ under a given \emph{TCF function} $f$, the extractor will output such a preimage (rather than a different preimage in $\cX' \setminus \cX$).
\item[\textbullet] The \emph{equation test} (challenge type $\mathtt{Eq}$) uses the TCF function family $\cF$ that has the AHCB property.
The verifier checks that the prover's output $(y, d)$ satisfies the equation $d \cdot (x_0 \oplus x_1) = 0$, with $(x_0, x_1)$ the preimages of $y$.
From the strong image test, we also know that the extractor for a successful classical prover will produce a preimage $x \in \cX$ to $y.$
Together, $(x, y, d)$ would break the AHCB property of $f$, so no classical prover can win with very high probability.
\end{itemize}

A key objective in the design of our single-round proofs of quantumness was that the required circuit sizes should not be larger than for multi-round proofs of quantumness.
\protref{poq:3test} achieves this: to pass in the protocol, a quantum prover can simply prepare the state $\sum \ket{x} \ket{p(x)}$ for a given function $p \in \cF \cup \cG \cup \cH$, measure the first register in the Hadamard basis to get a string $d$, and measure the second register in the computational basis to get an image $y$.
These are exactly the same actions as those of an honest prover in the equation test in \protref{prot:bcmvv_informal},\footnote{In our constructions of \eiiintcf, the functions $g \in \cG$ and $h \in \cH$ are essentially as costly to evaluate as $f \in \cF$ in \protref{prot:bcmvv_informal}, so introducing these additional function families does not increase the demands on an honest prover.} except that now the honest prover can perform the entire measurement in one step without experimentally challenging mid-circuit measurements.

\subsubsection{Instantiating \eiiintcf{} families from DDH and  LWE.}
We show that \eiiintcf{} can be instantiated either from DDH and $t$-KEA, or from LWE and LK-$\epsilon$. 
Here, we only state the main results and defer a more detailed discussion of the construction to \cref{section:ddhntcf,section:lwentcf}, respectively.

For the LWE-based construction, we already know that the LWE-based TCF from~\cite{randomness} has an AHCB property. We can combine this with suitable injective and extractable one-way functions.
In fact, it turns out that for the LWE-based construction, the roles of the injective and extractable functions in \protref{poq:3test} can be played by the same function family, i.e.~we only require two distinct families $(\cF, \cG)$.
This simplifies the analysis somewhat, as we explain in \cref{section:2test}.
Combined with \cref{thm:pos_informal}, we obtain the following:
\begin{theorem}[Informal]
    Assuming the classical intractability of LWE and the lattice knowledge assumption LK-$\epsilon$, there exists a single-round proof of quantumness with completeness $1-\negl(\lambda)$ and soundness $3/4 + \negl(\lambda)$.
    The circuit sizes for an honest prover in this protocol are identical to the circuit sizes in the 2-round protocol from~\cite{randomness}.
\end{theorem}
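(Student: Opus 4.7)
The plan is to reduce the statement to the two-family specialization of the general \eiiintcf{} framework. Concretely, I would construct a pair $(\cF,\cG)$ of function families from LWE and LK-$\epsilon$, verify that it satisfies the two-family version of~\cref{def:e^2ntcf} alluded to in~\cref{section:2test}, and then invoke the associated two-test soundness theorem. Because the injective and extractable roles are merged, the protocol has only two challenge types---equation and image---so the best classical cheating probability is governed by the usual $(1+1/2)/2=3/4$ bound rather than the $5/6$ bound of the three-test analysis in~\cref{thm:pos_informal}; completeness $1-\negl(\lambda)$ is inherited from the honest quantum strategy sketched after~\protref{poq:3test}.

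For $\cF$ I would take the LWE-based TCF of~\cite{randomness} off the shelf, since it is already known to satisfy AHCB with negligible advantage. For $\cG$ I would use an injective LWE trapdoor family of the same syntactic shape as $\cF$---evaluating to a noisy lattice point $\mathbf{A}\mathbf{x}+b\mathbf{u}+\mathbf{e}$---but with the ``second-column'' coset vector $\mathbf{u}$ sampled so that $g$ is injective on the relevant restricted domain $\cX$. Computational indistinguishability of $\cF$ and $\cG$ then reduces to a standard decisional-LWE statement about $\mathbf{u}$, and the trapdoor for $\cG$ allows the verifier to efficiently decide $y\in g(\cX)$ for the image test. The key observation is that any $y\in\mathrm{range}(g)$ is by construction $\epsilon$-close to the lattice $\cL(\mathbf{A})$ (shifted by $0$ or $\mathbf{u}$), so LK-$\epsilon$ applies verbatim: for any classical prover producing such a $y$ there is an extractor that returns the nearest lattice point, from which $(b,\mathbf{x})$ is read off via the trapdoor-independent lattice structure. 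This is precisely the extractability clause of the two-family \eiiintcf.

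Given $(\cF,\cG)$, soundness follows the template spelled out in~\cref{sec:intro_results}: take a classical prover winning with probability $3/4+\nu$; indistinguishability of $\cF$ and $\cG$ forces it to also pass the image test when sent an $f\in\cF$; running the LK-$\epsilon$ extractor on that same execution yields a preimage $\mathbf{x}\in\cX$ of $y$; together with the equation-test output $d$, the tuple $(\mathbf{x},y,d)$ contradicts AHCB of $\cF$ with advantage $\mathrm{poly}(\nu)$. Completeness and circuit-size equivalence with~\cite{randomness} are immediate: the honest prover prepares $\sum_{b,\mathbf{x}}\ket{b,\mathbf{x}}\ket{p(b,\mathbf{x})}$ for whichever $p\in\cF\cup\cG$ it receives, then measures the input register in the Hadamard basis and the output register in the computational basis in a single shot, producing $(y,d)$ without any mid-circuit measurement; since $g\in\cG$ has the same arithmetic cost as $f\in\cF$, no circuit overhead is introduced.

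The main obstacle I expect is the joint design of $\cG$: the noise distribution must be small enough that every image is $\epsilon$-close to $\cL(\mathbf{A})$ so that LK-$\epsilon$'s hypothesis is satisfied on \emph{all} prover outputs, yet large enough to preserve LWE-based indistinguishability from $\cF$ and injectivity of $g$ on $\cX$; balancing these parameters against the LK-$\epsilon$ slack is the delicate part. A second subtlety I would spell out carefully is the AHCB reduction's use of the $\cG$-extractor on inputs of type $\cF$: since checking ``$y\in\mathrm{range}(g)$'' normally requires the $\cG$-trapdoor, the argument must rely on efficient verifiability of the extracted preimage (``does $f$ map it to $y$?'') together with indistinguishability, as foreshadowed in the footnote of~\cref{sec:intro_results}, rather than on a direct transfer of the extractability property.
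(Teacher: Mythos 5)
Your proposal matches the paper's proof very closely: the paper introduces exactly the two-family \eiintcf{} primitive you describe (\cref{def:e^2ntcf}), proves the two-test protocol sound at $3/4$ (\cref{thm:poq}), and instantiates $(\cF_{\rm LWE},\cG_{\rm LWE})$ from the BCMVV/Mahadev constructions with a tightened noise parameter $B_P$ that plays precisely the balancing role you flag as the delicate part, so that every image lands within $\frac{1}{4}\lambda_\infty(\cL(\mathbf{A}))$ of the lattice and LK-$\frac{1}{4}$ applies. Your handling of the extractor-on-$\cF$-keys subtlety — relying on efficiently checkable preimage validity plus key indistinguishability rather than a direct transfer of extractability — is also exactly the paper's argument.
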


While a family of TCFs based on DDH was considered in~\cite{poqbell} to construct a 3-round proof of quantumness, it had not been shown that these functions have an AHCB property. We prove this through a lossy sampling technique similar to the proof in~\cite{randomness}, showing a reduction from the AHCB property to the \emph{matrix $d$-linear assumption}~\cite{matrixdlin} (which is implied by standard DDH).
We then again augment this TCF family with an injective and an extractable family to get the following result.
\begin{theorem}[Informal]
    Assuming the classical intractability of DDH and the knowledge of exponent assumption $t$-KEA, there exists a single-round proof of quantumness with completeness\footnote{The reason completeness here is $0.99$ instead of $1$ is inherited from~\cite{poqbell}. There, the quantum prover does not create a superposition over the exact range of the TCF but over a superset. However, it can be shown that the exact range contains at least a $0.99$ fraction of the elements in the superset.} $0.99 - \negl(\lambda)$ and soundness $5/6 + \negl(\lambda)$.  The circuit sizes for an honest prover in this protocol are identical to the circuit sizes in the 3-round protocol from~\cite{poqbell}.
\end{theorem}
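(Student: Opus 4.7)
The plan is to invoke the general Theorem~\ref{thm:pos_informal} by constructing a DDH-based \eiiintcf triplet $(\cF,\cG,\cH)$ and verifying its four properties. For $\cF$, I would take the DDH-based TCF of~\cite{poqbell}, whose claw-freeness, trapdoor structure and honest quantum algorithm carry over directly; the completeness loss of $0.99$ is inherited from there because the superposition the honest prover prepares lives over a superset of $\mathrm{range}(f)$ and only a $0.99$-fraction of this superset lies in the exact range. The honest quantum strategy in each branch of the protocol is identical to the one described after Protocol~\ref{prot:bcmvv_informal}: prepare $\sum_x \ket{x}\ket{p(x)}$ for whichever function $p\in\cF\cup\cG\cup\cH$ was received, then measure the first register in the Hadamard basis and the second register in the computational basis in parallel, with no mid-circuit interaction.

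The main technical step is proving the AHCB property of $\cF$, which has not been established in prior work. Following the lossy-sampling strategy of~\cite{randomness}, I would replace the honestly generated TCF public key by a computationally indistinguishable ``lossy'' variant in which every image $y$ is information-theoretically compatible with a uniformly random value of the equation bit $d\cdot(x_0\oplus x_1)$, so that no (even unbounded) adversary outputting a tuple $(x,y,d)$ with $d\neq 0$ can satisfy $d\cdot(x_0\oplus x_1)=0$ with probability above $1/2+\negl(\lambda)$. Indistinguishability of the honest and lossy keys then reduces to the matrix $d$-linear assumption, which is itself implied by DDH. I expect this to be the main obstacle, because the lossy mode must simultaneously (i) be statistically close to the honest key on the induced image distribution, (ii) hide the equation bit information-theoretically, and (iii) be packaged as a clean matrix $d$-linear instance — balancing these three requirements is the crux of the AHCB argument.

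The auxiliary families $\cG$ and $\cH$ are built over an extended domain $\cX'\supseteq\cX$ using the same group-theoretic template, so that all three public keys share the algebraic shape of $\cF$ and differ only in how their underlying exponents are sampled. For $\cH$, arranging the exponents as a $t$-KEA challenge of the form $(g^{\mathbf r},g^{\alpha\mathbf r})$ embedded into the \eiiintcf key shape directly yields extractability: any efficient classical algorithm producing $y\in\mathrm{range}(h)$ admits, by $t$-KEA, an extractor that outputs a preimage in $\cX'$. For $\cG$, I would use a statistically injective variant of the same shape whose trapdoor supports the efficient membership test $y\in g(\cX)$ needed by the strong image test. Pairwise computational indistinguishability of $\cF,\cG,\cH$ then follows from a short hybrid argument, each step of which reduces to DDH or matrix $d$-linear. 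Combined with the AHCB proof, this yields a valid \eiiintcf, and Theorem~\ref{thm:pos_informal} immediately delivers the claimed single-round proof of quantumness with soundness $5/6+\negl(\lambda)$; the circuit-size claim is immediate, since on every branch the honest prover runs the same state preparation and final-basis measurements as in the equation round of~\cite{poqbell}, without any mid-circuit measurements.
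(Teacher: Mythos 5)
Your proposal follows essentially the same route as the paper: instantiate the general \eiiintcf{} framework with the DDH-based TCF from~\cite{poqbell}, prove its AHCB property by a lossy-key hybrid reducing to the matrix $d$-linear assumption (hence DDH), build the injective family $\cG$ from a uniformly random full-rank key matrix and the extractable family $\cH$ from a rank-one key matrix that directly embeds a $t$-KEA instance $(g^{\mathbf r},g^{\alpha\mathbf r})$, and derive pairwise key indistinguishability from DDH/matrix $d$-linear. One small imprecision in your sketch of the AHCB argument: the lossy key is only \emph{computationally} indistinguishable from the honest one (they have different ranks, so are statistically far apart, as is the induced image distribution); the paper's Lemma on this point relies on the matrix $d$-linear assumption rather than any statistical-closeness requirement on keys or images.
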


\subsection{Impossibility of black-box reductions}
\label{sec:intro_impossibility}

Knowledge assumptions are non-falsifiable cryptographic assumptions~\cite{naor2003cryptographic,gentry2011separating}, which makes their use somewhat undesirable.
A natural question is whether our main results, single-round proofs of quantumness, can also be achieved using only standard falsifiable (also called game-based) assumptions such as DDH or LWE, or even weaker assumptions like the existence of one-way functions.

As was observed in \cite{morimae2023quantum}, the security of a single-round proof of quantumness cannot be reduced to a \emph{quantumly hard} problem like LWE in a black-box manner.
The reason for this is simple: if there existed a black-box reduction from a successful classical prover to an algorithm for breaking LWE, we could also apply that reduction to the honest quantum prover in the proof of quantumness.
Since the honest quantum prover is required to succeed with high probability, this would give a quantum algorithm for LWE. This rules out the existence of such a black-box reduction.\footnote{Note that this argument does not apply to interactive protocols, ROM-based protocols, or protocols that use whitebox assumptions like ours: in those cases, one cannot simply run the reduction on the quantum prover as the reduction may perform operations that only work for classical provers, e.g.~rewinding or using knowledge extractors.}

An extension of this reasoning suggests that the use of knowledge assumptions is still justified even for single-round proofs of quantumness based on computational problems that are quantumly easy: now it is of course not a contradiction to have a quantum algorithm that breaks the underlying cryptographic assumption, but we can argue that implementing an honest quantum prover in such a protocol is no easier than breaking the assumption outright: Suppose we had a single-round proof of quantumness with a black-box reduction to factoring.
Then we could again use that reduction with the honest quantum prover to construct a factoring algorithm whose only quantum component is repeatedly (and independently) running the honest prover.
In this sense, implementing the honest prover is as hard as breaking factoring: if we had an honest prover that was much easier to implement than Shor's algorithm, we could use that prover and the black-box reduction to construct a much more practical quantum factoring algorithm. In contrast, in the DDH-based proof of quantumness from~\cite{poqbell} and also in our single-round version thereof, implementing the honest quantum prover does not require computing discrete logarithms. 

\subsection{Related work} \label{sec:intro_relatedwork}

As mentioned in~\cref{sec:intro_bcmvv}, cryptographic proofs of quantumness were introduced in~\cite{randomness}. Since then, there have been a number of follow-up works aiming to understand the types of cryptographic constructions on which these protocols can be based, as well as how to make them more efficient~\cite{kahanamoku2024fast,simplerpoq,liu2022depth,hirahara2021test,alnawakhtha2022lattice,kalai2023quantum,morimae2023quantum}. So far, only small-scale demonstrations of these protocols have been performed~\cite{zhu2023interactive,lewis2024experimental}, though the hope is that by further reducing the sizes of the quantum circuits and the amount of interaction between the verifier and the prover, these protocols can be performed with NISQ devices.

The only existing constructions of single-round proofs of quantumness (apart from trivial ones like asking the prover to factor a large number) use the random oracle model~\cite{simplerpoq,yamakawa2022verifiable}. In both of these prior works, the quantum circuits required to succeed in these protocols do not have to perform mid-circuit measurements. However, the circuits are prohibitively larger than the ones in multi-round proofs of quantumness. This is partly due to the fact that one would have to instantiate the random oracle with plausible cryptographic hash functions, like SHA-2 or SHA-3. Running these functions as quantum circuits in superposition seems to be out of reach for near-term quantum computers~\cite{jang2024sha}.

Since our construction of \eiiintcf{}s makes use of extractable one-way functions (EOWFs), it is natural to ask whether we could have used existing results concerning these functions~\cite{canetti2009towards,bitansky2014existence,bitansky2020weakly}. In particular, in~\cite{bitansky2014existence}, Bitansky et al. prove the existence of a class of EOWFs from the subexponential hardness of LWE against adversaries with bounded auxilary input, without having to rely on a knowledge assumption. There are at least two obstacles towards applying those results to our setting. First, the extractable one-way functions we require have to be computationally indistinguishable from TCFs. It is unclear whether the EOWFs of Bitansky et al. can be suitably adapted to satisfy this requirement. Second, since one of the motivations for our work is to devise more efficient single-round proofs of quantumness, we would like to ensure that the honest quantum prover's circuits are at most as large as in the multi-round protocols. However, constructing a family of \eiiintcf{}s with the EOWFs in~\cite{bitansky2014existence} would likely require larger circuits.

Knowledge assumptions in the quantum setting have also been considered in~\cite{liu2023another} to derive \emph{quantum money} and \emph{quantum lightning} schemes. There, the assumptions are required to be sound against quantum adversaries. This introduces some challenges in formalizing the appropriate notion of a quantum knowledge assumption. The subsequent work of~\cite{zhandry2023quantum} also discusses this point. Interestingly, this latter work points out how certain knowledge assumptions can be generically broken by a quantum adversary. In some sense, this is exactly what we exploit to arrive at our single-round proofs of quantumness.

\subsection{Discussion and open problems} \label{sec:intro_discussion}

We have shown how existing knowledge assumptions, together with standard cryptographic assumptions, lead to efficient single-round proofs of quantumness. Our work opens up several avenues for future research.

One such avenue is the possibility of a \emph{white-box proof} for a single-round proof of quantumness based only on the classical intractability of, say, LWE.
This would circumvent the impossibility discussed in \cref{sec:intro_impossibility}.
As mentioned in the previous section, one possible approach would be to use the extractable one-way functions based on subexponential LWE from~\cite{bitansky2014existence}, with the caveat that this would provide soundness against classical provers with bounded auxiliary input. 
We leave this as an interesting but challenging open problem.

Currently, our construction necessitates the use of a TCF which satisfies the AHCB property. This is a fairly strong requirement which so far has only been achieved from LWE~\cite{randomness}, isogeny-based group actions~\cite{alamati2022candidate}, and, in this work, the DDH assumption. It would be desirable to use TCFs that are not known to have an AHCB property, such as those based on Ring-LWE or Rabin's function ($x^2~{\rm mod}~N)$, since those require smaller circuits than the TCFs based on LWE or DDH, as outlined in~\cite{poqbell,kahanamoku2024fast}. However, it is unclear if our construction can be modified so as to not require the AHCB property; in the equation test, the prover is allowed to output \emph{any} valid equation, hence the need for an \emph{adaptive} hardcore bit. The protocol in~\cite{poqbell} is able to circumvent this and use TCFs without an AHCB by introducing an additional round of interaction between the verifier and the prover, leading to a 3-round protocol. Unfortunately, there does not seem to be an obvious way of employing knowledge assumptions to reduce the round complexity of the protocol from~\cite{poqbell}. Thus, a fundamentally different approach would be needed to achieve a single-round protocol without relying on the AHCB property. Of course, for the purpose of having more practical protocols, proving an AHCB statement for the TCFs based on Ring-LWE or Rabin's function (even relying on knowledge assumptions) would be equally useful.

We remark that the LWE-based approach is expected to be more efficient than the DDH one. This is because, as is also discussed in~\cite{poqbell}, the DDH approach requires performing a large number of group exponentiations coherently. It would therefore seem that the DDH approach is less efficient than performing Shor's algorithm to solve the discrete logarithm problem. However, as~\cite{poqbell} points out, the proof of quantumness protocol can benefit from certain optimizations that are not known to be possible for Shor's algorithm. As an example, the group exponentiation circuits for the DDH-based proof of quantumness need not be reversible, thereby halving the depth and number of gates compared to Shor's algorithm (see section III.D in~\cite{poqbell} for more details). 

From existing knowledge assumptions one is only able to prove soundness against classical adversaries, i.e.~one is able to bound the success probability of a classical prover, but one cannot make a statement about the internal operations of a successful quantum prover. This is sufficient for a proof of quantumness, but other protocols, such as certifiable randomness generation~\cite{randomness}, remote state preparation~\cite{gheorghiu2019computationally}, or verification of quantum computations~\cite{mahadev2018classical}, require soundness against quantum adversaries, i.e.~one needs to have at least a partial characterization of any quantum adversary in the protocol.
As mentioned, one can derive single-round proofs of quantumness in the random oracle model, and these can be extended to single-round versions of quantum-sound protocols by proving security in the quantum random oracle model~\cite{boneh2011random,simplerpoq,alagic2020non,zhang2022classical}. However, with knowledge assumptions, it is unclear what the right quantum-sound analogue would be to derive these more advanced functionalities. As mentioned in the previous section, quantum knowledge assumptions have been considered in~\cite{liu2023another,zhandry2023quantum} to construct quantum money and quantum lightning schemes. These could serve as a useful starting point for constructing single-round quantum protocols for functionalities like single-device randomness expansion. We leave formalizing this for future work. 

\paragraph{Organization.} Our paper is organized as follows. We have collected commonly used notation as well as useful definitions from prior works in \cref{app:prelim}. In \cref{section:3test}, we provide the definition of an \eiiintcf{} and present the associated single-round protocol, comprised of three possible challenges. In \cref{section:ddhntcf}, we present a realization of an \eiiintcf{} based on the DDH assumption and a knowledge of exponent assumption. In \cref{section:2test} we provide the definition of an \eiintcf{}, a function family which consists of only two types of functions, rather than three, and present an associated single-round protocol with two challenges. Finally, \cref{section:lwentcf} contains a construction of an \eiintcf{} based on the LWE problem and a lattice knowledge assumption. This also implies the existence of an \eiiintcf{} based on LWE and the lattice knowledge assumption.

\paragraph{Acknowledgements.}
We thank Alex Lombardi, Urmila Mahadev, Greg Kahanamoku-Meyer, Umesh Vazirani, John Wright, and Tina Zhang for helpful discussions.
We are especially grateful to Vinod Vaikuntanathan to suggesting many of these ideas in the early stages of the project.
PA, VG and TM acknowledge support from the ETH Zurich Quantum Center and the Air Force Office for Scientific Research, grant No.~FA9550-19-1-0202.
TM is supported by an ETH Doc.Mobility Fellowship. AG is supported by the Knut and Alice Wallenberg Foundation through the Wallenberg Centre for Quantum Technology (WACQT).

\section{Notation and basic definitions} \label{app:prelim}

\subsection{Notation}

We denote the set of integers as $\mathbb{Z}$, and by $\mathbb{N}$ the set of natural numbers without zero. 
For any $q \in \mathbb{N}$ with $q\geq 2$ we let $\mathbb{Z}_q$ denote the ring of integers modulo $q$.
Vectors are assumed to be in column form and are written using bold lower-case letters, e.g., $\mathbf{x}$, and we denote the $i$-th component of $\mathbf{x}$ as $x_i$. 
Matrices are written as bold capital letters, e.g., \textbf{A}, and we denote the $i$-th entry in the $j$-th column by $A_{i,j}$ and the $i$-th row of \textbf{A} by $\textbf{a}_i$. 
We denote by $\rankedmats_d(\mathbb{Z}_q^{m\times n})$ the set of $m\times n$ matrices over $\mathbb{Z}_q$ of rank d, and by $\rank(\mathbf{A})$ the rank of the matrix $\mathbf{A}$.
Given a multiplicative group $\mathbb{G}$ of order $q$, an element $g \in \mathbb{G}$ and a vector $\mathbf{x} \in \mathbb{Z}_q^n$, we denote by $g^\mathbf{x}$ the vector $(g^{x_i})_i \in \mathbb{G}^n$, and for a matrix $\mathbf{A} \in \mathbb{Z}_q^{m\times n}$ we denote by $g^\mathbf{A}$ the matrix $(g^{A_{i,j}})_{i,j} \in \mathbb{G}^{m\times n}$.
We will consider lattices $\cL \subset \mathbb{Z}_q^m$, and for every $\mathbf{A} \in \mathbb{Z}_q^{m\times n}$ we define
$$
\cL(\mathbf{A}) \deq \mathbf{A}\cdot\mathbb{Z}_q^n = \{ \mathbf{Ax} \middlebar \mathbf{x} \in \mathbb{Z}_q^n \}\,.
$$
A \emph{negligible} function $f$, denoted by $f = \negl(\lambda)$, is a function $f : \mathbb{N} \to \mathbb{R}$ such that $f(\lambda) = o(\lambda^{-c})$ for every constant $c \in \mathbb{R}$.
For $\lambda \in \mathbb{N}$, we denote with $1^\lambda$ the string of $\lambda$ many $1$'s.

\paragraph{Distributions.}

The set of all distributions (which we identify with their density) over a set $X$ is denoted $\cD_X$.
The \emph{statistical distance} between two distributions $D_0$, $D_1$ over a countable domain $X$ is defined to be 
$$
    d(D_0,D_1) = \frac{1}{2}\sum_{x \in X}|D_0(x)-D_1(x)|\,.
$$
We let $H^2$ be the Hellinger distance, which is defined as follows for two densities $D_1$ and $D_2$ over the same finite domain $X$:
\begin{equation}
\label{eq:bhatt}
    H^2(D_0, D_1) = 1 - \sum_{x\in X} \sqrt{D_0(x) D_1(x)}\,.
\end{equation}
Two families of distributions $D_0 = \{D_{0,\lambda}\}_{\lambda\in\mathbb{N}}$, $D_1 = \{D_{1,\lambda}\}_{\lambda\in\mathbb{N}}$ on the same finite sets $\{X_\lambda\}_{\lambda\in\mathbb{N}}$ are called \emph{statistically indistinguishable} if $d(D_{0,\lambda},D_{1,\lambda}) = \negl(\lambda)$.
The support of a distribution $D \in \cD_X$ is defined as
$$
    \supp(D) = \{ x \in X \middlebar D(x) > 0 \} \,.
$$
\subsection{Computational Indistinguishability}

Throughout the paper, we will deal with efficient classical and quantum adversaries.
As usual, we abbreviate classical probabilistic polynomial time as PPT, and quantum polynomial time as QPT.

The notion of computational indistinguishability, which gives an equivalence relation between distributions as perceived by a PPT or QPT adversary, will be of importance in the soundess proofs of our protocols.
In our proofs, we only require computational indistinguishability with respect to PPT adversaries, so we use the following definition.
\begin{definition}[Computational Indistinguishability]
    \label{comp_ind}
    We say that two families of distributions $D_0 = \{D_{0,\lambda}\}_{\lambda\in\mathbb{N}}$, $D_1 = \{D_{1,\lambda}\}_{\lambda\in\mathbb{N}}$ on the same finite sets $\{X_\lambda\}_{\lambda\in\mathbb{N}}$ are \emph{computationally indistinguishable} if for every family of PPT algorithms (meaning, polynomial time in $\lambda$) $\mathcal{A} = \big\{A_\lambda : X_\lambda \rightarrow \{0,1\}\big\}_{\lambda\in\mathbb{N}}$ it holds that:
    \begin{equation*}
        \left| 
        \prss{x\leftarrow D_{0,\lambda}}{A_\lambda(x) = 0} -
        \prss{x\leftarrow D_{1,\lambda}}{A_\lambda(x)=0}
        \right| = \negl(\lambda)\,.
    \end{equation*}
\end{definition}

It follows immediately from the definition that statistically indistinguishable distributions are also computationally indistinguishable.

\subsection{Computational Hardness Assumptions}

The existence of the cryptographic primitives required for our single-round proofs of quantumness is based the DDH assumption (for the \eiiintcf{} in \cref{section:ddhntcf}) and the LWE assumption (for the \eiintcf{} in \cref{section:lwentcf}), which are defined as follows.

\begin{assumption}[Decisional Diffie-Hellman (DDH), see \cite{matrixdlin}]
\label{asmp:ddh} Let $\groupgen$ be a PPT algorithm that takes as input a security parameter $\lambda\in\N$ and outputs a tuple $(\mathbb{G},q,g)$, where $q$ is a $\lambda$-bit prime, $\mathbb{G}$ a cyclic group of order $q$ and $g$ a generator of $\mathbb{G}$. The \emph{decisional Diffie-Hellman (DDH) assumption} is that the ensembles
$$
        \big\{(\mathbb{G},q,g_1,g_2,g_1^r,g_2^r) \big\}_{\lambda\in\mathbb{N}} \text{ and } \big\{(\mathbb{G},q,g_1,g_2,g_1^{r_1},g_2^{r_2})\big\}_{\lambda\in\mathbb{N}}
    $$
    are computationally indistinguishable, where $(\mathbb{G},q,g) \leftarrow \groupgen(1^\lambda)$ and the elements $g_1,g_2 \in \mathbb{G}$ and $r,r_1,r_2\in \mathbb{Z_q}$ are chosen independently and uniformly at random.  
\end{assumption}

The DDH assumption implies the following \emph{matrix $d$-linear assumption} (\cref{asmp:matrixdlin}).
In other words, \cref{asmp:matrixdlin} is not an additional assumption, but rather a corollary of \cref{asmp:ddh}; we phrase it as a separate assumption only for convenience.

\begin{assumption}[Matrix $d$-Linear Assumption, from \cite{matrixdlin}]
\label{asmp:matrixdlin}
    Let $\groupgen$ be a PPT algorithm that takes as input a security parameter $\lambda\in\N$ and outputs a tuple $(\mathbb{G},q,g)$, where $q$ is a $\lambda$-bit prime, $\mathbb{G}$ a cyclic group of order $q$ and $g$ a generator of $\mathbb{G}$. 
    The \emph{matrix $d$-Linear assumption} is that for any $a,b,i,j \in \mathbb{N}$ such that $d \leq i < j \leq \min\{a,b\}$ the ensembles 
    $$
        \big\{(\mathbb{G},q,g,g^\bR) \bigmiddlebar \bR\in \rankedmats_i(\mathbb{Z}_q^{a\times b}) \big\}_{\lambda\in\mathbb{N}} \text{ and }
        \big\{(\mathbb{G},q,g,g^\bR) \bigmiddlebar \bR\in \rankedmats_j(\mathbb{Z}_q^{a\times b})\big\}_{\lambda\in\mathbb{N}}
    $$
    are computationally indistinguishable, where $(\mathbb{G},q,g) \leftarrow \groupgen(1^\lambda)$.
\end{assumption}

Indeed, the 1-linear assumption is precisely DDH, and for any $d\geq 1$, the $d$-linear assumption implies the $d+1$-linear assumption (Lemma 3 in \cite{dlin}).
As shown in Lemma A.1 from \cite{matrixdlin}, the $d$-linear assumption in turn implies the matrix $d$-linear assumption. 
Thus, the matrix $d$-linear assumption holds for any group generator $\groupgen$ for which we assume the DDH assumption to hold, a fact that we make use of in our proofs.

We will also use the decision variant of the Learning with Errors assumption~\cite{regev2009lattices}, which is defined as follows.
\begin{assumption}[LWE$_{n,q,\chi}$ Assumption, from \cite{regev2009lattices,randomness}]
\label{assump:lwe}
    Let $\lambda\in\N$ be a security parameter, let $n,m,q \in \mathbb{N}$ be integer functions of $\lambda$. 
    Let $\chi = \chi(\lambda)$ be a distribution over $\mathbb{Z}$. 
    The \emph{LWE$_{n,m,q,\chi}$ problem} is to distinguish between the distributions $(\mathbf{A}, (\mathbf{As}+\mathbf{e})\mod q)$ and $(\mathbf{A},\mathbf{u})$ where $\mathbf{A}\unifsample\mathbb{Z}_q^{n \times m},\, \mathbf{s} \unifsample\mathbb{Z}_q^n,\, \mathbf{e}\unifsample\chi^m$ and $\mathbf{u}\unifsample\mathbb{Z}_q^m$. 
    When $m$ grows at most polynomially with $n\log q$, we denote the problem as \emph{LWE}$_{n,q,\chi}$. 
    The \emph{LWE$_{n,q,\chi}$ assumption} states that no QPT procedure can solve the \emph{LWE}$_{n,q,\chi}$ problem with more than negligible advantage in $\lambda$, even when given access to a quantum-polynomial size advice state depending on the parameters $n,m,q$ and $\chi$ of the problem. 
\end{assumption}

\subsection{Knowledge Assumptions}

In addition to the above computational hardness assumptions, we use knowledge assumptions to show that the function family construction we provide fulfill the required extractability properties of the \eiiintcf{} and \eiintcf{} definition.

To show the existence of a DDH-based \eiiintcf, we additionally make the $t(\lambda)$-Knowledge-of-Exponents assumption (\tlkea{} for short) from~\cite{tkea}.\footnote{Note that in~\cite{tkea}, this assumption is phrased for a fixed value of $t$, but their proofs require $t$ to depend on $\lambda$. For clarity, we make this dependence explicit in the assumption.}
\begin{assumption}[\tlkea, based on \cite{tkea}] \label{assump:tkea}
    Let $t : \mathbb{N} \to \mathbb{N}$ be an integer function that grows at most polynomially, and let $\groupgen$ be a PPT algorithm that takes as input $\lambda \in \mathbb{N}$ and outputs a tuple $(\mathbb G, q, g)$, where $\mathbb{G}$ is a group of prime order $q \in \left[2^{\lambda-1},2^\lambda\right)$ and $g\in \mathbb{G}$ is a generator of $\mathbb{G}$.
    The \emph{$t(\lambda)$-Knowledge-of-Exponents assumption} is that for any PPT adversary $\cA$ with input in $\mathbb{G}^{t}\times\mathbb{G}^{t}\times\bits^{\rm poly(\lambda)}$ and output $(f,f') \in \mathbb{G}^2$, there exists a PPT extractor $\cA^*$ that takes the same input as $\cA$ (including the same random coins $r$) and outputs $\mathbf x \in \mathbb{Z}_q^{t(\lambda)}$ such that for any auxiliary input $z \in \bits^{\rm poly(\lambda)}$ and all random coins $r$,
    \begin{equation}
    \label{eq:tkea}
     \prss{
        (\mathbb{G}, q, g) \leftarrow \groupgen(1^\lambda) \\
        (\alpha, \mathbf{r}) \unifsample \mathbb{Z}_q \times \mathbb{Z}_q^{t(\lambda)}
     }{
        \begin{array}{cc}
            \left(f, f^{\prime}\right) \leftarrow \mathcal{A}\left(g^{\mathbf{r}}, g^{\alpha \mathbf{r}}, z,r\right) \\
            f^\alpha = f'
        \end{array}
        \land
        \begin{array}{cc}
            \mathbf{x} \leftarrow \mathcal{A}^*\left(g^{\mathbf{r}}, g^{\alpha \mathbf{r}}, z,r\right) \\
             g^{\langle\mathbf{x}, \mathbf{r}\rangle} \neq f
        \end{array}
    }
    = \negl(\lambda)\,,
    \end{equation}
    where $\langle \cdot, \cdot \rangle$ denotes the inner product on $\mathbb{Z}_q^{t(\lambda)}$.
\end{assumption}
The \tlkea{} essentially states that for any group-generating PPT algorithm $\groupgen$ for which the computation of discrete logarithms is assumed to be hard\footnote{For groups for which computing discrete logarithms can be done efficiently, there trivially exists an efficient extractor to find a suitable exponent and thus, the \tlkea{} holds trivially.}, the only way for a PPT adversary $\cA$ to generate a pair of elements $f,f' \in \mathbb{G}$ such that $f^\alpha = f'$ when given as input $g^{\mathbf r}, g^{\alpha \mathbf r}$ is to first pick a $\mathbf{x} \in \mathbb{Z}_q^{t(\lambda)}$ and to then output 
\begin{align*}
    f = (g^{r_1})^{x_1} \cdots (g^{r_t})^{x_t} = g^{\langle \mathbf x, \mathbf r \rangle} \tand
    f' = (g^{\alpha r_1})^{x_1} \cdots (g^{\alpha r_t})^{x_t} = f^\alpha \,.
\end{align*}
The extractor $\cA^*$ outputs precisely this $\mathbf x$, which formalises the idea that $\cA$ had to ``know'' $\mathbf x$ to generate its output.
Note that if the adversary $\cA$ outputs  $f = f' = \groupid$, where $\groupid$ denotes the identity element of $\mathbb G$, then the extractor still succeeds by outputting $\mathbf x = \mathbf 0$.

To show the existence of an LWE-based \eiintcf, we additionally make the LK-$\epsilon$ assumption from \cite{k_assumption}. This knowledge assumption states (informally) that any classical algorithm which can find a point ``suitably close'' to a lattice point must have known this corresponding lattice point in the first place.

\begin{assumption}[LK-$\epsilon$, from \cite{k_assumption}] \label{assumpt:lke}
Let $\epsilon$ be a fixed constant in the interval (0,1/2). Let $\algofont{Gen}_L$ denote an algorithm which on input of a security parameter $1^\lambda$ outputs a lattice $\cL$ given by a basis $\mathbf{A}$ of dimension $n = n(\lambda)$ and volume $\Delta = \Delta(\lambda)$. Denote by $\lambda_\infty(\cL)$ the $\infty$-norm of a shortest vector (w.r.t.~the $\infty$-norm) in a lattice $\cL$. Let $\mathcal{A}$ be an algorithm that takes a lattice basis $\mathbf{A}$ as input, has access to an oracle $\mathcal{O}$, and returns nothing. Let $\mathcal{A}^*$ denote an algorithm which takes as input a vector $\mathbf{y} \in \mathbb{R}^n$ and some state information, and returns another vector $\mathbf{p}\in \mathbb{R}^n$ and a new state. Consider the experiment in \cref{fig:LKA}. The \emph{LK-$\epsilon$} advantage of $\mathcal{A}$ relative to $\mathcal{A}^*$ is defined by 
\begin{equation*}
    \textup{Adv}^{\textup{LK-$\epsilon$}}_{\algofont{Gen}_\cL,\mathcal{A},\mathcal{A}^*}(\lambda) = \pr{ \textup{Exp}^{\textup{LK-$\epsilon$}}_{\algofont{Gen}_\cL,\mathcal{A},\mathcal{A}^*}(\lambda) = 1}\,.
\end{equation*}
We say that $\algofont{Gen}_\cL$ satisfies the LK-$\epsilon$ assumption, for a fixed $\epsilon$, if for every PPT algorithm $\mathcal{A}$ there exists a PPT algorithm $\mathcal{A}^*$ such that $\textup{Adv}^{\textup{LK-$\epsilon$}}_{\algofont{Gen}_\cL,\mathcal{A},\mathcal{A}^*}(\lambda) = \negl(\lambda)\,.$
\end{assumption}

\begin{algorithm}[ht]
\caption{Experiment Exp$_{\algofont{Gen}_\cL,\mathcal{A},\mathcal{A}^*}^{\text{LK-$\epsilon$}}(\lambda)$, from \cite{k_assumption}}
\label{fig:LKA}
\begin{myalgo}
    \State Sample $\mathbf{A} \leftarrow \algofont{Gen}_\cL(1^\lambda)$ and let $\cL$ be the lattice associated to the basis $\mathbf{A}$
    \State Sample $\mathrm{coins}\left[\mathcal{A}\right]$ (resp.~$\mathrm{coins}\left[\cA^*\right]$) from the random coin distribution of $\cA$ (resp.~$\cA^*$)
    \State Set $\mathrm{St} \leftarrow (\mathbf{A}, \mathrm{coins}[\mathcal{A}])$
    \State Run $\mathcal{A}^\mathcal{O}(\mathbf{A}, \mathrm{coins}[\mathcal{A}])$ until it halts, replying to the oracle queries $\mathcal{O}(\mathbf{y})$ as follows:
    \begin{itemize}
        \item $(\mathbf{p},\text{St}) \leftarrow \mathcal{A}^*(\mathbf{y}, \text{St},\mathrm{coins}[\cA^*])$
        \item If $\mathbf{y}$ is not within $\epsilon\cdot\lambda_\infty(\cL)$ of a point in $\cL$, return 0
        \item If $\mathbf{p} \notin \cL$, return 1
        \item If $\lVert \mathbf{p} - \mathbf{y} \rVert_\infty > \epsilon\cdot\lambda_\infty(\cL)$, return 1
        \item Return $\mathbf{p}$ to $\mathcal{A}$
    \end{itemize}
    \State Return 0    
\end{myalgo}
\end{algorithm}

\subsection{Noisy Trapdoor Claw-Free Function Family} 
\label{sec:ntcf}

The cryptographic primitive that the proof of quantumness protocol from \cite{randomness} relies on is a \emph{noisy trapdoor claw-free function family} (NTCF). In their work, they show how such a function family can be constructed from the LWE assumption. We show in \cref{section:ddhntcf} how an NTCF can also be realized from the DDH problem. 
Below, we provide a slightly relaxed version of the definition from~\cite{randomness} to accommodate that our DDH-based construction does not entirely fulfill the original NTCF definition, but stress that this relaxation does not impact the utility of this primitive for proofs of quantumness.

\begin{definition}[NTCF family, based on \cite{randomness}]\label{def:ntcf}
    Let $\lambda \in \mathbb{N}$ be a security parameter; all the following objects implicitly depend on $\lambda$.
    Let $\cX$ and $\cY$ be finite sets.
    Let $\mathcal{K}_{\cF}$ be a finite set of keys and $\cT_\cF$ a finite set of trapdoors.
    A family of functions 
    $$
        \mathcal{F} = \big\{f_{k,b} : \cX\rightarrow \mathcal{D}_{\cY} \big\}_{k\in \mathcal{K}_{\mathcal{F}},b\in\{0,1\}}
    $$
    is called a \emph{noisy trapdoor claw free (NTCF) family} if the following conditions hold.
    \begin{myenumi}
        \item \textbf{Efficient Function Generation.} \label{prop:gen}
        There exists a PPT algorithm $\genalg_\cF$ which generates a key $k\in \mathcal{K}_{\mathcal{F}}$ together with a trapdoor $t_k \in \mathcal{T}_{\mathcal{F}}$ on input of $1^\lambda$.
        \item \textbf{Trapdoor Injective Pair.} \label{prop:trapdoor_inj}
        There exist subsets $\cX_0, \cX_1 \subseteq \cX$ and a set $\cI_\cF\subset \cK_\cF\times\cT_\cF$ such that\footnote{The original definition in~\cite{randomness} ignored the detail that the inversion function in their construction only works for most keys, not all keys. Here, we include this technical detail by introducing a set $\cI_\cF$ of ``good'' key-trapdoor pairs on which the inversion function works, and require that most key-trapdoor pairs are good.}
        \begin{equation}
        \label{eq:ntcf bad keys}
            \prss{(k,t_k) \leftarrow\genalg_\cF(1^\lambda)}{(k,t_k)\notin \cI_\cF} = \negl(\lambda)\,,
        \end{equation}
        and for all pairs $(k,t_k) \in \cI_\cF$ the following conditions hold.
        \begin{myenumii}
            \item \label{item:trapdoor ntcf} \textit{Trapdoor}:
            there exists a poly-time deterministic algorithm $\invalg_\cF$ such that for all $b\in \{0,1\}$,  $x\in \cX$ and $y\in \supp(f_{k,b}(x))$, $\invalg_\cF(t_k,b,y) = x$.%
            \footnote{Note that this implies that for all $b\in\{0,1\}$ and $x\neq x' \in \cX$, $\supp(f_{k,b}(x))\cap \supp(f_{k,b}(x')) = \emptyset$.}
            \item \textit{Injective pair}: 
            consider the set $\cR_k$ of all tuples $(x_0,x_1)$ such that $f_{k,0}(x_0) = f_{k,1}(x_1)$. For all $x\in\cX_0$, there exists exactly one $x' \in \cX$ such that $(x,x') \in \cR_k$, and for all $x\in\cX_1$, there exists exactly one $x' \in \cX$ such that $(x',x) \in \cR_k$. Furthermore, there exists $\injconstant{\cF} \in \left[0,1\right]$ and $\lambda_\injconstant{\cF} \in \mathbb{N}$ such that for all $\lambda > \lambda_{\injconstant{\cF}}$, $|\cX_{0} \cap \cX_{1}|/|\cX| \geq \injconstant{\cF}$ and for any $(x_0,x_1) \in \cR_k$, $x_b \in \cX_{0} \cap \cX_{1}$ implies $x_{b\oplus1} \in \cX_{b\oplus1}$. Finally, membership in the set $\cR_k$ should be efficiently checkable.
        \end{myenumii}

        \item{\textbf{Efficient Range Superposition.} \label{prop:range_superpos}}
        For all keys $k\in \mathcal{K}_{\mathcal{F}}$ and $b\in \{0,1\}$ there exists a function $f'_{k,b}:\cX\to \mathcal{D}_{\cY}$ such that the following hold for all $b \in \bits$.
        \begin{myenumii}
            \item For all $(x_0,x_1)\in \mathcal{R}_k$, for all $(k,t_k) \in \cI_\cF$ and and $y\in \supp(f'_{k,b}(x_b))$, $\invalg_{\mathcal{F}}(t_k,b,y) = x_b$ and $\invalg_{\mathcal{F}}(t_k,b\oplus 1,y) = x_{b\oplus 1}$. 
            \item There exists a poly-time deterministic procedure $\chkalg_{\mathcal{F}}$ that, on input $k\in \cK_\cF$, $b$, $x\in \cX$ and $y\in \cY$, returns $1$ if  $y\in \supp(f'_{k,b}(x))$ and $0$ otherwise. 
            Note that $\chkalg_{\mathcal{F}}$ is not provided the trapdoor $t_k$. 
            \item \label{item:hellinger ntcf} It holds that
            \begin{equation}
            \label{eq:hellinger ntcf}
                \Ess{x \unifsample \cX}{ H^2\big(f_{k,b}(x), f'_{k,b}(x)\big) } = \negl(\lambda)\,,
            \end{equation}
            where $H^2$ is the Hellinger distance, see \cref{eq:bhatt}.
            Moreover, there exists a QPT procedure $\sampalg_{\mathcal{F}}$ that on input $k$ and $b\in\{0,1\}$ prepares the state
            $$
                \frac{1}{\sqrt{|\cX|}}\sum_{x\in \cX,y\in \cY}\sqrt{(f'_{k,b}(x))(y)}\ket{x}_\sX\ket{y}_\sY\,.
            $$
    
        \end{myenumii}

        \item{\textbf{Adaptive hardcore bit property (AHCB property).}} \label{item:ahcb}
        For all keys $k\in \mathcal{K}_{\mathcal{F}}$ the following conditions hold, for some integer $w$ that is a polynomially bounded function of $\lambda$. 
        \begin{myenumii}
            \item For all $b\in \{0,1\}$, $x\in \cX$, there exists a set $\dset_{k,b,x}\subseteq \{0,1\}^{w}$ such that
            $$
                \prss{d\unifsample \{0,1\}^w}{d\notin \dset_{k,b,x}} = \negl(\lambda)\,,
            $$
            and moreover there exists a poly-time deterministic algorithm that checks for membership in $\dset_{k,b,x}$ given $k,b,x$ and the trapdoor $t_k$. 
            \item There is a poly-time computable injection $\inj:\cX\to \{0,1\}^w$, such that $\inj$ can be inverted in poly-time on its range, and such that the following holds. 
            Letting
            \begin{align*} \label{eq:defsetsH}
                H_k &= 
                \big\{\big(b,\, x_b, d, d \cdot (\inj(x_0)\oplus \inj(x_1))\big) \bigmiddlebar
                b \in \{0,1\}, (x_0,x_1) \in \mathcal{R}_k, x_b \in \cX_b, d\in \dset_{k,0,x_0}\cap \dset_{k,1,x_1} \big\}\,,\text{\footnotemark} \\
                \overline{H}_k &= \big\{(b,x_b,d,c) \bigmiddlebar (b,x,d,c\oplus 1) \in H_k\big\}\,,
            \end{align*}
            it must hold that for any PPT procedure $\mathcal{A}$ that outputs a $4$-tuple $(b,x,d,c)$,
            \footnotetext{Note that although both $x_0$ and $x_1$ are referred to define the set $H_k$, only one of them, $x_b$, is explicitly specified in any $4$-tuple that lies in $H_k$.}
            \begin{equation*}
                \left|
                    \prss{(k,t_k)\leftarrow \genalg_\cF(1^{\lambda})}{\cA(k) \in H_k} - 
                    \prss{(k,t_k)\leftarrow \genalg_\cF(1^{\lambda})}{\cA(k) \in\overline{H}_k}
                \right| = \negl(\lambda)\,.
            \end{equation*}
        \end{myenumii}
    \end{myenumi}
\end{definition}

Compared to the NTCF definition of \cite{randomness}, the ``Injective Pair'' property is relaxed to only require that a fraction $\injconstant\cF$ of preimages $x \in \cX$ are part of a claw pair.
This is only relevant in the context of completeness, and $\injconstant\cF$ can be ensured to be high enough to obtain a sufficiently large completeness-soundness gap for the two constructions that we present here.
In fact, for the LWE construction from~\cite{randomness} $\injconstant\cF$ is 1, and for the DDH-based constructions $\injconstant\cF$ can be brought arbitrarily close to 1 at the cost of an increasing $\lambda_{\injconstant\cF}$.

The second relaxation we make is to only require the adaptive hardcore bit (AHCB) property to hold for probabilistic classical adversaries (whereas~\cite{randomness} required security against quantum adversaries, too). 
This is necessary for a DDH-based construction because DDH is not a quantum-hard problem. 
Since a proof of quantumness is only required to be hard for PPT adversaries, it suffices to require that the AHCB property also only holds for PPT adversaries.\footnote{We note that in \cite{randomness}, the authors also gave a protocol for generating certifiable randomness from a single device. For this, the soundness analysis also has to partially characterize QPT adversaries in the protocol, which requires that the AHCB property holds against such adversaries. For proofs of quantumness, neither~\cite{randomness} nor our work requires a quantum-sound AHCB property.}

For our single-round proof of quantumness protocol, we require the extension of $\cF$ by a \emph{trapdoor injective function family} as first defined in \cite{mahadev2018classical}. 
By itself, the trapdoor injective function family is similar to an NTCF in many regards, except that the function family branches $g_{k,0}$, $g_{k,1}$ do not have overlapping supports anymore, i.e., for each $y$ in the joint image of $\{g_{k,b}\}_{b\in \bits}$, $y$ has only one preimage $(b,x)$. 
Consequently, the inversion function does not need $b$ as input, but can recover $b$ by itself given an input $y$. 
There is also no adaptive hardcore bit property. 
We extend the original definition of a trapdoor injective function family to allow an extension to a larger domain $\cB_\cG\times\cX_\cG$, which will become useful for our \eiiintcf{}-based protocol. Note that while the check function also takes values in $\cB_\cG\times\cX_\cG$, we do not require the existence of an efficient inversion function for values $y$ with preimage in $\cB_\cG\times\cX_\cG\setminus \bits\times\cX$.

\begin{definition}[Trapdoor Injective Function Family, based on Definition 4.2 in \cite{mahadev2018classical}]
\label{def:trapdoor injective function family}
    Let $\lambda \in \mathbb{N}$ be a security parameter. 
    Let $\cB_\cG, \cX_\cG, \cY$ be finite sets such that $\bits \subseteq \cB_\cG$ and $\cX \subseteq \cX_\cG$. 
    Let $\mathcal{K}_\mathcal{G}$ be a finite set of keys, $\mathcal{T}_\mathcal{G}$ a finite set of trapdoors. 
    A family of functions
    \begin{equation*}
        \mathcal{G} = \{g_{k,b}:\cX_\cG\rightarrow \mathcal{D}_\cY\}_{b\in\cB_\cG, k\in\mathcal{K}_\mathcal{G}}
    \end{equation*}
    is called a \emph{trapdoor injective family} if the following conditions hold.
    \begin{myenumi}
        \item \textbf{Efficient Function Generation.} 
        There exists a PPT algorithm $\genalg_\cG$ that generates a key $k\in \mathcal{K}_{\mathcal{G}}$ together with a trapdoor $t_k \in \mathcal{T}_\mathcal{G}$.
        \item \textbf{Disjoint Trapdoor Injective Pair.} \label{item:disjoint trapdoor injective pair}
        There exists a set $\mathcal{I}_\mathcal{G}\subset \mathcal{K}_\mathcal{G}\times\mathcal{T}_\mathcal{G}$ such that
        \begin{align}
        \label{eq:trapdoor injective bad keys}
            \prss{(k,t_k) \leftarrow \genalg_\cG(1^\lambda)}{ (k,t_k)\notin \cI_\cG} = \negl(\lambda)\,,
        \end{align}
        and for all  $(k,t_k) \in \mathcal{I}_\mathcal{G}$, the following holds.
        \begin{myenumii}
            \item For all $b,b' \in \cB_\cG$ and $x,x' \in \cX_\cG$, if $(b,x) \neq (b',x')$, then $\supp(g_{k,b}(x))\cap \supp(g_{k,b'}(x')) = \emptyset$. 
            \item There exists a poly-time deterministic algorithm $\invalg_\cG$ with output in $\bits\times\cX$ such that for all $b \in \bits$, $x \in \cX$ and $y \in \supp(g_{k,b}(x))$, $\invalg_\cG(t_k,y) = (b,x)$.
        \end{myenumii}
        \item \textbf{Efficient Range Superposition.} \label{item:trapdoor injective function family, efficient range superposition}
        For all keys $k \in \mathcal{K}_\mathcal{G}$ and $b \in \cB_\cG$:
        \begin{myenumii}
            \item There exists an efficient deterministic procedure $\chkalg_\cG$ that, on input $k$, $y \in \cY$, $b \in \cB_\cG$, $x\in\cX_\cG$, outputs 1 if $y \in \supp(g_{k,b}(x))$ and 0 otherwise. Note that $\chkalg_\mathcal{G}$ is not provided the trapdoor $t_k$.
            \item There exists a QPT procedure $\sampalg_\cG$ that on input $k$, $b$, returns the state 
            \begin{equation*}
                \frac{1}{\sqrt{|\cX|}}\sum_{x\in \cX,y\in \cY}\sqrt{(g_{k,b}(x))(y)}\ket{x}_\sX \ket{y}_\sY\,.
            \end{equation*}
        \end{myenumii}
    \end{myenumi} 
\end{definition}

In the work of \cite{mahadev2018classical}, and also in our protocol, the map from a key $k \in \cK_{\cF} \cup \cK_{\cG}$ and an input $x$ to an image $y$ is the same for both $\cF$ and $\cG$. 
The difference lies in the use of different key distributions which enforce different properties on the resulting function pair $f_{k,b}$, such as one-to-oneness instead of two-to-oneness. 
A particularly useful instance of such a function family pair $(\cF,\cG)$ in the context of testing an adversary is when additionally, the key distributions of $\cF$, $\cG$ are computationally indistinguishable. 
Such a function family $\cF$ is then called \emph{injective invariant}.

\begin{definition}[Injective Invariance, based on Definition 4.3 in \cite{mahadev2018classical}] 
\label{def:injinv}
    A noisy trapdoor claw-free family $\mathcal{F}$ is \emph{injective invariant} if there exists a trapdoor injective family $\mathcal{G}$ such that:
    \begin{myenumi}
        \item The algorithms $\chkalg_\mathcal{F}$ and $\sampalg_\mathcal{F}$ are the same as the algorithms $\chkalg_\mathcal{G}$ and $\sampalg_\mathcal{G}$.
        \item \label{item:inj inv key indist}
        The family of marginal distributions over the keys, $\big\{ k \leftarrow \genalg_\cF(1^\lambda) \big\}_{\lambda\in\mathbb{N}}$ and $\big\{ k \leftarrow \genalg_\cG(1^\lambda) \}_{\lambda\in\mathbb{N}}$, are computationally indistinguishable (\cref{comp_ind}).
    \end{myenumi}
\end{definition}

\section{A Single-Round Proof of Quantumness based on Doubly Extended Extractable Trapdoor Claw-free Function Families}
\label{section:3test}

In this section, we present our first (and more general) single-round proof of quantumness. We begin by giving a formal definition of an \eiiintcf, which is the cryptographic primitive required for this test, in \cref{subsection:defe^3ntcf}. In \cref{subsection:3testpoq} we present the formal version of this protocol and prove its completeness and soundness. Later, in \cref{section:ddhntcf}, we will give an explicit construction of an \eiiintcf{} based on the DDH assumption and the $t(\lambda)$-KEA. 

\subsection{Doubly Extended Extractable Trapdoor Claw-free Function Family} 
\label{subsection:defe^3ntcf}
As explained in \cref{sec:intro_results}, an \eiiintcf{} consists of three functions families $(\cF, \cG, \cH)$.
The families $\cF$ and $\cG$ are the same as in the injective invariant NTCF family from~\cite{mahadev2018classical}, which we describe in detail in \cref{sec:ntcf}. 
We therefore focus on the third function family, $\cH$, which we call a \emph{weak extension} of a TCF family $\cF$ and which needs to satisfy a \emph{weak extractability property}.\footnote{The reason for why we refer to this property as ``weak'' extractability is to differentiate it from the extractability property of the \eiintcf{}, see \cref{def:extractability}, and to capture that extractability implies weak extractability, but the other way only works under additional conditions.}

\begin{definition}[Weak Extension of an NTCF]
\label{def:weakinv}
    Let $\lambda \in \mathbb{N}$ be a security parameter. Let $\cB_\cH, \cX_\cH, \cY$ be finite sets such that $\bits \subseteq \cB_\cH$ and $\cX \subseteq \cX_\cH$. Let $\mathcal{K}_\mathcal{H}$ be a finite set of keys, $\mathcal{T}_\mathcal{H}$ a finite set of trapdoors, and let $\cF$ be an NTCF (\cref{def:ntcf}). 
    A family of functions
    \begin{equation*}
        \mathcal{H} = \{h_{k,b}:\cX_\cH\rightarrow \mathcal{D}_\cY\}_{b\in\cB_\cH, k\in\mathcal{K}_\mathcal{H}}
    \end{equation*}
    is a \emph{weak extension of $\cF$} if the following holds.
    \begin{myenumi}
        \item \textbf{Efficient Function Generation.} There exists a PPT algorithm $\genalg_{\mathcal{H}}$ which generates a key $k\in \mathcal{K}_{\mathcal{H}}$ and trapdoor $t_k \in \mathcal{T}_\mathcal{H}$.
        \item \textbf{Efficient Range Superposition.} 
        \begin{myenumii}
            \item There exist an extension of the check function $\chkalg_\mathcal{F}$ associated to $\cF$ to inputs $b \in \cB_\cH$ and it holds that on input $k \in \cK_\cH$, $b \in \cB_\cH$, $x \in \cX_\cH, y \in \cY$ it outputs 1 if $y \in \supp(h_{k,b}(x))$ and 0 otherwise.\footnote{Intuitively, this statement just says that the check function ``works for $\cH$ as well''. Same for the $\sampalg_\mathcal{F}$ function.}
            \item The state preparation function $\sampalg_\mathcal{F}$ associated to $\cF$, on input $k \in \cK_\cH$ and $b \in \bits$, returns the state 
            \begin{equation*}
            \frac{1}{\sqrt{|\cX|}}\sum_{x\in \cX,y\in \cY}\sqrt{(h_{k,b}(x))(y)}\ket{x}_\sX \ket{y}_\sY\,.
            \end{equation*}
        \end{myenumii}
        \item \label{item:weak inv indkey}\textbf{Indistinguishability of keys.} 
        The families of marginal distributions over the keys, $\big\{k \leftarrow \genalg_\cH(1^\lambda)\}_{\lambda\in\mathbb{N}}$ and $\big\{ k \leftarrow \genalg_\cF(1^\lambda) \big\}_{\lambda\in\N}$, are computationally indistinguishable (\cref{comp_ind}).
    \end{myenumi}
\end{definition}

\begin{definition}[Weak Extractability Property]
\label{def:weak extractability}
    Let $\cF$ be an NTCF. 
    Let $\mathcal{H}$ be a weak extension of an NTCF. 
    We say that $\mathcal{H}$ \emph{satisfies the weak extractability property} if the following holds.
    \begin{myenumi}
        \item \textbf{Image Check.} \label{item:weak extractability image check}
        There exists a poly-time deterministic procedure $\imchkalg_\mathcal{H}$ that takes as input a trapdoor $t_k$ and an image $y \in \cY$ and outputs a bit, such that for all $k \in \cK_\cH$ and associated trapdoor $t_k$, $x \in \cX_\cH, b \in \cB_\cH, y \in \supp(h_{k,b}(x))$, it holds that $\imchkalg_\cH(t_k,y) = 1$.
        \item \textbf{Existence of Extractor.} 
        For every PPT algorithm $\cA$ that takes $k \in \mathcal{K}_\mathcal{H}$ as input and outputs a point $y \in \cY$, there exists a poly-time deterministic algorithm $\cA^*$ with inputs $k \in \mathcal{K}_\mathcal{H}$ and the random coins $r$ (sampled from a distribution $R$) of $\cA$ that outputs $b \in \cB_\cH$, $x\in \cX_\cH$ such that
        \begin{equation*}
        \prss{\kt \leftarrow \genalg_{\mathcal{H}}(1^\lambda), r \leftarrow R}{\chkalg_\cF(k,y,b,x) = 0 \land \imchkalg_\mathcal{H}(t_k,y) = 1} = \negl(\lambda)\,.
        \end{equation*} 
    \end{myenumi}
\end{definition}

In \cref{def:weak extractability}, the $\imchkalg_\cH$ procedure should be thought of as defining the set of images on which the extractor is required to work.
The first condition in the definition requires $\imchkalg_\cH$ to output $1$ whenever $y \in \supp(h_{k,b}(x))$, i.e.~when $y$ is actually in the range of $h_{k,b}$.
However, we also allow the $\imchkalg_\cH$ to output 1 on other $y$ as long as the extractor also works on such $y$.
In other words, Condition~\ref{item:weak extractability image check} in the definition specifies the minimum set on which the extractor is required to work, but the $\imchkalg_\cH$ procedure can output 1 on a larger set than just $\supp(h_{k,b})$ as long as the extractor works on this larger set, too, which is ensured by Condition 2.

The combination of an injective invariant NTCF $\cF$ with associated trapdoor injective function family $\cG)$ as in \cref{sec:ntcf} and a weak extension $\cH$ then forms an \eiiintcf.
\begin{definition}
\label{def:e^3ntcf}
A tuple of function families $(\cF, \cG, \cH)$ is called a \emph{doubly extended extractable noisy trapdoor claw-free family tuple} (\eiiintcf) if the following conditions holds:
\begin{myenumi}
\item $\cF$ is a noisy trapdoor claw free family with an adaptive hardcore bit (see \cref{sec:ntcf});
\item $\cF$ is injective invariant, with $\cG$ being a corresponding trapdoor injective family (see \cref{sec:ntcf});
\item $\cH$ is a weak extension of $\cF$ with  $\cB_\cH = \cB_\cG$ and $\cX_\cH = \cX_\cG $, and $\cH$ satisfies the weak extractability property.
\end{myenumi}
\end{definition}

\subsection{Single-Round Proof of Quantumness}
\label{subsection:3testpoq}

In the following, we present our first (and more general) protocol for a single-round proof of quantumness, which is based on the use of a \eiiintcf{} family introduced above. An explicit construction of an \eiiintcf{} based on the DDH assumption and the $t(\lambda)$-KEA is then given in \cref{section:ddhntcf}.

The protocol consists of three possible tests. 
Crucially, due to the computational indistinguishability of the function families in an \eiiintcf, a classical prover cannot tell which test is being performed\footnote{Note that while a quantum prover could in principle distinguish the three distributions for an \eiiintcf{} based on DDH, this is not required to break soundness. There exists a quantum prover that can pass all three tests with probability 1 with the same strategy, i.e. is not required to distinguish the tests.}.
Thus, a prover with a high probability of success has to use a strategy that will work well for all three tests, on average.

\begin{myprotocol}
\caption{Single-Round Proof of Quantumness based on \eiiintcf{}}
\label{poq:3test}
\begin{myalgo}   
    \Require Let $(\cF,\cG, \cH)$ be an \eiiintcf{} tuple and $\lambda \in \mathbb{N}$ a security parameter. 
    \State The verifier samples $a \leftarrow_U \{\mathtt{Eq},\mathtt{wIm},\mathtt{sIm}\}$. 
    \MyIf{$a = \mathtt{Eq}$}
        \MyState The verifier samples a key $(k,t_k) \leftarrow \genalg_{\mathcal{F}}(1^\lambda)$ and sends $k$ to the prover.
        \State The verifier receives tuple $(y, d, c)$, computes $x_b = \invalg_{\mathcal{F}}(t_k,b,y)$ for $b \in \bits$ using 
        \StateNoNumber the trapdoor, and checks
        \begin{itemize}
            \item if $\chkalg_\cF(k,y,0,x_0) = \chkalg_\cF(k,y,1,x_1) = 1\,,$
            \item if $(x_0,x_1) \in \cR_k\,,$
            \item if $x_0 \in \cX_0$ and $x_1 \in \cX_1\,,$
            \item if $d \in G_{k,0,x_0}\cap G_{k,1,x_1}\,,$
            \item if $(J(x_0) \oplus J(x_1))\cdot d = c\,.$
        \end{itemize} 
        \StateNoNumber If all of these conditions hold, the verifier accepts.
    \MyElsIf{$a = \mathtt{wIm}$}
        \MyState The verifier samples $(k,t_k) \leftarrow \genalg_{\mathcal{H}}(1^\lambda)$ and sends $k$ to the prover.
        \State The verifier receives tuple $(y, d, c)$ and checks whether $\imchkalg_\mathcal{H}(t_k,y) = 1$. If this holds,
        \StateNoNumber the verifier accepts. 
    \MyElsIf{$a = \mathtt{sIm}$}
        \MyState The verifier samples $(k,t_k) \leftarrow \genalg_{\mathcal{G}}(1^\lambda)$ and sends $k$ to the prover.
        \State The verifier receives tuple $(y, d, c)$, computes $(b,x) = \invalg_\mathcal{G}(t_k,y)$ using the trapdoor, and 
        \StateNoNumber checks whether $\chkalg_\mathcal{G}(k,y,b,x) = 1$. If this holds, the verifier accepts. 
    \EndIf
\end{myalgo}
\end{myprotocol}

The role of the equation test $\eqtest$ is to test the ability of the prover to find the value $c$ of the equation $d \cdot (J(x_0) \oplus J(x_1))$, where $d$ is chosen by the prover and $x_0,x_1$ are the preimages of the value $y$, which is also chosen by the prover. 
The role of the image tests $\wimtest$ and $\simtest$ is to constrain the strategies which a ``highly successful'' classical prover can perform in such a way that we can conclude that the prover must have ``known'' a preimage $(b,x) \in \bits\times\cX$ under $f_{k,b}$ to the image $y$ it returned.
Therefore, if a classical prover had a sufficiently high success probability, we could use that prover and the extractor from the \eiiintcf~to break the AHCB property of $f$.

We will now explain in more detail the role of the image tests $\simtest$ and $\wimtest$ and how they allow us to conclude that a (highly successful) prover $\cA$ must know a preimage. 
This notion of ``knowing a preimage'' is captured by the existence of an extractor that outputs a preimage under certain conditions.  

\paragraph{Weak Image Test:} The weak extractability property of $\cH$ implies the existence of an extractor $\cA^*$ such that whenever the image $y$ output by the prover $\cA$ fulfills certain conditions (captured by the image check function $\imchkalg_\cH$) the probability that the extractor fails to produce a valid preimage $(b,x)$ of $y$ under the functions $h_{k,b}$ (for fixed $k$, and $b$ in the set $\cB_\cH$) is negligible. 
Thus, the success probability of the prover in the weak image test $\wimtest$ allows us to lower-bound the probability that the extractor $\cA^*$ outputs a valid preimage to $y$ under $\genalg_\cH$ (see \cref{claim:successprobextr} in the proof of \cref{thm:poq2}). Since checking whether the extractor has returned a valid preimage is efficient, this probability is the same (up to a negligible difference) for all three key distributions. 
Note, however, that the preimage that the extractor outputs is a priori a preimage $(b,x)$ in the extended domain $\cB_\cH \times \cX_\cH$. 
To show a contradiction with the adaptive hardcore bit property, however, we need to show that the extractor produces a preimage in the restricted domain $\bits \times \cX$.
\paragraph{Strong Image Test:} This is where the strong image test $\simtest$ comes in. For the trapdoor injective function family $\cG$, the supports of the distributions $g_{k,b}(x)$ (for the extended set $\cB_\cH \times \cX_\cH$ of preimages) are all pairwise disjoint. 
Thus, under the function family $\cG$, the preimage $(b,x)$ of $y$ is unique even on the extended domain. The prover passes the strong image test $\simtest$ if and only if $y$ is in $g_{k,b}(\cX)$ for binary values of $b$, i.e., when the unique preimage of $y$ is in $\bits\times\cX$. In other words, the injectivity condition allows us to use the strong image test $\simtest$ to restrict the strategy of the prover to generating images of the form $f_{k,b}(x)$ for $(b,x) \in \bits\times\cX$ instead of, potentially, an extension of $f_{k,b}$ to the larger domain $\cB_\cH \times \cX_\cH$. By the computational indistinguishability of the key distributions of $\cG$ and $\cH$, it can be concluded that the ``probability of successful extraction'' for the prover-extractor pair $(\cA,\cA^*)$ is the same under the key distribution $\cG$ as under the key distribution of $\cH$.  This probability, together with the probability that the prover passes the strong image test, can be used to lower bound the probability that the extractor $\cA^*$ produces a preimage $(b,x)$ of $y$ with $b \in \bits$, $x \in \cX$, see \cref{claim:imtestbound}. By the computational indistinguishability of the key distributions of $\cG$ and $\cF$, it can be concluded that the ``probability of successful extraction'' for the ``restricted-domain-output'' extractor $\cE^*$ (which returns the output of $\cA^*$ if it lies in the restricted domain, else a random element) is at most negligibly different under the key distribution of $\cF$. 
\paragraph{Equation Test:} Finally, we can use the probability that $\cE^*$ successfully extracts a preimage together with the success probability of the prover in the equation test $\eqtest$ to lower-bound the probability that the prover-extractor pair $(\cA,\cE^*)$(which we combine to a single algorithm $\cB$) holds both a preimage and a valid equation, see \cref{claim:advsuccessbound}. We will then use this result to show that the existence of a classical prover that succeeds with a probability non-negligibly higher than $5/6$ would violate the adaptive hardcore bit property of the NTCF.

\begin{theorem}
\label{thm:poq2}
Let $\lambda \in \mathbb{N}$, 
and let $(\cF,\cG,\cH)$ be an \eiiintcf{} pair with paremeter $\injconstant\cF$.
We consider \protref{poq:3test} with inputs $\lambda$ and $(\cF, \cG, \cH)$.
\begin{myenumi}
    \item \textbf{Completeness.} There is a QPT prover which succeeds in the protocol with probability 
    $$
        \frac{2+\injconstant\cF}{3} - \negl(\lambda) \,.
    $$
    \item \textbf{Soundness.} Any PPT adversary succeeds in the protocol with probability at most
    $$
        \frac{5}{6} + \negl(\lambda).
    $$
\end{myenumi}
\end{theorem}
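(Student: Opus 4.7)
For completeness, I would analyze the canonical quantum prover from \cite{randomness}, adapted here to also report the equation value $c$: on input $k$, it uses $\sampalg_\cF$ (shared across all three families by injective invariance and \cref{def:weakinv}) to prepare a uniform superposition over preimages together with a classical image register, measures the image to obtain $y$, and Hadamard-measures the (injected) preimage register to obtain $(d,c)$. For $\wimtest$ and $\simtest$ the resulting $y$ lies in the range of some $h_{k,b}$, respectively $g_{k,b}$, and is accepted up to the Hellinger error of \cref{item:hellinger ntcf}. For $\eqtest$, whenever the sampled $x$ lies in $\cX_0\cap\cX_1$---an event of probability at least $\injconstant\cF-\negl(\lambda)$ by the Injective Pair property---the state collapses to $\tfrac{1}{\sqrt 2}(\ket{x_0}+\ket{x_1})$, and the Hadamard measurement deterministically outputs $(d,c)$ with $c=d\cdot(J(x_0)\oplus J(x_1))$. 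Averaging over the uniform test choice gives completeness $(2+\injconstant\cF)/3-\negl(\lambda)$.

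For soundness, fix a PPT adversary $\cA$ and let $p_\eqtest,p_\wimtest,p_\simtest$ be its conditional success probabilities in the three tests; since the overall success probability is the average, it suffices to prove $p_\eqtest+p_\wimtest+p_\simtest\leq 5/2+\negl(\lambda)$. The plan is to propagate a preimage extractor across the three key distributions in four steps. First, applying the weak extractability of $\cH$ (\cref{def:weak extractability}) to the ``$y$-only'' projection of $\cA$ yields an efficient $\cA^*$ such that, under $k\leftarrow\genalg_\cH$, the event ``$\imchkalg_\cH(t_k,y)=1 \wedge \chkalg_\cF(k,y,b,x)=0$'' has negligible probability; since passing $\wimtest$ is exactly $\imchkalg_\cH(t_k,y)=1$, this gives $\Pr[\chkalg_\cF(k,y,b,x)=1]\geq p_\wimtest-\negl(\lambda)$ under $\genalg_\cH$. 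Second, because $\chkalg_\cF$ is efficient without a trapdoor, the indistinguishability of the key distributions (\cref{item:inj inv key indist}, \cref{item:weak inv indkey}) transfers this bound to both $\genalg_\cG$ and $\genalg_\cF$.

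Third, under $\genalg_\cG$, a successful $\simtest$ gives $\invalg_\cG(t_k,y)=(b',x')\in\bits\times\cX$, and the disjoint-supports property of \cref{def:trapdoor injective function family} forces any $(b,x)$ with $\chkalg_\cF(k,y,b,x)=1$ to equal $(b',x')$; a union bound then yields $\Pr_{k\sim\cG}[(b,x)\in\bits\times\cX \wedge \chkalg_\cF(k,y,b,x)=1]\geq p_\wimtest+p_\simtest-1-\negl(\lambda)$, which transfers once more to $\genalg_\cF$ by indistinguishability. Fourth, under $\genalg_\cF$, a union bound with the event that $\cA$ wins $\eqtest$ yields, with probability at least $p_\eqtest+p_\wimtest+p_\simtest-2-\negl(\lambda)$, an output $(y,d,c)$ and extractor response $(b,x)$ simultaneously satisfying $(x_0,x_1)\in\cR_k$, $x_0\in\cX_0$, $x_1\in\cX_1$, $d\in\dset_{k,0,x_0}\cap\dset_{k,1,x_1}$, $c=d\cdot(J(x_0)\oplus J(x_1))$, and $(b,x)\in\bits\times\cX$ with $y\in\supp(f_{k,b}(x))$. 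The NTCF trapdoor property then forces $x=\invalg_\cF(t_k,b,y)=x_b$, so $(b,x,d,c)\in H_k$. Since $H_k\cap\overline{H}_k=\emptyset$, the AHCB property bounds this probability by $1/2+\negl(\lambda)$, completing the proof.

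The main obstacle is the careful interleaving of knowledge extraction with computational indistinguishability. The image-check function $\imchkalg_\cH$ requires the trapdoor $t_k$ to evaluate, so the event ``$\cA$ wins $\wimtest$'' is \emph{not} efficiently checkable and indistinguishability cannot be applied to it directly; one must first use weak extractability to replace it by the trapdoor-free condition $\chkalg_\cF(k,y,b,x)=1$, and only then switch key distributions. It is exactly this tension that dictates the three-family structure: $\cH$ supports extractability, $\cG$ (via its injectivity on the extended domain) confines the extracted preimage to the restricted domain $\bits\times\cX$ required by the AHCB, and $\cF$ is the family whose adaptive hardcore bit is eventually violated.
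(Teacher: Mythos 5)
Your proposal is correct and follows essentially the same route as the paper: use the weak extractability of $\cH$ to get a preimage-extraction guarantee under $\genalg_\cH$, move it to $\genalg_\cG$ by key indistinguishability (the crucial observation, which you correctly isolate, being that $\chkalg_\cF$ is trapdoor-free whereas $\imchkalg_\cH$ is not), use the disjointness/injectivity of $\cG$ plus the strong image test to force the extracted preimage into $\bits\times\cX$, transfer to $\genalg_\cF$, and finally union-bound against the equation test to violate the AHCB. The only blemish is cosmetic: in the completeness argument the $\negl(\lambda)$ loss in $\simtest$ comes from the $(k,t_k)\notin\cI_\cG$ event and $\wimtest$ succeeds with probability exactly $1$, rather than either being a Hellinger effect (the Hellinger bound of \cref{item:hellinger ntcf} only enters in $\eqtest$ when replacing $f'_{k,b}$ by $f_{k,b}$), but this does not affect the stated bound.
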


\begin{myproof}
~

\textbf{Completeness.}
Consider a quantum prover $P$ that, upon receiving the key $k$ from the verifier, proceeds as indicated in \cref{alg:successfulqpt}.
Note that all operations that $P$ applies are efficient, so $P$ is a QPT prover. \hfill
\begin{algorithm}
\caption{Successful Quantum Prover}
\label{alg:successfulqpt}
\begin{myalgo}
    \Require \eiiintcf{} tuple $(\cF,\cG,\cH)$, security parameter $\lambda \in \mathbb{N}$, and a key $k \in \cK_\cF \cup \cK_\cG \cup \cK_\cH$.
    \State The prover uses $\sampalg_\cF$ to prepare a quantum state, which is if $k\in\cK_\cF$
    \begin{equation}
    \label{eq:successful quantum prover state}
        \frac{1}{\sqrt{2|\cX|}} \sum_{\substack{ b \in \bits \\ x\in \cX,y\in \cY}} \sqrt{f'_{k,b}(x)(y)}\ket{b}_\sB \ket{x}_\sX \ket{y}_\sY.
    \end{equation}
    \State The prover measures the $\sY$-register in the computational basis and the $\sX$- and $\sB$-register in the Hadamard basis with outcomes $y,d,c$ respectively. 
    \State The prover outputs $(y,d,c)$. 
\end{myalgo}
\end{algorithm}

\begin{myenumii}
\item Case $a = \mathtt{sIm}$: 
Since $\mathcal{F}$ is injective invariant (\cref{def:injinv}), SAMP$_\mathcal{F} =$ SAMP$_\mathcal{G}$ and thus the state that $P$ prepares in \cref{eq:successful quantum prover state} is
$$
    \frac{1}{\sqrt{2|\cX|}} \sum_{\substack{b \in \bits \\ x\in \cX, y\in \cY}} \sqrt{(g_{k,b}(x))(y)}\ket{b}_\sB \ket{x}_\sX \ket{y}_\sY\,.
$$
Thus, for any $y$ that $P$ gets from the $\sY$ measurement, there exist $(b',x')\in \bits\times\cX$ such that $y \in \supp(g_{k,b'}(x'))$. 
Using Item~\ref{item:disjoint trapdoor injective pair} of \cref{def:trapdoor injective function family} of $\mathcal{G}$, it follows that if $(k,t_k) \in \mathcal{I}_G$, 
then the verifier will obtain $\invalg_\cG(t_k, y) = (b',x')$, and then $\chkalg_\cG(k,y,b',x') = 1$, so that $P$ succeeds in the protocol.
The failure probability of $P$ in the case $a = s\imtest$ is thus at most
$$
    \prss{\kt \leftarrow \genalg_\cG(1^\lambda)}{ (k,t_k)\notin \cI_\cG} = \negl(\lambda)\,.
$$

\item Case $a = \mathtt{wIm}$: It follows directly from the definition of $P$ and $\cH$ that the success probability of $P$ in the case $a = \mathtt{wIm}$ is 1.
\item Case $a = \mathtt{Eq}$: 
First, note that the statistics obtained on the state $\ket{\psi'}$ of \cref{eq:successful quantum prover state} and the same state but with the replacement $f'_{k,b} \to f_{k,b}$, denoted $\ket{\psi}$, differ at most by the trace norm $\lVert \ket{\psi'} - \ket{\psi} \lVert_{\mathrm{tr}}$.
Using Lemma 2.1 of \cite{randomness}, this trace norm is at most
$$
    \Ess{\substack{b \unifsample \bits, x \unifsample \cX}}{H^2\big(f_{k,b}(x),f'_{k,b}(x)\big)} = \negl(\lambda)\,,
$$
where we used \cref{eq:hellinger ntcf}.
On the state $\ket\psi$, the probability that $P$ obtains a value $y$ such that there exists a $(b,x) \in \bits\times(\cX_0 \cap \cX_1)$ such that $y \in \supp(f_{k,b}(x))$ is at least $\injconstant{\cF}$, using Item~\ref{prop:trapdoor_inj} of \cref{def:ntcf}. Together with the other statements of \cref{def:ntcf}, we can conclude that with probability at least $\injconstant{\cF}$ there exists $(x_0,x_1) \in \cR_k$ with $y \in \supp(f_{k,0}(x_0)) = \supp(f_{k,1}(x_1))$ and such that $x_0 \in \cX_0$, $x_1 \in \cX_1$ and there is no $(b',x') \in \bits\times\cX\setminus\{(0,x_0),(1,x_1)\}$ such that $y \in \supp(f_{k,b'}(x'))$.

It then follows from the same argument as for the case of the equation test in \cite{randomness} that the prover succeeds in obtaining a correct equation with probability $1 -\negl(\lambda)$, conditioned on the existence of $(b,x) \in \bits\times(\cX_0 \cap \cX_1)$ such that $y \in \supp(f_{k,b}(x))$.
Thus, the success probability of $P$ in the equation test is lower bounded by $\injconstant\cF - \negl(\lambda)$.
\end{myenumii}
Therefore, the total success probability of $P$ in the protocol is $\frac{2 + c_\cF}{3} - \negl(\lambda)$.

\textbf{Soundness.} Assume that there exists a PPT adversary $\mathcal{A}$ that succeeds with probability at least $\frac{5}{6} + \frac{1}{q(\lambda)}$ for some polynomial $q: \mathbb{N} \rightarrow \mathbb{R}_+$. 
We will use the weak extractability property of $\mathcal{H}$ together with the success probability of $\cA$ in the protocol to construct a PPT algorithm $\mathcal{B}$ which contradicts the AHCB property of $\mathcal{F}$.

Let the distribution of the random coins of $\cA$ be $R$. 
For ease of notation, we refrain from writing the coin distribution $r\leftarrow R$ of $\mathcal{A}$ explicitly in the proof, but note that all probabilities in the proof are defined on average over this random coin distribution.
Let $\mathcal{A}^*$ be the extractor corresponding to $\cA$ from the weak extractability property of $\mathcal{H}$ (\cref{def:weak extractability}).
Denote by $S_\mathcal{A}^{\hspace{1pt}a}$ the event that $\mathcal{A}$ produces an output which passes the test of case $a \in \{\mathtt{sIm},\mathtt{wIm},\mathtt{Eq}\}$. 
We begin by relating the probability of $\cA,\cA^*$ producing an image-preimage pair $y$, $(b,x)$ and the success probability of $\cA$ in the weak image test.
\begin{claim} 
\label{claim:successprobextr}
It holds that 
\begin{equation*}
 \prss{\kt \leftarrow \genalg_{\mathcal{H}}(1^\lambda)}{\chkalg_\cF(k,y,b,x) = 1}
 \geq
 \prss{\kt \leftarrow \genalg_{\mathcal{H}}(1^\lambda)}{S_\cA^{w\imtest}} - \negl(\lambda)\,.
\end{equation*}
where $y$ is obtained from $(y,d,c) = \mathcal{A}(k,r)$, $(b,x) = \mathcal{A}^*(k,r)$ and $r$ are the random coins of $\mathcal{A}\,$.
\end{claim}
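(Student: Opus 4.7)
The plan is to derive this as a nearly immediate consequence of the weak extractability property of $\cH$ (\cref{def:weak extractability}) combined with a simple decomposition of probabilities. The key observation is that the event $S_\cA^{w\imtest}$ is, by the specification of \protref{poq:3test} in the case $a = \mathtt{wIm}$, precisely the event $\imchkalg_\cH(t_k, y) = 1$.

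First I would set up notation to apply the weak extractability property. The adversary $\cA$ in our protocol outputs a triple $(y,d,c)$, but the weak extractability property of \cref{def:weak extractability} is phrased for adversaries whose output is a single $y \in \cY$. So I would pass to the derived PPT algorithm $\tilde\cA(k,r)$ which runs $\cA(k,r)$ and returns only the first coordinate $y$. By the weak extractability property applied to $\tilde\cA$, there exists a deterministic poly-time extractor $\tilde\cA^*$ --- which we identify with $\cA^*$ --- such that
\begin{equation*}
\prss{\kt \leftarrow \genalg_\cH(1^\lambda)}{\chkalg_\cF(k,y,b,x) = 0 \,\wedge\, \imchkalg_\cH(t_k,y) = 1} = \negl(\lambda),
\end{equation*}
where $y$ is the first component of $\cA(k,r)$ and $(b,x) = \cA^*(k,r)$.

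Next I would decompose the probability of the weak image test succeeding according to whether $\chkalg_\cF(k,y,b,x)$ outputs $1$ or $0$:
\begin{align*}
\prss{\kt \leftarrow \genalg_\cH(1^\lambda)}{S_\cA^{w\imtest}}
&= \prss{\kt \leftarrow \genalg_\cH(1^\lambda)}{\imchkalg_\cH(t_k,y) = 1} \\
&= \prss{\kt \leftarrow \genalg_\cH(1^\lambda)}{\imchkalg_\cH(t_k,y) = 1 \,\wedge\, \chkalg_\cF(k,y,b,x) = 1} \\
&\quad+ \prss{\kt \leftarrow \genalg_\cH(1^\lambda)}{\imchkalg_\cH(t_k,y) = 1 \,\wedge\, \chkalg_\cF(k,y,b,x) = 0}.
\end{align*}
The second term is $\negl(\lambda)$ by the weak extractability property, and the first term is at most $\Pr[\chkalg_\cF(k,y,b,x) = 1]$, which yields the desired inequality after rearranging.

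I do not anticipate a genuine obstacle here: the statement is essentially repackaging the weak extractability guarantee into a form convenient for the rest of the soundness argument. The only subtle point is ensuring that applying the extractability property to $\tilde\cA$ (which discards $d,c$) yields an extractor compatible with the one asserted in the claim for $\cA$; but this is immediate since the extractor from \cref{def:weak extractability} is defined in terms of the key $k$ and the random coins $r$ alone, and its guarantee depends only on the $y$-component of the adversary's output.
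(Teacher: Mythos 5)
Your proposal is correct and takes essentially the same approach as the paper: both identify $S_\cA^{w\imtest}$ with the event $\imchkalg_\cH(t_k,y)=1$, invoke the weak extractability property of $\cH$ to bound $\Pr[\chkalg_\cF(k,y,b,x)=0 \land S_\cA^{w\imtest}]$ by $\negl(\lambda)$, and conclude by a one-line probability decomposition. The only (welcome) addition you make over the paper's terser presentation is to spell out that one should apply the extractability definition to the derived algorithm $\tilde\cA$ that discards $(d,c)$ and returns only $y$, which resolves a typing mismatch the paper leaves implicit.
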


\begin{claimproof}
Using that $\mathcal{A}^*$ is the extractor associated to $\mathcal{A}$, it follows by definition that
\begin{equation}
\label{eq:weakext}
    \prss{\kt \leftarrow \genalg_{\mathcal{H}}(1^\lambda)}{\chkalg_\mathcal{F}(k,y,b,x) = 0 \land S_\mathcal{A}^\mathtt{wIm}} = \text{negl}(\lambda)\,.
\end{equation}
Thus, we can derive that:
\begin{align*}
    &\prss{\kt \leftarrow \genalg_{\mathcal{H}}(1^\lambda)}{\chkalg_\cF(k,y,b,x) = 1}
    \geq \prss{\kt \leftarrow \genalg_{\mathcal{H}}(1^\lambda)}{\chkalg_\cF(k,y,b,x) = 1 \land S_\cA^{w\imtest}} \nonumber\\
    = &\prss{\kt \leftarrow \genalg_{\mathcal{H}}(1^\lambda)}{S_\cA^{w\imtest}} - \prss{\kt \leftarrow \genalg_{\mathcal{H}}(1^\lambda)}{\chkalg_\cF(k,y,b,x) = 0 \land S_\cA^{w\imtest}} \nonumber\\
    = &\prss{\kt \leftarrow \genalg_{\mathcal{H}}(1^\lambda)}{S_\cA^{w\imtest}} - \negl(\lambda)\,,
\end{align*}
where we used \cref{eq:weakext} in the last line.
\end{claimproof}

We now recall that $\chkalg_\cF$ does not use a trapdoor and is poly-time, that $\cA$, $\cA^*$ are poly-time, and thus the concatenation with $\chkalg_\mathcal{F}$ is poly-time.
It follows from Item~\ref{item:inj inv key indist} of \cref{def:injinv} (key distributions of $\cF$ and $\cG$ are computationally indistinguishable) and item \ref{item:weak inv indkey} of \cref{def:weakinv} (key distributions of $\cF$ and $\cH$ are computationally indistinguishable) that the key distributions of $\cG$, $\cH$ are computationally indistinguishable, and thus: 
\begin{multline*}
    \prss{\kt \leftarrow \genalg_{\mathcal{G}}(1^\lambda)}{\chkalg_\cF(k,y,b,x) = 1} 
    \geq \prss{\kt \leftarrow \genalg_{\mathcal{H}}(1^\lambda)}{\chkalg_\cF(k,y,b,x) = 1} - \negl(\lambda)\,,
\end{multline*}
and thus by \cref{claim:successprobextr},
\begin{equation}
\label{eq:bridge2}
    \prss{\kt \leftarrow \genalg_{\mathcal{G}}(1^\lambda)}{\chkalg_\cF(k,y,b,x) = 1} \geq \prss{\kt \leftarrow \genalg_{\mathcal{H}}(1^\lambda)}{S_\cA^{w\imtest}} - \text{negl}(\lambda)\,.
\end{equation}

Now, we show the existence of an extractor $\cE^*$ whose output has, loosely speaking, ``the right form'' to be used to construct an adversary $\cB$ to the adaptive hardcore bit property, and relate its probability of successful extraction to the success probabilities of $\cA$ in the two image tests.
\begin{claim}
\label{claim:imtestbound}
There exists a PPT extractor $\cE^*$ that takes as input a key $k$ and the random coins $r$ of $\cA$ and with output in $\bits\times\cX$ such that: 
\begin{multline*}
\prss{\kt \leftarrow \genalg_{\mathcal{H}}(1^\lambda)}{S_\mathcal{A}^\mathtt{wIm}} 
+ \prss{\kt \leftarrow \genalg_{\mathcal{G}}(1^\lambda)}{S_\mathcal{A}^\mathtt{sIm}} \\
\leq 1 + \prss{\kt \leftarrow \genalg_{\mathcal{F}}(1^\lambda)}{\chkalg_\mathcal{F}(k,y,b'',x'') = 1} + \negl(\lambda)\,,
\end{multline*}
where $y = \mathcal{A}(k,r)$, $(b'',x'') = \mathcal{E}^*(k,r)$ and $r$ are the random coins of $\mathcal{A}$.
\end{claim}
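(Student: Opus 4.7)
The plan is to define $\cE^*$ as a restricted-domain wrapper around $\cA^*$: on input $(k, r)$, $\cE^*$ runs $\cA^*(k, r)$ to obtain a candidate preimage $(b, x) \in \cB_\cH \times \cX_\cH$, returns $(b, x)$ unchanged if it lies in $\bits \times \cX$, and otherwise returns an arbitrary fixed default $(b_0, x_0) \in \bits \times \cX$. Since $\cA^*$ is PPT so is $\cE^*$, and its output lies in $\bits \times \cX$ by construction.

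The heart of the argument takes place under the key distribution $\genalg_\cG$. Since $\chkalg_\cF = \chkalg_\cG$ by Item~1 of \cref{def:injinv}, and $\cG$ is disjoint trapdoor injective on the extended domain (Item~\ref{item:disjoint trapdoor injective pair} of \cref{def:trapdoor injective function family}), for every $(k, t_k) \in \cI_\cG$ each $y \in \cY$ has at most one preimage $(b, x) \in \cB_\cH \times \cX_\cH$ under $g_{k,\cdot}$. Let $E$ be the event that $\chkalg_\cF(k, y, b, x) = 1$ with $(b, x) = \cA^*(k, r)$ and $y$ extracted from $\cA(k, r)$; this says precisely that $\cA^*$ returns a $g$-preimage of $y$. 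Up to the negligible probability that $(k, t_k) \notin \cI_\cG$, the event $S_\cA^{\simtest}$ is equivalent to $y$ admitting some preimage in $\bits \times \cX$ under $g$. Combined with uniqueness of the $g$-preimage on the extended domain, this means that on $E \cap S_\cA^{\simtest}$ the output $(b, x)$ of $\cA^*$ must itself lie in $\bits \times \cX$, so $\cE^*$ passes it through unchanged and $\chkalg_\cF(k, y, b'', x'') = 1$.

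The elementary bound $\Pr[E] + \Pr[S_\cA^{\simtest}] \leq 1 + \Pr[E \cap S_\cA^{\simtest}]$, applied under $\genalg_\cG$ and combined with \cref{eq:bridge2} lower-bounding $\Pr_{\genalg_\cG}[E]$ by $\Pr_{\genalg_\cH}[S_\cA^{\wimtest}] - \negl(\lambda)$, yields the desired inequality but with the right-hand probability still taken under $\genalg_\cG$. The final step is to swap $\genalg_\cG$ for $\genalg_\cF$ on the right, which is immediate from Item~\ref{item:inj inv key indist} of \cref{def:injinv} together with the fact that $\cA$, $\cE^*$ and $\chkalg_\cF$ are all PPT, so $k \mapsto \chkalg_\cF(k, \cA(k,r), \cE^*(k,r))$ is a valid PPT distinguisher between the $\cF$ and $\cG$ key distributions.

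The only real obstacle, and the reason the strong image test is needed in the first place, is the mismatch between the extended domain $\cB_\cH \times \cX_\cH$ on which $\cA^*$ is guaranteed to extract and the restricted domain $\bits \times \cX$ required for the adaptive hardcore bit reduction; injectivity of $\cG$ on the extended domain, together with $S_\cA^{\simtest}$, is exactly what bridges this gap and lets the wrapper $\cE^*$ produce an output of the correct type with high enough probability.
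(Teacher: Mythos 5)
Your proposal is correct and matches the paper's proof essentially step for step: same definition of $\cE^*$ as a restricted-domain wrapper around $\cA^*$, same use of \cref{eq:bridge2} plus inclusion--exclusion under $\genalg_\cG$, same appeal to the disjointness/injectivity of $\cG$ (restricted to $(k,t_k)\in\cI_\cG$) to force $\cA^*$'s output into $\bits\times\cX$ on the intersection event, and the same final key-distribution swap from $\genalg_\cG$ to $\genalg_\cF$ via computational indistinguishability. The only difference is presentational: the paper writes out the chain of displayed inequalities and explicitly carries the $(k,t_k)\in\cI_\cG$ conjunct before discharging it, whereas you handle that conditioning informally ``up to negligible probability,'' which is the same accounting.
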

\begin{claimproof}
First, note that: 
\begin{align}
\label{eq:aux1}
&\prss{\kt \leftarrow \genalg_\cH(1^\lambda)}{ S_\cA^\wimtest } 
+ \prss{\kt \leftarrow \genalg_\cG(1^\lambda)}{ S_\cA^\simtest } \nonumber\\
\annotatesign{Eq. \eqref{eq:bridge2}}{\leq}
&\prss{\kt \leftarrow \genalg_\cG(1^\lambda)}{\chkalg_\cF(k,y,b,x) = 1} 
+ \prss{\kt \leftarrow \genalg_\cG(1^\lambda)}{S_\cA^\simtest} + \negl(\lambda) \nonumber\\
\leq & \hspace{10pt} 1 + \prss{\kt \leftarrow \genalg_\cG(1^\lambda)}{\chkalg_\cF(k,y,b,x) = 1 \land S_\cA^\simtest} + \negl(\lambda) \nonumber\\
\annotatesign{Def. $S_\cA^\simtest$}{\leq}&
\hspace{10pt} 1 + \prss{\kt \leftarrow \genalg_\cG(1^\lambda)}{\chkalg_\cF(k,y,b,x) = 1 \land \begin{gathered}
    \chkalg_\cF(k,y,b',x') = 1 \\
    (b',x') = \invalg_\cG(t_k,y) 
\end{gathered}
} + \negl(\lambda)\,
\end{align}
Now, define $\cE$ to be the algorithm which takes as input $(b,x) \in \cB_\cG \times \cX_\cH$ and outputs $(b,x)$ whenever $(b,x) \in \bits\times\cX$ and else\footnote{This choice is arbitrary, we just need to replace $b$ by a bit and $x$ by some element in $\cX$.} $(0,z)$ for some arbitrary $z \in \cX$ and define $\cE^* = \cE \circ \cA^*$. 
In the following calculation, we denote by $(b'',x'')$ the output of $\cE^*$ upon input $(k,r)$. 
Note that whenever $(k,t_k) \in \cI_\cG$, we have that $\chkalg_\mathcal{F}(k,y,b,x) = 1 \land \chkalg_\mathcal{F}(k,y,b',x') = 1$ implies that $(b',x') = (b,x)$ due to Item~\ref{item:disjoint trapdoor injective pair} of \cref{def:trapdoor injective function family}, and using that $\chkalg_\cF = \chkalg_\cG$. 
Furthermore, since $(b',x') \in \bits\times\cX$ (recall that the inversion function $\invalg_\cG$ only returns values of $b',x'$ in $\bits\times\cX$ by definition) it holds in particular that $\chkalg_\mathcal{F}(k,y,b,x) = 1 \land \chkalg_\mathcal{F}(k,y,b',x') = 1$  
 implies
 $\chkalg_\mathcal{F}(k,y,b'',x'') = 1$.
Using this, it follows that: 
\begin{align}
\label{eq:aux2}
    &\prss{\kt \leftarrow \genalg_{\mathcal{G}}(1^\lambda)}{\chkalg_\mathcal{F}(k,y,b,x) = 1 \land \chkalg_\mathcal{F}(k,y,b',x') = 1} \nonumber\\
    \annotatesign{Eq. \eqref{eq:trapdoor injective bad keys}}{\leq} &\prss{\kt \leftarrow \genalg_{\mathcal{G}}(1^\lambda)}{\chkalg_\mathcal{F}(k,y,b,x) = 1 \land \chkalg_\mathcal{F}(k,y,b',x') = 1 \land (k,t_k) \in \cI_\cG} + \negl(\lambda) \nonumber\\
    \leq &\prss{\kt \leftarrow \genalg_{\mathcal{G}}(1^\lambda)}{\chkalg_\mathcal{F}(k,y,b'',x'') = 1 \land (k,t_k) \in \cI_\cG} + \negl(\lambda) \nonumber\\
    \leq &\prss{\kt \leftarrow \genalg_{\mathcal{G}}(1^\lambda)}{\chkalg_\mathcal{F}(k,y,b'',x'') = 1} + \negl(\lambda)\,.
\end{align}
Combining \cref{eq:aux1,eq:aux2} and using that $\cA$, $\cE^*$ are PPT algorithms that do not use the trapdoor, the computational indistinguishability of the keys distributed by $\genalg_\cF$ and $\genalg_\cG$ (see \cref{def:injinv}) yields the claim.
\end{claimproof}

\begin{claim} 
\label{claim:advsuccessbound}
There exists a PPT algorithm $\cB$ that takes a key $k\in \cK_\cF$ as input and produces outputs in $\bits\times\cX\times\bits^w\times\bits$ such that
\begin{multline*}
\prss{\kt \leftarrow \genalg_\cH(1^\lambda)}{S_\mathcal{A}^\mathtt{wIm}} + \prss{\kt \leftarrow \genalg_{\mathcal{G}}(1^\lambda)}{S_\mathcal{A}^\mathtt{sIm}} + \prss{\kt \leftarrow \genalg_{\mathcal{F}}(1^\lambda)}{S_\mathcal{A}^\mathtt{Eq}} \\
\leq 2 + \prss{\kt \leftarrow \genalg_{\mathcal{F}}(1^\lambda)}{ \mathcal{B}(k) \in H_k} + \negl(\lambda)\,,
\end{multline*}
where $H_k$ refers to the set introduced in Item~\ref{item:ahcb} of \cref{def:ntcf}.
\end{claim}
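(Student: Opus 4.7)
The plan is to construct $\cB$ as the natural combination of the adversary $\cA$ and the extractor $\cE^*$ from Claim~\ref{claim:imtestbound}. Concretely, $\cB$ on input $k \in \cK_\cF$ samples random coins $r \leftarrow R$, runs $\cA(k,r)$ to obtain $(y,d,c)$, runs $\cE^*(k,r)$ to obtain $(b'',x'') \in \bits \times \cX$, and outputs the $4$-tuple $(b'', x'', d, c)$. Since $\cA$ and $\cE^*$ are both PPT, so is $\cB$, and the output type lies in $\bits \times \cX \times \bits^w \times \bits$ as required.

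The proof will proceed in two steps. First, combine Claim~\ref{claim:imtestbound} with the equation-test success probability via the elementary bound $\Pr[A] + \Pr[B] \leq 1 + \Pr[A \cap B]$. Applying this to the events ``$\chkalg_\cF(k,y,b'',x'') = 1$'' and ``$S_\cA^\eqtest$'' under $\genalg_\cF$ will give
$$
\prss{\kt \leftarrow \genalg_\cF(1^\lambda)}{\chkalg_\cF(k,y,b'',x'') = 1} + \prss{\kt \leftarrow \genalg_\cF(1^\lambda)}{S_\cA^\eqtest} \leq 1 + \prss{\kt \leftarrow \genalg_\cF(1^\lambda)}{\chkalg_\cF(k,y,b'',x'') = 1 \land S_\cA^\eqtest}\,,
$$
which when added to the bound of Claim~\ref{claim:imtestbound} yields the claimed inequality, provided the intersection event implies $\cB(k) \in H_k$.

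The second step is to verify this implication. Suppose $S_\cA^\eqtest$ holds with preimages $x_0 = \invalg_\cF(t_k,0,y)$, $x_1 = \invalg_\cF(t_k,1,y)$, so that in particular $(x_0,x_1) \in \cR_k$, $x_0 \in \cX_0$, $x_1 \in \cX_1$, $d \in \dset_{k,0,x_0} \cap \dset_{k,1,x_1}$, and $c = d \cdot (\inj(x_0) \oplus \inj(x_1))$. Simultaneously, $\chkalg_\cF(k,y,b'',x'') = 1$ means $y \in \supp(f_{k,b''}(x''))$; combined with the trapdoor injectivity (Item~\ref{item:trapdoor inj} of \cref{def:ntcf}) and the fact that $(b'',x'') \in \bits \times \cX$ by construction of $\cE^*$, this forces $x'' = x_{b''}$. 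For good keys (which occur except with negligible probability by \cref{eq:ntcf bad keys}), we then have $x'' = x_{b''} \in \cX_{b''}$, matching exactly the membership conditions defining $H_k$. Hence $\cB(k) = (b'',x'',d,c) \in H_k$ up to a negligible correction absorbed into the final $\negl(\lambda)$.

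The main subtlety I expect to have to handle carefully is the potential mismatch between the preimage $x''$ extracted by $\cE^*$ and the preimages $x_0,x_1$ recovered by the verifier via the trapdoor: one must invoke trapdoor injectivity (and condition on being in $\cI_\cF$) to conclude $x'' = x_{b''}$ rather than merely asserting that $x''$ is ``a'' preimage. Apart from this, the remaining bookkeeping is routine: the $\negl(\lambda)$ terms from Claim~\ref{claim:imtestbound}, the bad-key event~\eqref{eq:ntcf bad keys}, and the $x'' \in \cX \setminus \cX_{b''}$ edge case (which has negligible contribution if $\injconstant\cF$ is close to 1, but needs no separate treatment since the reasoning above already conditions on the correct preimage structure implied by $\chkalg_\cF$) all combine into the final negligible term.
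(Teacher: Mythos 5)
Your proposal is correct and follows essentially the same approach as the paper: the same construction of $\cB$ (concatenating $\cA$ with $\cE^*$ on shared coins), the same union bound $\Pr[A]+\Pr[B]\le 1+\Pr[A\cap B]$ applied under $\genalg_\cF$, and the same argument that $\chkalg_\cF(k,y,b'',x'')=1$ together with $S_\cA^\eqtest$ and a good key $(k,t_k)\in\cI_\cF$ forces $x''=x_{b''}$ and places $\cB(k)$ in $H_k$. One minor imprecision (also present but hand-waved in the paper's proof) is that $\chkalg_\cF=1$ means $y\in\supp(f'_{k,b''}(x''))$ rather than $\supp(f_{k,b''}(x''))$, so one should strictly invoke Item~3(a) rather than Item~2(a) of the NTCF definition; this does not affect the conclusion.
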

\begin{claimproof}
\Cref{claim:imtestbound} implies that
\begin{multline}
\label{eq:aux4}
\prss{\kt \leftarrow \genalg_{\mathcal{H}}(1^\lambda)}{S_\mathcal{A}^\mathtt{wIm}} + \prss{\kt \leftarrow \genalg_{\mathcal{G}}(1^\lambda)}{S_\mathcal{A}^\mathtt{sIm}} + \prss{\kt \leftarrow \genalg_{\mathcal{F}}(1^\lambda)}{S_\mathcal{A}^\mathtt{Eq}} \\
\leq 1 + \prss{\kt \leftarrow \genalg_{\mathcal{F}}(1^\lambda)}{\chkalg_\mathcal{F}(k,y,b'',x'') = 1} + \prss{\kt \leftarrow \genalg_{\mathcal{F}}(1^\lambda)}{S_\mathcal{A}^\mathtt{Eq}} + \negl(\lambda)\,.
\end{multline}
Furthermore, it holds that
\begin{multline}
\prss{\kt \leftarrow \genalg_{\mathcal{F}}(1^\lambda)}{\chkalg_\mathcal{F}(k,y,b'',x'') = 1} + \prss{\kt \leftarrow \genalg_{\mathcal{F}}(1^\lambda)}{S_\mathcal{A}^\mathtt{Eq}} \\
\leq  1 + \prss{\kt \leftarrow \genalg_{\mathcal{F}}(1^\lambda)}{
    \begin{gathered}
        \chkalg_\mathcal{F}(k,y,b'',x'') = 1 \\
        \land \; S_\mathcal{A}^\mathtt{Eq} \land (k,t_k) \in \mathcal{I}_\mathcal{F}
    \end{gathered}
} + \negl(\lambda)\,,
\end{multline}
where in the last equation we used \cref{eq:ntcf bad keys}. 
We now note that if $(k,t_k) \in \mathcal{I}_\mathcal{F}$, we have that
$\chkalg_\cF(k,y,0,x_0) = \chkalg_\cF(k,y,1,x_1) = 1$. If furthermore $\chkalg_\cF(k,y,b'',x'') = 1$, then $x = x_{b''}$ where we defined $x_{\tilde b} \coloneqq \invalg_\mathcal{F}(t_k,\tilde b,y)$, because the inversion function returns the ``correct'' preimages $x_0,x_1$ under these conditions.
Using this and the definition of $S_\mathcal{A}^\mathtt{Eq}$, we can conclude that: 
\begin{align}
&\prss{\kt \leftarrow \genalg_{\mathcal{F}}(1^\lambda)}{\chkalg_\mathcal{F}(k,y,b'',x'') = 1 \land S_\mathcal{A}^\mathtt{Eq} \land (k,t_k) \in \mathcal{I}_\mathcal{F}} \nonumber\\
\leq &\prss{\kt \leftarrow \genalg_{\mathcal{F}}(1^\lambda)}{ 
    \begin{gathered}
        c = d\cdot(J(x_0)\oplus J(x_1)) \land x = x_{b''} \land\, (x_0,x_1) \in \mathcal{R}_k \\[3pt]
         \land\, d\in \dset_{k,0,x_0}\cap \dset_{k,1,x_1} \land x_0\in\cX_0 \land x_1\in\cX_1
    \end{gathered}
} \nonumber\\
\leq &\prss{\kt \leftarrow \genalg_{\mathcal{F}}(1^\lambda)}{ 
    \begin{gathered}
        c = d\cdot(J(x_0)\oplus J(x_1)) \land x = x_{b''} \land\, (x_0,x_1) \in \mathcal{R}_k \\[3pt]
         \land\, d\in \dset_{k,0,x_0}\cap \dset_{k,1,x_1} \land x_{b''}\in\cX_{b''}
    \end{gathered}
}\;,
\end{align}
where $(y,d,c) = \cA(k,r)$, $(b'',x'') = \cE^*(k,r)$, and $x_{\tilde b} = \invalg_\cF(t_k,\tilde b,y)$.
We now define an algorithm $\mathcal{B}$ which uses the algorithms $\mathcal{A}, \mathcal{E}^*$ as subroutine: see \cref{algo:soundness b}.
\begin{algorithm}[ht]
\caption{Adversary $\cB$ breaking the AHCB property of $\cF$}
\label{algo:soundness b}
\begin{myalgo}
    \Require Key $k \in \cK_\cF$
    \State Sample $r \leftarrow R$
    \State Run $\mathcal{A}$ on the input $(k,r)$, obtaining output $(y,d,c)$
    \State Run $\mathcal{E}^*$ on the input $(k,r)$, obtaining output $(b'',x'')$
    \State Return $(b'',x'',d,c)$
\end{myalgo}
\end{algorithm}
Thus, by definition of $\mathcal{B}$ and the set $H_k$ (see Item \ref{item:ahcb} in \cref{def:ntcf}), it follows that
\begin{multline}
\label{eq:successprobB}
\prss{\kt \leftarrow \genalg_{\mathcal{F}}(1^\lambda)}{ 
    \begin{gathered}
        c = d\cdot(J(x_0)\oplus J(x_1)) \land x = x_{b''} \\[3pt]
        \land\, (x_0,x_1) \in \mathcal{R}_k \land d\in \dset_{k,0,x_0}\cap \dset_{k,1,x_1} \\[3pt]
        \land\, x_{b''}\in\cX_{b''}
    \end{gathered}
} 
 = \prss{\kt \leftarrow \genalg_{\mathcal{F}}(1^\lambda)}{\mathcal{B}(k) \in H_k}\,.
\end{multline}
Combining \crefrange{eq:aux4}{eq:successprobB} concludes the proof of the claim.
\end{claimproof}
Using \cref{claim:advsuccessbound} and the assumption on the success probability of $\cA$ being at least $5/6 + 1/q(\lambda)$, we can conclude that there exists a PPT algorithm $\cB$ such that 
\begin{equation}
    \prss{\kt \leftarrow \genalg_{\mathcal{F}}(1^\lambda)}{\mathcal{B}(k) \in H_k}
    \geq \frac{1}{2} + \frac{3}{q(\lambda)} - \negl(\lambda)\,.
\end{equation}
This constitutes a contradiction with the AHCB property (Item~\ref{item:ahcb} of \cref{def:ntcf}).
Thus, the success probability of $\mathcal{A}$ must be upper bounded by $\frac{5}{6} + \text{negl}(\lambda)$. \qedhere
\end{myproof}

\section{A Doubly Extended Extractable Trapdoor Claw-free Function Family from DDH and KEA}
\label{section:ddhntcf}

In the following, we present an explicit construction for an e$^3$NTCF  family (see \cref{def:e^3ntcf} in \cref{subsection:defe^3ntcf}), which is the cryptographic primitive required for our first single-round proof of quantumness (described in \cref{subsection:3testpoq}). We show that this function family tuple, which we denote as ($\cF_{\rm DDH}, \cG_{\rm DDH},\cH_{\rm DDH} $), fulfills the requirements of an e$^3$NTCF under the DDH assumption (\cref{asmp:ddh}) and the \tlkea~(\cref{assump:tkea}). 

First we introduce the construction of the NTCF $\mathcal{F}_{\rm DDH}$, which is closely based on the DDH-based TCF construction presented in \cite{poqbell} and prove that this function family fulfills the definition. In particular, we prove that $\mathcal{F}_{\rm DDH}$ fulfills the \emph{adaptive hard-core bit} property, which has previously only been shown for TCFs based on LWE~\cite{randomness} and (non-standard) hardness of isogenies~\cite{alamati2022candidate}. Then, we construct the corresponding trapdoor injective function family $\mathcal{G}_{\rm DDH}$ and weak extension $\mathcal{H}_{\rm DDH}$. Finally, we show how we can derive the weak extractability property of $\mathcal{H}_{\rm DDH}$ from the \tlkea.

\subsection{Noisy Trapdoor Claw-Free Family}

We use the DDH-based TCF construction from~\cite{poqbell}, which we briefly summarize.
Let $\lambda\in\N$ be a security parameter. Let $q$ be a $\lambda$-bit prime, $n = 121 \cdot\lceil \log(q) \rceil$ and $d$ an integer such that $d = \Theta(n^2)$. We define the NTCF family $\cF_{\rm DDH}$ as follows\footnote{Note that in the NTCF definition from~\cite{randomness} (see \cref{sec:ntcf}), the functions output probability distributions, i.e.~are of the form $f_{k,b}: \cX \rightarrow \cD_\cY$ (where $\cD_{\cY}$ is the set of probability distributions on a finite set $\cY$).
In our DDH based construction, the output of $f_{k,b}$ will be point distributions, so for simplicity we write $f_{k,b}: \cX \to \cY$.}.

Let $\cX = \mathbb{Z}_d^n$ and $\cY = \mathbb{G}^{n+1}$ for some multiplicative group $\mathbb{G}$. Given a key $k = (\mathbf{G},\mathbf{g})$, where $\mathbf{G} \in \mathbb{G}^{(n+1)\times n}$ and $\mathbf{g} \in \mathbb{G}^{n+1}$, we define the functions $f_{k,b}: \cX \rightarrow \cY$ as
\begin{align*}
    f_{k,0}(\mathbf{x})= \left(\textstyle\prod_j G_{i,j}^{x_j}\right)_i\,, \hspace{20pt}
    f_{k,1}(\mathbf{x})= \left(g_i\cdot\textstyle\prod_j G_{i,j}^{x_j}\right)_i\,.
\end{align*}
For a key of the form $k = (g^\mathbf{A},g^\mathbf{As})$, this is equal to
\begin{align*}
    f_{k,0}(\mathbf{x}) = g^\mathbf{Ax}\,, \hspace{20pt}
    f_{k,1}(\mathbf{x}) = g^\mathbf{A(x+s)}\,.
\end{align*}
Furthermore, we define $\chkalg_{\cF_{\rm DDH}}(k,b,\mathbf{x},\mathbf{y}) \coloneqq \delta_{\mathbf{y},f_{k,b}(\mathbf{x})}\,.$
Let $\groupgen$ be a PPT procedure that, on input $1^\lambda$, outputs an $\lambda$-bit prime $q$, a multiplicative group $\mathbb{G}$ of order $q$ and a generator $g$ of $\mathbb{G}$. We define the key generation and inversion algorithms as in \cref{alg:Gen_F_DDH,alg:inv_F}.
This construction is the same as the DDH-based TCF in \cite{poqbell}, except that for simplicity we use a uniformly distributed key matrix $\mathbf{A}$ (which has full rank with overwhelming probability), instead of specifying a procedure for obtaining a full-rank matrix. The set $\cI_{\cF_{\rm DDH}}$ would thus be given by the key-trapdoor pairs for which $\mathbf{A}$ is a full-rank matrix. 

Furthermore, the sets $\cX_b$ can be chosen as the subsets of $\cX = \Z_d^n$ such that for all $\mathbf{x} \in \cX_b$, all entries of $\mathbf{x}$ are in the range of $1-b$ to $d-b$. This ensures in particular that for all $\mathbf{x} \in \cX_b$, $\mathbf{s} \in \bits^n$, $\mathbf{x} -(-1)^b\mathbf{s} \in \cX$ and for all $\mathbf{x} \in \cX_0 \cap \cX_1$, $\mathbf{x} -(-1)^b\mathbf{s} \in \cX_{b\oplus1}$. Since $|\cX_0 \cap \cX_1|/|\cX| = (1-2/d)^n \rightarrow 1$, $c_\cF$ can be chosen to be 0.99.
\begin{algorithm}[ht]
\caption{Key Generation Procedure $\genalg_{\cF_{\rm DDH}}$}
\label{alg:Gen_F_DDH}
\begin{myalgo}
    \Require Security parameter $\lambda\in\N$.
    \State Sample $(\mathbb{G},q,g) \leftarrow \groupgen(1^\lambda)$.
    \State Sample $\mathbf{A} \leftarrow_U \mathbb{Z}_q^{(n+1)\times n}$, $\mathbf{s} \unifsample \bits^n$.
    \State Compute $g^\mathbf{A},g^\mathbf{As}$ via element-wise exponentiation.
    \State Return key $k = (g^\mathbf{A},g^\mathbf{As})$ and trapdoor $t_k = (g,\mathbf{A},\mathbf{s}).$
\end{myalgo}
\end{algorithm}
\begin{algorithm}[ht]
\caption{Inversion Algorithm $\invalg_{\cF_{\rm DDH}}$}
\label{alg:inv_F}
\begin{myalgo}
\Require Trapdoor data $t_k = (g, \mathbf{A}, \mathbf{s})$, a bit $b \in \bits$ and a value $\mathbf{y} \in \cY$.
    \Ensure $\mathbf{x}_b \in \cX$ such that $\mathbf{y} = g^{\mathbf{A}(\mathbf{x}_b+b\mathbf{s})}$ and  or message ``$\mathtt{error}$''.
    \State Compute pseudo-inverse $\mathbf{A}^{-1}$ using $\mathbf{A}$. If $\mathbf{A}$ is not full-rank, return $\mathtt{error}$.
    \State Compute $\mathbf{z} = \left(g^{-bs_i}\prod_j y_j^{A_{i j}^{-1}}\right)_{i=1}^n.$
    \State Try to find $\mathbf{x} \in \cX$ such that $\mathbf{z} = g^{\mathbf{x}}$, by brute force. If this procedure fails, return $\mathtt{error}$, else return $\mathbf{x}$. (Note that this can be done efficiently since $d = \textup{poly}(n)$.) 
\end{myalgo}
\end{algorithm}

It was shown in~\cite{poqbell} that this construction is a TCF, i.e.~that it is hard to find collisions.
However, it was not proven that this construction also satisfies the (much stronger) AHCB property.
We show this fact using a lossy sampling argument similar to the one used in the AHCB proof in~\cite{randomness}.
As the proof is somewhat technical, we defer it to \cref{app:ddh_ntcf}.
\begin{lemma} \label{lem:ddh_ahcb}
    $\cF_{\rm DDH}$ is a noisy trapdoor claw-free function family under the assumption that $\groupgen$ fulfills the DDH assumption.
\end{lemma}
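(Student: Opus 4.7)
The plan is to verify the four NTCF properties in \cref{def:ntcf} for $\cF_{\rm DDH}$ in turn, with the bulk of the work going into the adaptive hardcore bit (AHCB) property.

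For Efficient Function Generation, Trapdoor Injective Pair, and Efficient Range Superposition, most of the work can be read off directly from the construction. The algorithms $\genalg_{\cF_{\rm DDH}}$ and $\invalg_{\cF_{\rm DDH}}$ in \cref{alg:Gen_F_DDH,alg:inv_F} are polynomial time, and a uniformly random $\mathbf{A} \in \mathbb{Z}_q^{(n+1) \times n}$ has rank $n$ except with probability negligible in $\lambda$, so the good-key set $\cI_{\cF_{\rm DDH}}$ can be taken to be the full-rank key--trapdoor pairs. The injective-pair structure follows from $f_{k,0}(\mathbf{x}_0) = f_{k,1}(\mathbf{x}_0 - \mathbf{s})$ together with the paper's choice of $\cX_b$, which ensures $\mathbf{x}_0 - \mathbf{s} \in \cX_1$ whenever $\mathbf{x}_0 \in \cX_0 \cap \cX_1$ and yields $\injconstant{\cF} = 0.99$. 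Because the outputs of $f_{k,b}$ are point distributions, I would set $f'_{k,b} = f_{k,b}$, which makes the Hellinger distance in \cref{eq:hellinger ntcf} identically zero, while $\chkalg_\cF$ and $\sampalg_\cF$ are both implementable by coherent in-place group exponentiation on $g^{\mathbf{A}}$.

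The heart of the proof is AHCB, for which I would follow the lossy-sampling template used in the LWE-based proof of~\cite{randomness}. Fix a rank parameter $r$ with $1 \leq r < n$ to be chosen later. The plan is to consider a lossy key distribution in which $\mathbf{A}$ is sampled uniformly from $\rankedmats_r(\mathbb{Z}_q^{(n+1) \times n})$ rather than uniformly over all of $\mathbb{Z}_q^{(n+1) \times n}$. By the matrix $d$-linear assumption (\cref{asmp:matrixdlin}, which for $d=1$ is precisely \cref{asmp:ddh}), $g^{\mathbf{A}}$, and hence the full key $(g^{\mathbf{A}}, g^{\mathbf{A}\mathbf{s}})$, is computationally indistinguishable between the honest and lossy distributions. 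Consequently, for any PPT adversary $\cA$ outputting $(b,\mathbf{x},d,c)$, both $\Pr[\cA(k) \in H_k]$ and $\Pr[\cA(k) \in \overline{H}_k]$ change by at most a negligible amount when passing to the lossy distribution, so it suffices to establish AHCB there. In the lossy setting, $\mathbf{A}$ has a kernel of $\mathbb{Z}_q$-dimension $n-r$, so given $(\mathbf{A}, \mathbf{A}\mathbf{s})$ the secret $\mathbf{s} \in \bits^n$ is not uniquely determined. A standard entropy computation, leveraging $n = 121 \lceil \log q \rceil$, shows that for sufficiently small $r$ the conditional distribution of $\mathbf{s}$ retains $\Omega(\lambda)$ bits of min-entropy on average over the key. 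Combined with a leftover-hash-style argument on the binary injection $J: \mathbb{Z}_d^n \to \bits^w$, this entropy should suffice to argue that, for any $(\mathbf{x}_0, d)$ derivable from the adversary's output (using the trapdoor to reconstruct $\mathbf{x}_0 = \mathbf{x}_b + b \mathbf{s}$), the hardcore bit $d \cdot (J(\mathbf{x}_0) \oplus J(\mathbf{x}_0 - \mathbf{s}))$ is statistically close to uniform over the posterior of $\mathbf{s}$. This gives $\Pr[\cA(k) \in H_k] - \Pr[\cA(k) \in \overline{H}_k] = \negl(\lambda)$ in the lossy setting and hence in the honest setting.

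The main obstacle I anticipate lies in this last step. Because $\cX = \mathbb{Z}_d^n$ rather than $\bits^n$, the dependence of $J(\mathbf{x}_0) \oplus J(\mathbf{x}_0 - \mathbf{s})$ on $\mathbf{s}$ is non-linear: it is governed by the carries of the base-$d$ subtraction within each coordinate. I would therefore instantiate $J$ concretely---most naturally encoding each $\mathbb{Z}_d$-coordinate in its binary representation with a small amount of zero-padding, as in~\cite{poqbell}---and then analyse the induced map $\mathbf{s} \mapsto d \cdot (J(\mathbf{x}_0) \oplus J(\mathbf{x}_0 - \mathbf{s}))$, showing that for any non-zero $d$ it is sufficiently balanced over the posterior of $\mathbf{s}$. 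This is the DDH analogue of the adaptive hardcore bit lemma in~\cite{randomness}, and is where the bulk of the technical work would lie; once that piece is in place, the rest of the argument is essentially an assembly of the computational-indistinguishability and entropy steps described above.
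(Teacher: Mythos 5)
Your proposal follows essentially the same strategy as the paper: verify the first three properties by direct inspection, and prove AHCB via a lossy-sampling argument exactly as in~\cite{randomness}, with the key replaced (under matrix $1$-linear / DDH) by a low-rank key. The paper specifically uses rank~$1$, i.e.\ $\mathbf{A} = \mathbf{u}\mathbf{v}^T$, which is the cleanest choice: given $(g^{\mathbf{u}\mathbf{v}^T}, g^{\mathbf{u}\mathbf{v}^T\mathbf{s}})$ the only information leaked about $\mathbf{s}$ is the single scalar $\mathbf{v}^T\mathbf{s}$, so the entropy analysis amounts to a direct application of Lemma~4.6 of~\cite{randomness} (reproduced as \cref{lemma:modmat}) rather than a general leftover-hash argument. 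Your observation that the carry structure of $J(\mathbf{x}_0) \oplus J(\mathbf{x}_0 - \mathbf{s})$ is where the linear-algebra-over-$\mathbb{F}_2$ reduction requires care is exactly right; the paper inherits the linearization $I_{b,\mathbf{x}}(\mathbf{d})$ from Lemma~4.4 of~\cite{randomness}, which was designed precisely to precompute these carries and turn the hardcore bit into an $\mathbb{F}_2$-linear function of $\mathbf{s}$, so no new machinery is needed there.

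There is, however, one genuine step you have not addressed. Unlike the LWE construction where $\cX = \bits^n$ coincides with the ambient domain, here $\cX = \mathbb{Z}_d^n$ with $d = \poly(n)$ is an exponentially small fraction of $\mathbb{Z}_q^n$, and the sets $H_k, \overline{H}_k$ in \cref{def:ntcf} explicitly require the adversary's preimage $x_b$ to lie in $\cX_b$. The lossy-sampling argument naturally produces an AHCB-type statement over the \emph{unrestricted} domain $\mathbb{Z}_q^n$ (\cref{lemma:ahcbside}). One still needs to show that this implies the restricted-domain statement, and conversely that the restricted-domain sets match the $H_k$ of the NTCF definition. The paper handles this with two short but necessary additional lemmas (\cref{lemma:ahcbresdom,lemma:setcorr}): given an adversary that distinguishes $H_k'$ from $\overline{H}_k'$ (restricted), build an adversary against the unrestricted version by outputting a fresh random $c'$ whenever $\mathbf{x} \notin \cX_b$, and then verify the set identity $H_\mathbf{s}' = H_k$ directly from the construction of $\cR_k$ and $\cX_b$. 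Your proposal should include this bridging step to be complete.
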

\subsection{Trapdoor Injective Family}

To obtain an injective invariant family, we set $\cB_{\cG_{\rm DDH}} = \mathbb{Z}_q$ and $\cX_{\cG_{\rm DDH}} = \Z_q^n$.
For $k \in \mathbb{G}^{(n+1)\times(n+1)}$ and $b \in \cB_{\cG_{\rm DDH}}$, we define
\begin{equation*}
    g_{k,b}(\mathbf{x}) = \left(g_i^b\cdot\textstyle\prod_j G_{i,j}^{x_j}\right)_i\,,
\end{equation*}
which, for a key of the form $k = (g^\mathbf{A},g^\mathbf{u})$ ( where $\mathbf{A}\in \mathbb{Z}_q^{(n+1)\times n}$ and $\mathbf{u} \in \Z_q^{n+1}$), is equal to
\begin{equation*}
    g_{k,b}(\mathbf{x}) = g^{\mathbf{Ax}+b\mathbf{u}}\,.
\end{equation*}
$\chkalg_{\cF_{\rm DDH}}$ can be naturally extended to keys $k \in \cK_{\cG_{\rm DDH}}$ and values $b \in \cB_{\cG_{\rm DDH}}, x \in \cX_{\cG_{\rm DDH}}$, and we set $\chkalg_{\cG_{\rm DDH}} = \chkalg_{\cF_{\rm DDH}}$. 

The key generation and inversion algorithms of $\cG_{\rm DDH}$ are defined in \cref{alg:Gen_G_DDH} and \cref{alg:inv_G}, respectively.
\begin{algorithm}[ht]
\caption{Key Generation Procedure $\genalg_{\cG_{\rm DDH}}$}
\label{alg:Gen_G_DDH}
\begin{myalgo}
    \Require Security parameter $\lambda\in\N$.
    \State  Sample $(\mathbb{G},q,g) \leftarrow \groupgen(1^\lambda)$.
    \State Sample $\mathbf{\Tilde{A}} \leftarrow_U \mathbb{Z}_q^{(n+1)\times (n+1)}$.
    \State Compute $g^\mathbf{\Tilde{A}}$ via element-wise exponentiation.
    \State Return key $k = g^\mathbf{\Tilde{A}}$ and trapdoor $t_k = (g,\mathbf{\Tilde{A}}).$
\end{myalgo}
\end{algorithm}
\begin{algorithm}
\caption{Inversion Algorithm $\invalg_{\cG_{\rm DDH}}$}
\label{alg:inv_G}
\begin{myalgo}
    \Require Trapdoor data $t_k = (g, \mathbf{\Tilde{A}})$ and a value $\mathbf{y} \in \cY = \mathbb{G}^{n+1}$.
    \Ensure Tuple $(\mathbf{x},b) \in \cX \times \bits $ such that $\mathbf{y} = g^{\mathbf{\Tilde{A}}\binom{\mathbf{x}}{b}}$ or error message ``$\mathtt{error}$''.
    \State  Compute $\mathbf{\Tilde{A}}^{-1}$ using $\mathbf{\Tilde{A}}$. If $\mathbf{\Tilde{A}}$ is not full-rank, return $\mathtt{error}$.
    \State Compute $\mathbf{z} = \left(\prod_j y_j^{\Tilde{A}^{-1}_{i j}}\right)_{i=1}^{n+1}$.
    \State Try to find $(\mathbf{x},b) \in \cX\times\bits$ such that $\mathbf{z} = g^{\binom{\mathbf{x}}{b}}$, by brute force.
    If this procedure fails, return $\mathtt{error}$, else return ($\mathbf{x},b$).
    (Note that this can be done efficiently since $d = $poly$(n)$.)
\end{myalgo}
\end{algorithm}
\begin{lemma}  
\label{lemma:tif2}
 $\mathcal{G}_{\rm DDH}$ is a trapdoor injective family.
\end{lemma}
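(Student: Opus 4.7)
The plan is to verify each of the four conditions listed in \cref{def:trapdoor injective function family} for the family $\cG_{\rm DDH}$. Efficient function generation is immediate from \cref{alg:Gen_G_DDH}: sampling a uniform matrix $\tilde{\mathbf A} \in \mathbb{Z}_q^{(n+1)\times(n+1)}$ and computing $g^{\tilde{\mathbf A}}$ entry-wise takes time polynomial in $\lambda$.

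For the disjoint trapdoor injective pair property, I would set $\cI_{\cG_{\rm DDH}}$ to be the set of key-trapdoor pairs where $\tilde{\mathbf A}$ has full rank over $\mathbb{Z}_q$. A standard counting argument shows that a uniformly random matrix in $\mathbb{Z}_q^{(n+1)\times(n+1)}$ has full rank with probability $\prod_{i=0}^{n}(1-q^{i-n-1}) \geq 1 - O(1/q) = 1 - 2^{-\Omega(\lambda)}$, so \cref{eq:trapdoor injective bad keys} holds. For any $(k,t_k) \in \cI_{\cG_{\rm DDH}}$, the affine map $(\mathbf x, b) \mapsto \tilde{\mathbf A}\binom{\mathbf x}{b}$ is a bijection on $\mathbb{Z}_q^{n+1}$, so the point distributions $g_{k,b}(\mathbf x) = g^{\mathbf{Ax}+b\mathbf u}$ have pairwise disjoint supports over the full extended domain $\cB_{\cG_{\rm DDH}} \times \cX_{\cG_{\rm DDH}} = \mathbb{Z}_q \times \mathbb{Z}_q^n$. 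For inversion, \cref{alg:inv_G} computes $\tilde{\mathbf A}^{-1}$ by Gaussian elimination and recovers $g^{\binom{\mathbf x}{b}}$, after which the component-wise discrete logarithm from $g^{x_i}$ back to $x_i$ is obtained by brute force over $\mathbb{Z}_d$ per coordinate and $\bits$ for the last coordinate; this is polynomial since $d = \mathrm{poly}(n)$. When $\mathbf y$ has a preimage in $\bits \times \cX$, the bijectivity ensures this preimage is uniquely recovered.

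For efficient range superposition, the check function $\chkalg_{\cG_{\rm DDH}}$ is explicitly set to $\chkalg_{\cF_{\rm DDH}}$, which computes $g^{\mathbf{Ax}+b\mathbf u}$ by square-and-multiply and compares with $\mathbf y$, all in polynomial time and without requiring the trapdoor. The sampling procedure $\sampalg_{\cG_{\rm DDH}}$ prepares a uniform superposition over $\cX$ in the $\sX$ register and coherently computes $g_{k,b}(\mathbf x)$ into the $\sY$ register using reversible arithmetic circuits for group exponentiation; since $g_{k,b}(\mathbf x)$ is a point distribution, the resulting state is exactly the target state of \cref{def:trapdoor injective function family}.

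The only mildly subtle step is the bookkeeping between the extended domain $\cB_{\cG_{\rm DDH}} \times \cX_{\cG_{\rm DDH}}$, on which disjoint injectivity must be established, and the restricted domain $\bits \times \cX$, on which $\invalg_{\cG_{\rm DDH}}$ is required to return preimages. This is handled uniformly by the observation that full rank of $\tilde{\mathbf A}$ yields injectivity on all of $\mathbb{Z}_q^{n+1}$, so the inversion algorithm can safely restrict its brute-force search to $\bits \times \mathbb{Z}_d^n$ and return $\mathtt{error}$ on inputs whose unique preimage falls outside that set. No DDH assumption is used in this lemma itself; the computational indistinguishability between $\genalg_{\cG_{\rm DDH}}$ and $\genalg_{\cF_{\rm DDH}}$ required for injective invariance will be proved separately using the matrix $d$-linear assumption.
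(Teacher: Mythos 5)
Your proof is correct and follows essentially the same approach as the paper's: you verify the three conditions of \cref{def:trapdoor injective function family}, take $\cI_{\cG_{\rm DDH}}$ to be the set of key-trapdoor pairs with invertible $\tilde{\mathbf A}$ (the paper cites \cref{lemma:rudm} for the overwhelming-probability bound whereas you give the direct counting argument, but these are interchangeable), derive disjoint supports and inversion from invertibility, and reduce efficient range superposition to $\chkalg_{\cG_{\rm DDH}}=\chkalg_{\cF_{\rm DDH}}$ and the efficient computability of the underlying functions. One trivial slip: the definition has three numbered conditions, not four.
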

\begin{myproof}
~
\begin{myenumi}
    \item \textbf{Efficient Function Generation.} This condition holds since all steps of the key generation procedure $\genalg_{\cG_{\rm DDH}}$ are efficient. 
    \item \textbf{Disjoint Trapdoor Injective Pair.} Let $\mathcal{I}_{\cG_{\rm DDH}}$ be the set of key-trapdoor pairs such that $\mathbf{\Tilde{A}}$ is invertible. By \cref{lemma:rudm}, 
    $$
    \text{Pr}_{\kt \leftarrow \genalg_{\mathcal{G}_{\rm DDH}}(1^\lambda)}\left[(k,t_k)\notin\mathcal{I}_{\cG_{\rm DDH}}\right]= \text{negl}(\lambda)\;.$$ 
    First, note that all steps of $\invalg_{\cG_{\rm DDH}}$ are efficient. Furthermore, for all $(k,t_k) \in \mathcal{I}_{\cG_{\rm DDH}}$, it is easy to check that for all $b \in \bits$, $\mathbf{x} \in \cX$ and $\mathbf{y} \in \supp(g_{k,b}(\mathbf{x}))$ it holds that $\invalg_{\cG_{\rm DDH}}(t_k,\mathbf{y}) = (\mathbf{x},b)$. The condition of disjoint supports of $g_{k,b}(\mathbf{x})$ for all $(\mathbf{x},b) \in \cX_\cG \times \cB_\cG$ follows from the fact that $\mathbf{\Tilde{A}}$ is invertible.
    \item \textbf{Efficient Range Superposition.} This is evident by the definition of $\chkalg_{\cG_{\rm DDH}} = \chkalg_{\cF_{\rm DDH}}$ and the fact that the functions $f_{k,b}$ are efficiently computable. \qedhere
\end{myenumi} 
\end{myproof}

To show that $\cF_{\rm DDH}$ is an injective invariant family with associated trapdoor injective family $\cG_{\rm DDH}$, it remains to show that the key distributions of $\cF_{\rm DDH}$, $\cG_{\rm DDH}$ are computationally indistinguishable. This is a direct consequence of the indistinguishability of the distributions $D_0$ and $D_2$ from Lemma \ref{lemma:comp_ind2} for the case $a = n+1, b = n$. 
Thus, we can conclude:
\begin{lemma}
    $\cF_{\rm DDH}$ is an injective invariant family (with associated trapdoor injective family $\cG_{\rm DDH}$) under the assumption that $\groupgen$ fulfills the DDH assumption. 
\end{lemma}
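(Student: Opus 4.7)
\begin{myproof}[Proof plan]
The plan is to verify directly the two items of \cref{def:injinv}. The first item, namely $\chkalg_{\cF_{\rm DDH}} = \chkalg_{\cG_{\rm DDH}}$ and $\sampalg_{\cF_{\rm DDH}} = \sampalg_{\cG_{\rm DDH}}$, follows by inspection of the constructions: $\chkalg_{\cG_{\rm DDH}}$ was explicitly \emph{defined} to equal $\chkalg_{\cF_{\rm DDH}}$, and both sampling procedures implement the same coherent exponentiation circuit that on input $(k,b)$ prepares $|\cX|^{-1/2}\sum_{\mathbf x \in \cX} \ket{\mathbf x}\ket{f_{k,b}(\mathbf x)}$ where the computation $\mathbf x \mapsto g_i^b \prod_j G_{i,j}^{x_j}$ agrees with the $\cF_{\rm DDH}$ formula when $b \in \{0,1\}$ and the ``$g$ component'' of the key is interpreted appropriately.

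The main step is item~\ref{item:inj inv key indist}: the computational indistinguishability of the marginal key distributions of $\genalg_{\cF_{\rm DDH}}$ and $\genalg_{\cG_{\rm DDH}}$. I would first rewrite the $\cF_{\rm DDH}$-key $(g^{\mathbf A}, g^{\mathbf A \mathbf s})$ as $g^{[\mathbf A \mid \mathbf A \mathbf s]}$, viewing it as an element of $\mathbb{G}^{(n+1)\times (n+1)}$. The exponent matrix $[\mathbf A \mid \mathbf A \mathbf s]$ has rank at most $n$, and, with overwhelming probability over $\mathbf A \unifsample \mathbb Z_q^{(n+1)\times n}$, exactly $n$ (this is \cref{lemma:rudm}). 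On the other hand, the $\cG_{\rm DDH}$-key $g^{\tilde{\mathbf A}}$ with $\tilde{\mathbf A}\unifsample\mathbb{Z}_q^{(n+1)\times(n+1)}$ is, again by \cref{lemma:rudm}, statistically close to a uniform sample $g^{\mathbf R}$ with $\mathbf R \in \rankedmats_{n+1}(\mathbb Z_q^{(n+1)\times(n+1)})$.

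The remaining task is to bridge between rank-$n$ and rank-$(n+1)$ matrices in the exponent, which is exactly the content of the matrix $d$-linear assumption (\cref{asmp:matrixdlin}) with parameters $a = b = n+1$, $i = n$, $j = n+1$; this assumption follows from DDH through the implications recorded right after \cref{asmp:matrixdlin} (DDH $\Rightarrow$ $d$-linear for all $d \geq 1$ $\Rightarrow$ matrix $d$-linear). Concatenating these statistical and computational indistinguishabilities via a short hybrid argument (uniform $(g^{\mathbf A}, g^{\mathbf A \mathbf s})$ $\approx_s$ rank-$n$ exponent matrix $\approx_c$ rank-$(n+1)$ exponent matrix $\approx_s$ uniform $g^{\tilde{\mathbf A}}$) shows that the two key distributions are computationally indistinguishable, which is precisely what the paper indicates by referring to $D_0$ and $D_2$ of \cref{lemma:comp_ind2}. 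Combined with \cref{lemma:tif2} (establishing that $\cG_{\rm DDH}$ is itself a trapdoor injective family) and \cref{lem:ddh_ahcb} (establishing that $\cF_{\rm DDH}$ is an NTCF), this yields the claim.

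I do not foresee a serious obstacle: once item~1 of \cref{def:injinv} is dispatched by inspection, the whole result reduces to already-established consequences of DDH, and the only mild subtlety is making sure the exponent matrix on the $\cF_{\rm DDH}$ side truly has rank exactly $n$ with overwhelming probability so that the matrix $d$-linear step applies cleanly.
\end{myproof}
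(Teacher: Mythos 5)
Your verification of item~1 of \cref{def:injinv} is fine, and your high-level plan of recasting the $\cF_{\rm DDH}$ key as a matrix $g^{[\mathbf A\mid\mathbf A\mathbf s]}\in\mathbb G^{(n+1)\times(n+1)}$ and then trying to bridge to a uniform matrix in the exponent is also the right framing. However, the first link in your hybrid chain is broken: the claim that
$(g^{\mathbf A}, g^{\mathbf A\mathbf s})\approx_s g^{\bR}$ for $\bR\unifsample\rankedmats_n(\mathbb Z_q^{(n+1)\times(n+1)})$
is false. The concatenated matrix $[\mathbf A\mid\mathbf A\mathbf s]$ always has the property that its last column is a $\{0,1\}$-coefficient combination of the first $n$ columns (because $\mathbf s\in\{0,1\}^n$), whereas a uniform rank-$n$ matrix in $\mathbb Z_q^{(n+1)\times(n+1)}$ has, with overwhelming probability, its last column as a $\mathbb Z_q$-combination with uniform coefficients of the (generically linearly independent) first $n$ columns. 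These two distributions on the exponent matrix are statistically far apart (they disagree on an event of probability $\approx 1$ vs.\ $\approx 2^n/q^n$), so there is no statistical hybrid here. As a consequence, the subsequent appeal to the matrix $d$-linear assumption with parameters $(i,j)=(n,n+1)$ does not apply: that assumption speaks only about \emph{uniform} rank-$i$ versus rank-$j$ matrices, and you have not placed yourself in that regime.

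This is exactly the point that the paper's \cref{lemma:comp_ind2} is built to handle, and it is why its hybrid chain detours through rank~1. The crucial step there is not the direct rank-$n\leftrightarrow$rank-$(n+1)$ jump you propose but rather: (a) first use matrix 1-linear to replace $\mathbf A$ (not the concatenated matrix) by a rank-1 matrix $\bu\bv^T$, which is legitimate because the reduction can sample $\mathbf s$ itself and compute $g^{\mathbf A\mathbf s}$ from $g^{\mathbf A}$; (b) at rank~1, write $\mathbf A\mathbf s = \bu(\bv^T\mathbf s)$ and invoke \cref{lemma:modmat}, a leftover-hash-lemma-type statement, to argue that $\bv^T\mathbf s$ with $\bv\unifsample\mathbb Z_q^n$ and $\mathbf s\unifsample\{0,1\}^n$ is statistically close to uniform over $\mathbb Z_q$; (c) only then does the concatenated exponent matrix become a genuinely uniformly random rank-1 matrix in $\mathbb Z_q^{(n+1)\times(n+1)}$, from which one can climb back up to full rank and then to uniform via matrix 1-linear and \cref{corr:rudm}. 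Your sketch omits step~(b) entirely, and there is no way to replace the $\{0,1\}$-valued secret by a uniform $\mathbb Z_q$-valued one directly at rank $n$ using the tools available. So the missing idea is the round trip through rank~1 together with the LHL-type \cref{lemma:modmat}; once that is in place, the rest of your hybrid argument is fine.
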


\subsection{Weak Extension}
\label{section:ddh_h}

In this section, we present a construction for a weak extension $\cH_{\rm DDH}$ of $\cF_{\rm DDH}$, which we can show also fulfills the weak extractability property under the $(n(\lambda)+1)$-KEA.
The key generation algorithm of $\cH_{\rm DDH}$ is defined in \cref{alg:Gen_H_DDH}.
\begin{algorithm}[ht]
\caption{Key Generation Procedure $\genalg_{\cH_{\rm DDH}}$}
\label{alg:Gen_H_DDH}
\begin{myalgo}
    \Require Security parameter $\lambda\in\N$.
    \State Sample $(\mathbb{G},q,g) \leftarrow \groupgen(1^\lambda)$.
    \State Sample $\mathbf{v} \leftarrow_U \mathbb{Z}_q^{n+1}, \mathbf{u} \leftarrow_U \mathbb{Z}_q^n$ and compute $\mathbf{\Tilde{A}}= \scalebox{0.8}{$\begin{pmatrix}1 \\ \mathbf{u}\end{pmatrix}$}\mathbf{v}^T$.
    \State Compute $g^\mathbf{\Tilde{A}}$ via element-wise exponentiation.
    \State Return key $k = g^\mathbf{\Tilde{A}}$ and trapdoor $t_k = (g,\mathbf{u}).$
\end{myalgo}
\end{algorithm}
To obtain a weak extension, we set $\cB_{\cH_{\rm DDH}} = \cB_{\cG_{\rm DDH}} = \mathbb{Z}_q$ and $\cX_{\cH_{\rm DDH}} = \cX_{\cG_{\rm DDH}} = \mathbb{Z}_q^n$ and for all $k \in \mathbb{G}^{(n+1)\times(n+1)}$ and $b \in \cB_{\cH_{\rm DDH}}$ define $h_{k,b}$ as
\begin{equation*}
    h_{k,b}(\mathbf{x}) = \left(g_i^b\cdot\textstyle\prod_j G_{i,j}^{x_j}\right)_i \,,
\end{equation*} 
which, for a key of the form $k = (g^\mathbf{A},g^\mathbf{u})$, is equal to
\begin{equation*}
    h_{k,b}(\mathbf{x}) = g^{\mathbf{Ax}+b\mathbf{u}} \,.
\end{equation*}
$\chkalg_{\cF_{\rm DDH}}$ can be naturally extended to keys $k \in \cK_{\cH_{\rm DDH}}$ and values $(b,x) \in \cB_{\cH_{\rm DDH}}\times\cX_{\cH_{\rm DDH}}$ via $\chkalg_{\cF_{\rm DDH}}(k,b,\mathbf{x},\mathbf{y})\coloneqq \delta_{\mathbf{y},h_{k,b}(\mathbf{x})}$.
Furthermore, we define the image check function as in \cref{alg:ImCHK_H_DDH}. 
\begin{algorithm}[ht]
\caption{Image Check Algorithm $\imchkalg_{\cH_{\rm DDH}}$}
\label{alg:ImCHK_H_DDH}
\begin{myalgo}
\Require  Trapdoor data $t_k = (g,\mathbf{u})$ and a value $\mathbf{y} \in \cY = \mathbb{G}^{n+1}$.
\Ensure $c \in \bits$.
\State  Check if $\mathbf{y}$ is of the form $\mathbf{y} = \scalebox{0.65}{$\begin{pmatrix}y_1\\y_1^\mathbf{u}\end{pmatrix}$}$.
\end{myalgo}
\end{algorithm}

We show this definition of $\cH_{\rm DDH}$ is a weak extension of $\cF_{\rm DDH}$.
\begin{lemma}  
\label{lemma:wif}
 $\mathcal{H}_{\rm DDH}$ is a weak extension of  $\mathcal{F}_{\rm DDH}$ (as defined in \cref{def:weakinv}) under the assumption that $\groupgen$ fulfills the DDH assumption.
\end{lemma}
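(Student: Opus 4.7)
The plan is to verify the three conditions of \cref{def:weakinv} for $\cH_{\rm DDH}$ as a weak extension of $\cF_{\rm DDH}$. Efficient function generation is immediate from inspection of \cref{alg:Gen_H_DDH}: sampling $\mathbf{v} \unifsample \Z_q^{n+1}$ and $\mathbf{u} \unifsample \Z_q^n$, forming the rank-$1$ matrix $\tilde{\mathbf{A}} = \binom{1}{\mathbf{u}}\mathbf{v}^T$, and exponentiating entry-wise to obtain $g^{\tilde{\mathbf{A}}}$ are all PPT. Efficient range superposition reduces to noting that $h_{k,b}(\mathbf{x}) = g^{\mathbf{A}\mathbf{x}+b\mathbf{u}}$ is defined by exactly the same algebraic formula as $f_{k,b}$ and $g_{k,b}$, differing only in the distribution on the key. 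Thus the natural extensions of $\chkalg_{\cF_{\rm DDH}}$ to inputs $b \in \Z_q, \mathbf{x} \in \Z_q^n$ and of $\sampalg_{\cF_{\rm DDH}}$ to keys $k \in \cK_{\cH_{\rm DDH}}$ perform the required check and superposition preparation, inheriting efficiency from the $\cF_{\rm DDH}$ case.

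The main technical step is item~\ref{item:weak inv indkey} of \cref{def:weakinv}, the computational indistinguishability of the key distributions $\genalg_{\cF_{\rm DDH}}$ and $\genalg_{\cH_{\rm DDH}}$. I would obtain this by a transitivity argument through the $\cG_{\rm DDH}$ key distribution:
$$
  \genalg_{\cF_{\rm DDH}} \;\approx_c\; \genalg_{\cG_{\rm DDH}} \;\approx_c\; \genalg_{\cH_{\rm DDH}}\,.
$$
The first indistinguishability was already established in the proof that $\cF_{\rm DDH}$ is injective invariant, where it is shown via the matrix $d$-linear assumption (which by \cref{asmp:matrixdlin} follows from DDH) applied through \cref{lemma:comp_ind2} with $a = n+1$, $b = n$. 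For the second indistinguishability, I would appeal to \cref{asmp:matrixdlin} with parameters $a = b = n+1$, $i = 1$, $j = n+1$: this tells us that $g^\mathbf{R}$ for $\mathbf{R}$ uniform over rank-$1$ matrices in $\Z_q^{(n+1)\times(n+1)}$ is computationally indistinguishable from $g^{\tilde{\mathbf{A}}}$ for $\tilde{\mathbf{A}} \unifsample \Z_q^{(n+1)\times(n+1)}$, which matches $\genalg_{\cG_{\rm DDH}}$ up to the negligible probability of rank deficiency.

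The step I expect to require the most care is the translation between the "uniform rank-$1$ matrix'' distribution that appears naturally in \cref{asmp:matrixdlin} and the specific rank-$1$ distribution $\tilde{\mathbf{A}} = \binom{1}{\mathbf{u}}\mathbf{v}^T$ used by $\genalg_{\cH_{\rm DDH}}$, in which the first entry of the left-hand factor is fixed to $1$. I would handle this statistically: a uniform rank-$1$ matrix can be written as $\mathbf{a}\mathbf{v}^T$ with $\mathbf{a}, \mathbf{v} \unifsample \Z_q^{n+1}$, and conditioned on $a_1 \neq 0$ (which fails with probability only $1/q = 2^{-\Omega(\lambda)}$) the rescaling $\mathbf{u} \deq (a_2/a_1, \ldots, a_{n+1}/a_1)$, $\mathbf{v}' \deq a_1 \mathbf{v}$ gives exactly the distribution of \cref{alg:Gen_H_DDH}, and $\mathbf{v}'$ is still uniform in $\Z_q^{n+1}$. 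Hence the two distributions are statistically indistinguishable, and chaining with the two matrix-$d$-linear steps above yields the required computational indistinguishability of $\genalg_{\cF_{\rm DDH}}$ and $\genalg_{\cH_{\rm DDH}}$. The remaining conditions having been verified directly, this completes the proof.
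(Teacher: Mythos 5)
Your proof is correct and takes essentially the same approach as the paper: efficient function generation and range superposition are immediate from the definitions, and the main work is the key indistinguishability, which you reduce to the matrix~$d$-linear assumption (rank-$1$ versus full-rank matrices in the exponent, implied by DDH) plus a statistical rescaling from uniform rank-$1$ matrices to those of the form $(1,\mathbf{u})\mathbf{v}^T$. Where the paper simply invokes the indistinguishability of $D_0$ and $D_3$ from \cref{lemma:comp_ind2} (with $a=n+1$, $b=n$), you re-derive the same chain of hybrids by routing explicitly through the $\cG_{\rm DDH}$ key distribution, which is precisely what \cref{lemma:comp_ind2} does internally via its intermediate $D_2$; the only other difference is that the paper's proof additionally verifies the Image Check condition, but as that condition belongs to \cref{def:weak extractability} rather than to \cref{def:weakinv}, your omission of it is not a gap in the stated lemma.
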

\begin{myproof}
~
\begin{myenumi}
    \item \textbf{Efficient Function Generation.} This condition holds since all steps of the key generation procedure $\genalg_{\cH_{\rm DDH}}$ are efficient. 
    \item \textbf{Efficient Range Superposition.} This is evident from the definition of $h_{k,b}$ and the definition of the functions $\chkalg_{\mathcal{F}_{\rm DDH}}$ and SAMP$_{\mathcal{F}_{\rm DDH}}$.
    \item \textbf{Image Check.} All steps of $\imchkalg_{\cH_{\rm DDH}}$ are efficient. Furthermore one can easily check that for all $(k, t_k) \in \cK_{\cH_{\rm DDH}}\times\cT_{\cH_{\rm DDH}}, \mathbf{x} \in \cX_{\cH_{\rm DDH}}, b \in \cB_{\cH_{\rm DDH}}, \imchkalg_{\cH_{\rm DDH}}(t_k,h_{k,b}(\mathbf{x})) = 1$.
    \item \textbf{Indistinguishability of Keys.} This follows from the indistinguishability of the distributions $D_0$ and $D_3$ in \cref{lemma:comp_ind2}. \qedhere
\end{myenumi} 
\end{myproof}

\subsection{Weak Extractability Property from KEA Assumption}

Using the knowledge of exponent assumption, we can show that $\cH_{\rm DDH}$ satisfies the weak extractability property from~\cref{def:weak extractability}.

\begin{lemma}[Weak Extractability]
\label{lem:ddh_weak_extract}
$\mathcal{H}_{\rm DDH}$ fulfills the weak extractability property under the assumption that $\groupgen$ fulfills the DDH assumption and the $(n(\lambda)+1)$-KEA.
\end{lemma}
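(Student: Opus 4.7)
The plan is to reduce weak extractability of $\cH_{\rm DDH}$ to the $(n(\lambda)+1)$-KEA (\cref{assump:tkea}). The key observation is that $\imchkalg_{\cH_{\rm DDH}}(t_k,\mathbf{y}) = 1$ iff $\mathbf{y} = (y_1, y_1^{u_1}, \ldots, y_1^{u_n})^T$, so in particular the pair $(y_1, y_2) = (y_1, y_1^{u_1})$ is a valid KEA output with exponent $\alpha = u_1$ and base vector $\mathbf{r} = \mathbf{v}$; moreover $g^{\mathbf{v}}$ and $g^{u_1 \mathbf{v}}$ are directly readable as the first two rows of $k$, so both primary KEA inputs are derivable from $k$ alone, which is what makes the $(n+1)$-KEA applicable to any adversary that produces a $\mathbf{y}$ of this form.

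Concretely, given a PPT adversary $\cA$ for $\cH_{\rm DDH}$, I would construct a KEA adversary $\cA_{\rm KEA}$ as follows: on input $(g^{\mathbf{r}}, g^{\alpha \mathbf{r}}, z, r_\cA)$ where $z$ encodes the remaining $n-1$ rows of a key (as group elements of $\mathbb{G}^{n+1}$), assemble $k$ with rows $g^{\mathbf{r}}$, $g^{\alpha \mathbf{r}}$, and the elements of $z$; run $\cA(k, r_\cA)$ to obtain $\mathbf{y}$; and output $(y_1, y_2)$. By $(n+1)$-KEA there exists a PPT extractor $\cA_{\rm KEA}^*$ taking the same inputs and outputting some $\mathbf{x}^{(1)} \in \mathbb{Z}_q^{n+1}$. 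The weak extractor $\cA^*(k, r_\cA)$ then reads $g^{\mathbf{v}}$ from row $1$ of $k$, $g^{u_1 \mathbf{v}}$ from row $2$ of $k$, encodes rows $3, \ldots, n+1$ of $k$ as $z$, invokes $\cA_{\rm KEA}^*(g^{\mathbf{v}}, g^{u_1 \mathbf{v}}, z, r_\cA)$, and outputs $(b, \mathbf{x}) = (x^{(1)}_{n+1}, (x^{(1)}_1, \ldots, x^{(1)}_n))$.

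For correctness, whenever $\imchkalg_{\cH_{\rm DDH}}(t_k, \mathbf{y}) = 1$ the KEA success condition $y_2 = y_1^{u_1}$ holds, and the extractor guarantee then gives $\langle \mathbf{x}^{(1)}, \mathbf{v}\rangle \equiv \log_g y_1 \pmod q$ with overwhelming probability. A direct calculation with the explicit form $\tilde{\mathbf{A}} = \binom{1}{\mathbf{u}}\mathbf{v}^T$ yields $h_{k,b}(\mathbf{x}) = g^{\tilde{\mathbf{A}}\binom{\mathbf{x}}{b}} = g^{\binom{1}{\mathbf{u}}\langle \mathbf{x}^{(1)}, \mathbf{v}\rangle} = (y_1, y_1^{u_1}, \ldots, y_1^{u_n})^T = \mathbf{y}$, so $\chkalg_{\cF_{\rm DDH}}(k, b, \mathbf{x}, \mathbf{y}) = 1$. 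The main subtlety will be the translation of probability bounds: the KEA's negligible-failure guarantee is stated uniformly over each fixed auxiliary input $z$, while the $z$ that $\cA^*$ supplies to $\cA_{\rm KEA}^*$ is correlated with the secret vector $\mathbf{v}$ through $\genalg_{\cH_{\rm DDH}}$. Handling this correlation carefully---so that the $(n+1)$-KEA bound lifts to the joint distribution produced by $\genalg_{\cH_{\rm DDH}}$ while keeping $\cA^*$ a function only of $(k, r_\cA)$---is the most delicate part of the argument.
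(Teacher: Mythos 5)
Your reduction skeleton and the constructions of $\cA_{\rm KEA}$ and the weak extractor $\cA^*$ essentially match the paper's; the algebraic check that $\langle\mathbf{x}^{(1)},\mathbf{v}\rangle = \log_g y_1$ implies $h_{k,b}(\mathbf{x}) = \mathbf{y}$ is correct; and you have correctly identified the central technical difficulty, namely that the auxiliary input $z$ (the remaining rows $g^{u_2\mathbf{v}},\dots,g^{u_n\mathbf{v}}$ of the key) is correlated with the KEA challenge $(g^{\mathbf{v}},g^{u_1\mathbf{v}})$ through the shared $\mathbf{v}$, whereas \cref{assump:tkea} only gives a guarantee when $z$ is fixed independently of the randomly sampled $(\alpha,\mathbf{r})$.

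However, you stop exactly where the work begins: you flag the correlation problem as ``the most delicate part'' but do not resolve it, and, as a symptom, your argument as written never actually uses DDH even though the lemma statement requires it. The paper closes this gap in \cref{claim:aux}: it assumes towards contradiction that extraction fails with non-negligible probability under the correlated auxiliary input, and builds a distinguisher $\cB$ that, given $g^{\mathbf{M}}$ with $\mathbf{M}\in\mathbb{Z}_q^{n\times(n+1)}$, inserts a fresh $\alpha$-exponentiated copy of the first row to assemble a KEA challenge together with auxiliary rows, runs $(\cA,\cA^*)$, and tests whether extraction failed. When $\mathbf{M}$ is uniformly random the auxiliary rows are independent of the KEA inputs, so \cref{assump:tkea} makes $\cB$ output $0$ with only negligible probability; when $\mathbf{M}=(1,\mathbf{u})\mathbf{v}^T$ one reproduces exactly the $\genalg_{\mathcal{H}_{\rm DDH}}$ key distribution, so by hypothesis $\cB$ outputs $0$ noticeably often. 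Distinguishing these two matrix distributions contradicts \cref{lemma:comp_ind2}, which reduces to the matrix $1$-linear assumption and hence to DDH. Without this step the $(n+1)$-KEA by itself does not apply, so the proposal has a genuine gap precisely in the step that carries the DDH dependence.
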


\begin{proof}
Let $\mathcal{A}$ be a PPT algorithm that takes as input a key $k \in \mathbb{G}^{(n+1)\times(n+1)}$ and random coins $r$ from some distribution $R$, and outputs a point $y \in \cY = \mathbb{G}^{n+1}$. Let  $\mathcal{A}^*$ be the corresponding extractor from the $(n(\lambda)+1)$-KEA. 

First, we need to show that correlations between the auxiliary input and the first two rows of the keys do not impact the correctness of the extractor, i.e.~that:

\begin{claim}[Extraction under Auxiliary Information]
\label{claim:aux} 
Assuming that $\groupgen$ fulfills the DDH assumption, it holds for all random coins $r$ of $\cA$ that: 
\begin{align*}
 \underset{ (\mathbb{G},q, g) \leftarrow \groupgen(1^\lambda) \atop (\mathbf{u}, \mathbf{v}) \stackrel{U}{\longleftarrow} \mathbb{Z}_q^n \times \mathbb{Z}_q^{(n+1)}}{\operatorname{Pr}}
\left[\begin{array}{cc}
\left(f, \mathbf{f}\,\right) \leftarrow \mathcal{A}\left(g^{\mathbf{v}}, g^{u_1 \mathbf{v}}, ...,g^{u_n \mathbf{v}}, r\right) \\
f_1=f^{u_1} \end{array}
\land
\begin{array}{cc}
\mathbf{x} \leftarrow \mathcal{A}^*\left(g^{\mathbf{v}}, g^{u_1 \mathbf{v}}, ...,g^{u_n \mathbf{v}}, r\right) \\
 g^{\langle\mathbf{x}, \mathbf{v}\rangle} \neq f
\end{array}\right]
\leq \operatorname{negl}(\lambda) \,.
\end{align*}
\end{claim}

\begin{myproof}
We will show the result by assuming that the converse holds and constructing an adversary which contradicts \cref{lemma:comp_ind2}. Note that for clarity in presentation, we sometimes denote matrix inputs as a tuple of the matrix rows/submatrices.  
Assume there exists a polynomial $p$ such that 
\begin{multline}
\label{eq:poly}
 \underset{ (\mathbb{G},q, g) \leftarrow \groupgen(1^\lambda) \atop (\mathbf{u}, \mathbf{v}) \stackrel{U}{\longleftarrow} \mathbb{Z}_q^n \times \mathbb{Z}_q^{(n+1)}}{\operatorname{Pr}}
\left[\begin{array}{cc}
\left(f, \mathbf{f}\,\right) \leftarrow \mathcal{A}\left(g^{\mathbf{v}}, g^{u_1 \mathbf{v}}, ...,g^{u_n \mathbf{v}}, r\right) \\
f_1=f^{u_1} \end{array}
\begin{array}{cc}
\land
\end{array}
\begin{array}{cc}
\mathbf{x} \leftarrow \mathcal{A}^*\left(g^{\mathbf{v}}, g^{u_1 \mathbf{v}}, ...,g^{u_n \mathbf{v}}, r\right) \\
 g^{\langle\mathbf{x}, \mathbf{v}\rangle} \neq f
\end{array}\right] > 1/p(\lambda) \,.
\end{multline}

Consider the algorithm $\cB$ defined in \cref{alg:adv1}.
\begin{algorithm}[ht]
\caption{Adversary $\mathcal{B}$}
\label{alg:adv1}
\begin{myalgo}
    \Require $k = g^{\mathbf{M}}$ for $\mathbf{M} \in \mathbb{Z}_q^{n\times(n+1)}$
    \State Sample $r$ from the random coin distribution of $\cA$
    \State  Sample $\alpha \in \mathbb{Z}_q$ uniformly at random, and set $k' \coloneqq (k_1,k_1^\alpha, k_2,...,k_n)^T$ where $k_i$ denote the rows of $k$. 
    \State Run $\mathcal{A}$ on the input $(k',r)$, obtaining output $(f, f_1,..,f_n)$.
    \State Run $\mathcal{A}^*$ on the input $(k',r)$, obtaining output $\mathbf{x}$.
    \State Check if $f_1 = f^\alpha$ and $g^{\langle (\mathbf{M}_{1,j})_j , \mathbf{x}\rangle} \neq f$. If yes, return 0. Else return 1.
\end{myalgo}
\end{algorithm}

First, note that 
\begin{align}
&\prss{(\mathbb{G},q, g) \leftarrow \groupgen(1^\lambda)\\ (\mathbf{u}, \mathbf{v}) \unifsample \mathbb{Z}_q^{n-1} \times \mathbb{Z}_q^{n+1},\ R = (1,\mathbf{u})\mathbf{v}^T}{\mathcal{B}(g^{\mathbf{R}}) = 0} = 
\prss{ (\mathbb{G},q, g) \leftarrow \groupgen(1^\lambda) \nonumber\\ (\mathbf{u}, \mathbf{v}) \unifsample \mathbb{Z}_q^{n-1} \times \mathbb{Z}_q^{n+1}}{\mathcal{B}(g^{\mathbf{v}}, g^{\mathbf{uv}^T}) = 0} \\
\annotatesign{Def. $\cB$}{=}& \underset{\substack{ (\mathbb{G},q, g) \leftarrow \groupgen(1^\lambda) \\ (\alpha, \mathbf{u}, \mathbf{v}) \unifsample \mathbb{Z}_q \times \mathbb{Z}_q^{n-1} \times \mathbb{Z}_q^{(n+1)}
}}
{\operatorname{Pr}}
\left[
    \begin{gathered}
        \begin{array}{cc}
        \left(f, \mathbf{f}\,\right) \leftarrow \mathcal{A}\left(g^{\mathbf{v}}, g^{\alpha \mathbf{v}}, g^{\mathbf{uv}^T}, r\right) \\
        f_1=f^{\alpha} 
        \end{array}
        \land
        \begin{array}{cc}
        \mathbf{x} \leftarrow \mathcal{A}^*\left(g^{\mathbf{v}}, g^{\alpha \mathbf{v}}, g^{\mathbf{uv}^T}, r\right) \\
         g^{\langle\mathbf{x}, \mathbf{v}\rangle} \neq f 
        \end{array}
    \end{gathered}
\right]
\nonumber\\
 \annotatesign{Eq. \eqref{eq:poly}}{>}&\hspace{15pt} 1/p(\lambda)\,.
\end{align}
Furthermore, notice that
\begin{multline*}
\underset{(\mathbb{G},q, g) \leftarrow \groupgen(1^\lambda) \atop \mathbf{M} \leftarrow_U \mathbb{Z}_q^{n\times (n+1)}}{\text{Pr}}\left[\mathcal{B}(g^\mathbf{M} ) = 0\right] = \underset{(\mathbb{G},q, g) \leftarrow \groupgen(1^\lambda) \atop \mathbf{v} \leftarrow_U \mathbb{Z}_q^{n+1},\,\mathbf{M}' \leftarrow_U \mathbb{Z}_q^{(n-1)\times (n+1)}}{\text{Pr}}\left[\mathcal{B}(g^\mathbf{v},g^\mathbf{M'}) = 0\right] \\
 = \underset{ (\mathbb{G},q, g) \leftarrow \groupgen(1^\lambda) \atop (\alpha, \mathbf{v}, \mathbf{M}') \stackrel{U}{\longleftarrow} \mathbb{Z}_q \times \mathbb{Z}_q^{n+1} \times \mathbb{Z}_q^{(n-1)\times (n+1)}}{\operatorname{Pr}}
    \left[\begin{array}{cc}
    \left(f, \mathbf{f}\,\right) \leftarrow \mathcal{A}\left(g^{\mathbf{v}}, g^{\alpha \mathbf{v}}, g^{\mathbf{M}'}, r\right) \\
    f_1=f^{\alpha} \end{array}
    \begin{array}{cc}
    \land
    \end{array}
    \begin{array}{cc}
    \mathbf{x} \leftarrow \mathcal{A}^*\left(g^{\mathbf{v}}, g^{\alpha \mathbf{v}}, g^{\mathbf{M}'}, r\right) \\
     g^{\langle\mathbf{x}, \mathbf{v}\rangle} \neq f 
\end{array}\right] \\
 \overset{(\ref{eq:tkea})}{=} \text{negl}(\lambda)
\end{multline*}
where the last equation follows from the definition of the KEA and the fact that the auxiliary input on the left hand side is independent from the first two inputs.

Thus, $\cB$ can distinguish between the ensembles $\{(G,g,g^{(1,\mathbf{u})\mathbf{v}^T})| \mathbf{u} \leftarrow_U \mathbb{Z}_q^{n-1}, \mathbf{v} \leftarrow_U \mathbb{Z}_q^{n+1} \}$ and $\{(G,g,g^\mathbf{M}) | \mathbf{M} \leftarrow_U \mathbb{Z}_q^{n\times (n+1)}\}$, a contradiction to the indistinguishability of the distributions $D_2$, $D_3$ from
\cref{lemma:comp_ind2} in the case of $a = b = n$.
\end{myproof}

It follows from \cref{claim:aux} that:

\begin{equation}
\begin{split}
 \operatorname{negl}(\lambda) = 
 \underset{ (\mathbb{G},q, g) \leftarrow \groupgen(1^\lambda) \atop (\mathbf{u}, \mathbf{v}) \stackrel{U}{\longleftarrow} \mathbb{Z}_p^n \times \mathbb{Z}_p^{(n+1)}}{\operatorname{Pr}}
\left[\begin{array}{cc}
\left(f, \mathbf{f}\,\right) \leftarrow \mathcal{A}\left(g^{\mathbf{(1,u)v}^T}, r\right) \\
f_1=f^{u_1} \end{array}
\begin{array}{cc}
\land
\end{array}
\begin{array}{cc}
\mathbf{x} \leftarrow \mathcal{A}^*\left(g^{\mathbf{(1,u)v}^T}, r\right) \\
 g^{\langle\mathbf{x}, \mathbf{v}\rangle} \neq f
\end{array}\right]  \\
\geq \underset{ (\mathbb{G},q, g) \leftarrow \groupgen(1^\lambda) \atop (\mathbf{u}, \mathbf{v}) \stackrel{U}{\longleftarrow} \mathbb{Z}_p^n \times \mathbb{Z}_p^{(n+1)}}{\operatorname{Pr}}
\left[\begin{array}{cc}
\left(f, \mathbf{f}\,\right) \leftarrow \mathcal{A}\left(g^{\mathbf{(1,u)v}^T}, r\right) \\
 \forall i \in \{1,...,n\}: f_i=f^{u_i} \end{array}
\begin{array}{cc}
\land
\end{array}
\begin{array}{cc}
\mathbf{x} \leftarrow \mathcal{A}^*\left(g^{\mathbf{(1,u)v}^T}, r\right) \\
g^{\langle\mathbf{x}, \mathbf{v}\rangle} \neq f
\end{array}\right] \\
= \underset{ (\mathbb{G},q, g) \leftarrow \groupgen(1^\lambda) \atop (\mathbf{u}, \mathbf{v}) \stackrel{U}{\longleftarrow} \mathbb{Z}_p^n \times \mathbb{Z}_p^{(n+1)}}{\operatorname{Pr}}
\left[\begin{array}{cc}
\left(f, \mathbf{f}\,\right) \leftarrow \mathcal{A}\left(g^{\mathbf{(1,u)v}^T}, r\right) \\
 \forall i \in \{1,...,n\}: f_i=f^{u_i} \end{array}
\begin{array}{cc}
\land
\end{array}
\begin{array}{cc}
\mathbf{x} \leftarrow \mathcal{A}^*\left(g^{\mathbf{(1,u)v}^T}, r\right) \\
 g^{\langle\mathbf{x}, \mathbf{v}\rangle\cdot(1,u)} \neq \left(f, \mathbf{f}\,\right)
\end{array}\right] \\
= \underset{\kt \leftarrow \genalg_{\mathcal{H}_{\rm DDH}}(1^\lambda)}{\operatorname{Pr}}
\left[\begin{array}{cc}
\mathbf{y} \leftarrow \mathcal{A}\left(k, r\right) \\
 \imchkalg_{\mathcal{H}_{\rm DDH}}(t_k,\mathbf{y}) = 1 \end{array}
\begin{array}{cc}
\land
\end{array}
\begin{array}{cc}
(\mathbf{x},b) \leftarrow \mathcal{A}^*\left(k, r\right) \\
 \chkalg_{\mathcal{H}_{\rm DDH}}(k,\mathbf{y},b,\mathbf{x}) = 0
\end{array}\right] \\
\end{split}
\end{equation}
where the last step uses the definitions of $\genalg_{\cH_{\rm DDH}}$, $\chkalg_{\cH_{\rm DDH}}$ and $\imchkalg_{\cH_{\rm DDH}}$, and we relabel $\mathbf{x} \in \mathbb{Z}_q^{n+1}$ as $(\mathbf{x},b) \in \mathbb{Z}_q^n \times \mathbb{Z}_q$. 

Since this holds for all random coins $r$ of $\cA$, the lemma follows. \qed
\end{proof}

Thus, we can conclude: 

\begin{theorem}[Existence of an \eiiintcf{}]
\label{thm:e^3NTCF} The family tuple ($\mathcal{F}_{\rm DDH},\mathcal{G}_{\rm DDH},\mathcal{H}_{\rm DDH}$) is an \eiiintcf{} with completeness parameter $c_\cF = 0.99$ under the assumption that $\groupgen$ fulfills both the DDH assumption and the ($n(\lambda)+1$)-KEA.  
\end{theorem}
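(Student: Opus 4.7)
The plan is to verify each of the three conditions in \cref{def:e^3ntcf} directly by appealing to the lemmas established earlier in this section. All the technical machinery has already been developed; the remaining task is purely organizational, so the proof will essentially be a short assembly argument.

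First, for Condition 1, I will invoke \cref{lem:ddh_ahcb} to conclude that $\cF_{\rm DDH}$ is an NTCF satisfying the AHCB property under the DDH assumption. At the same time, I will record that the completeness parameter $\injconstant\cF$ can be taken to be $0.99$: as noted in the discussion following \cref{alg:inv_F}, the sets $\cX_b$ defined there satisfy $|\cX_0 \cap \cX_1|/|\cX| = (1-2/d)^n$, which tends to $1$ as $\lambda$ grows, so it exceeds $0.99$ for all sufficiently large $\lambda_{\injconstant{\cF}}$.

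Second, for Condition 2, I need to show that $\cF_{\rm DDH}$ is injective invariant with associated trapdoor injective family $\cG_{\rm DDH}$. \cref{lemma:tif2} gives that $\cG_{\rm DDH}$ is a trapdoor injective family. By construction, $\chkalg_{\cG_{\rm DDH}} = \chkalg_{\cF_{\rm DDH}}$ and the sampling algorithms coincide as well, matching the first item of \cref{def:injinv}. For the second item, the computational indistinguishability of the marginal key distributions produced by $\genalg_{\cF_{\rm DDH}}$ and $\genalg_{\cG_{\rm DDH}}$ follows from the indistinguishability of the distributions $D_0$ and $D_2$ in \cref{lemma:comp_ind2} specialized to $a = n+1$, $b = n$, which holds under DDH.

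Third, for Condition 3, the constructions of $\cG_{\rm DDH}$ and $\cH_{\rm DDH}$ both use $\cB = \mathbb{Z}_q$ and $\cX = \mathbb{Z}_q^n$, so the compatibility requirement $\cB_{\cH_{\rm DDH}} = \cB_{\cG_{\rm DDH}}$ and $\cX_{\cH_{\rm DDH}} = \cX_{\cG_{\rm DDH}}$ holds by definition. \cref{lemma:wif} establishes that $\cH_{\rm DDH}$ is a weak extension of $\cF_{\rm DDH}$, and \cref{lem:ddh_weak_extract} yields the weak extractability property, the latter being the only step that additionally requires the $(n(\lambda)+1)$-KEA. Combining the three conditions gives the claim. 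The genuine obstacles are not in the present theorem, whose proof is a few lines of bookkeeping, but in the supporting lemmas it cites --- chiefly \cref{lem:ddh_ahcb} (where the AHCB property is proved via a lossy sampling argument, deferred to \cref{app:ddh_ntcf}) and \cref{lem:ddh_weak_extract} (where the key technical step is Claim \ref{claim:aux}, which uses a matrix-DDH hybrid to eliminate correlations between the KEA input and auxiliary information embedded in the key).
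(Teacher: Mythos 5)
Your proposal is correct and matches the paper's (implicit) argument: the theorem is stated directly after \cref{lem:ddh_ahcb}, \cref{lemma:tif2}, the injective-invariance lemma, \cref{lemma:wif}, and \cref{lem:ddh_weak_extract}, and follows by assembling these exactly as you describe, with the completeness parameter $\injconstant\cF = 0.99$ read off from the discussion of $|\cX_0 \cap \cX_1|/|\cX|$ after \cref{alg:inv_F}.
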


Using this \eiiintcf~in \protref{poq:3test}, we get a single-round proof of quantumness based on the DDH and $t$-KEA assumptions:
\begin{corollary}[Existence of Single-Round Proof of Quantumness based on DDH and KEA]
\label{corr:3testsingleroundpoq} Assuming that a group sampling procedure $\groupgen$ fulfills both the DDH assumption and the ($n(\lambda)+1$)-KEA, there exists a single-round proof of quantumness with completeness 0.99 and soundness $\frac{5}{6}$.  
\end{corollary}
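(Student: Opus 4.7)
The proof is essentially a direct composition of the two main results already established in the paper, so the plan is a short chaining argument rather than new technical work. First, I would invoke \cref{thm:e^3NTCF}, which provides the triple $(\cF_{\rm DDH}, \cG_{\rm DDH}, \cH_{\rm DDH})$ as an \eiiintcf{} with completeness parameter $c_\cF = 0.99$, under precisely the assumptions in the corollary (DDH on $\groupgen$ plus $(n(\lambda)+1)$-KEA). This supplies all ingredients required to instantiate \protref{poq:3test}.

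Next, I would instantiate \protref{poq:3test} with the above \eiiintcf{} triple and apply \cref{thm:poq2}. The completeness guarantee of \cref{thm:poq2} gives an efficient quantum prover that succeeds with probability at least
\begin{equation*}
\frac{2 + c_\cF}{3} - \negl(\lambda) = \frac{2 + 0.99}{3} - \negl(\lambda) \geq 0.99 - \negl(\lambda)\,,
\end{equation*}
while the soundness guarantee bounds every PPT adversary's success probability by $5/6 + \negl(\lambda)$. Since the protocol is described as a single exchange of a key $k$ followed by a single response $(y, d, c)$, it is single-round in the sense defined in the introduction.

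The only genuinely forward-looking step is just to observe that the two underlying cryptographic assumptions used by \cref{thm:e^3NTCF} (DDH and \tlkea{} with $t = n(\lambda)+1$) are exactly the same as those in the hypothesis of the corollary, and that the PPT soundness regime of \cref{thm:poq2} matches the classical-intractability setting. I do not expect any main obstacle here: the substantive work (the AHCB property of $\cF_{\rm DDH}$, the injective invariance via $\cG_{\rm DDH}$, the weak extractability of $\cH_{\rm DDH}$ from KEA, and the reduction from AHCB to the adversary's success probability in \protref{poq:3test}) has already been carried out in \cref{section:ddhntcf} and \cref{subsection:3testpoq}. The proof of the corollary therefore reduces to a one-line invocation of \cref{thm:e^3NTCF,thm:poq2} together with the elementary arithmetic bound $(2 + 0.99)/3 \geq 0.99$.
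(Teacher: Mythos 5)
Your proof is correct and takes exactly the approach the paper intends: the corollary is obtained by instantiating \protref{poq:3test} with the \eiiintcf{} from \cref{thm:e^3NTCF} and applying \cref{thm:poq2}, with the observed bound $(2 + 0.99)/3 \geq 0.99$ giving the stated completeness. The paper leaves this chaining implicit, but your argument matches it precisely.
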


\section{A Single-Round Proof of Quantumness based on Extended Extractable Trapdoor Claw-free Function Families}
\label{section:2test}

In \cref{section:ddhntcf}, we presented an explicit construction for an \eiiintcf{} (the cryptographic primitive underpinning our first single-round protocol presented in \cref{section:3test}) based on the DDH and \tlkea{} assumptions. Naturally, the question arises whether this primitive can be realized from other cryptographic assumptions, too. As it turns out, such a function family tuple can also be realized from the LWE assumption and a lattice knowledge assumption (the LK-$\epsilon$ assumption, see \cref{assumpt:lke}).

Before providing this construction, we first note the following: in the special case of an \eiiintcf{} (\cref{def:e^3ntcf}) for which $\cG = \cH$, and the image test is given by concatenating the inversion and check function, the weak and strong image tests in \protref{poq:3test} are the same.
Therefore, these two tests can be combined into one, resulting in a simplified protocol. 

Motivated by this simpler protocol, we define \eiintcf~families, which are essentially the same as \eiiintcf, except that the roles of the injective and extractable function families are combined.
To make this part of the paper more self-contained, we give a separate definition in \cref{def:e^2ntcf} as well as the simplified proof of quantumness protocol in \protref{poq:2test}.
Note that an \eiintcf~is a stronger cryptographic primitive than an \eiiintcf, i.e.~every \eiintcf~is also an \eiiintcf, but not vice versa.

Later, in \cref{section:lwentcf} we will show how an \eiintcf~can be realized from the LWE and  LK-$\epsilon$ assumption using a slightly modified variant of the LWE-based NTCF of \cite{randomness} and extending it by a suitable trapdoor injective function family $\cG$.

\subsection{Extended Extractable NTCF Families (\eiintcf)}
\label{subsection:defe^2ntcf}

An injective invariant NTCF ~\cite{mahadev2018classical} is a pair $(\cF,\cG)$ such that $\cF$ is an NTCF that is computationally indistinguishable from the associated trapdoor injective function family $\cG$ (see \cref{sec:ntcf}). An \eiintcf~is the same, except we require an additional extractability property. This is a property that the trapdoor injective function family $\cG$ should satisfy. Intuitively, this property is satisfied if the only way to produce an image $y \in \cY$ of the functions from $\cG$ is by first sampling a preimage $x \in \cX, b \in \bits$.
More precisely, the condition of \cref{eq:extractability property} states that the probability that $\cA$ produces a valid image $y$ and the extractor $\cA^*$ fails to produce the correct preimage $(b,x)$ is negligible.

\begin{definition}[Extractability Property]
\label{def:extractability}
    Let $\mathcal{G}$ be a trapdoor injective function family.
    We say that $\cG$ \emph{satisfies the extractability property} if for every PPT procedure $\cA$ that takes $k \in \cK_\cG$ as input and outputs a point $y \in \cY$, there exists a PPT extractor $\cA^*$, with input $k \in \mathcal{K}_\mathcal{G}$, $y \in \cY$ and the random coins $r$ (sampled from a distribution $R$) of $\cA$, that outputs $b \in \bits$, $x \in \cX$, such that
    \begin{equation} \label{eq:extractability property}
        \prss{\substack{
            \kt \leftarrow \genalg_\cG(1^\lambda) \\ 
            r \leftarrow R
        }}{
            y \notin \supp\big(g_{k,b}(x)\big) \land y \in \supp \big\{ g_{k,b'}(x') \big\}_{\substack{b'\in\bits, \\ x' \in \cX}}
        } = \negl(\lambda)\,,
    \end{equation}
    where $y = \cA(k,r)$ and $(b,x) = \cA^*(k,r,y)$.
\end{definition}

With this, we can define \eiintcf~families as follows.

\begin{definition}[\eiintcf]
\label{def:e^2ntcf}
    A pair of function families $(\cF, \cG)$ is called an \emph{extended extractable noisy trapdoor claw-free family pair} (\eiintcf) if the following conditions hold:
    \begin{myenumi}
        \item $\cF$ is a noisy trapdoor claw free family (see \cref{sec:ntcf});
        \item $\cF$ is injective invariant, with $\cG$ being a corresponding trapdoor injective family (see \cref{sec:ntcf});
        \item $\cG$ satisfies the extractability property in \cref{def:extractability}.
    \end{myenumi}
\end{definition}

\subsection{Single-Round Proof of Quantumness}
\label{subsection:2testpoq}

In the following, we present our second (and less general, but simpler) protocol for a single-round proof of quantumness, which is based on the use of an \eiintcf{} family introduced above. An explicit construction of an \eiintcf{} based on the LWE and LK-$\epsilon$ assumption is given in \cref{section:lwentcf}.

This protocol consists of two different tests, one \emph{image} and one \emph{equation} test. The high-level ideas behind this protocol are the same as for the general protocol: the equation test $\mathtt{Eq}$ tests the prover's ability to find a valid equation, while the image test $\mathtt{Im}$ only accepts images $y$ in the support of $g_{k,b}(x)$ for some $x\in\cX$, i.e. for which we can use the extractability property $\cG$ to conclude that the prover must have ``known'' a preimage $(b,x) \in \bits\times\cX$ to the image $y$ it has returned. By the computational indistinguishability of the two key distributions, the prover cannot tell which test is being performed, and thus has to choose a strategy that works well for both tests on average. Furthermore, the computational indistinguishability of the key distributions allows us to find a lower bound on the  probability that the prover-extractor pair produces both a valid equation and a valid preimage at the same time. 

\begin{myprotocol}
    \caption{Single-Round Proof of Quantumness}
    \label{poq:2test}
    
    \begin{myalgo}
        \Require The prover and the verifier both receive an \eiintcf{} pair $(\mathcal{F},\mathcal{G})$ and a security parameter $\lambda \in \mathbb{N}$. 
        
        \MyState The verifier samples $a \unifsample \{\eqtest, \imtest\}$.
        
        \MyIf{$a = \eqtest$}
            \MyState The verifier samples a key $(k,t_k) \leftarrow \genalg_{\mathcal{F}}(1^\lambda)$ and sends $k$ to the prover.
            \State The verifier receives tuple $(y, d, c)$, computes $x_b = \invalg_{\mathcal{F}}(t_k,b,y)$ for $b \in \bits$ using 
            \StateNoNumber trapdoor, and checks
            \begin{itemize} 
                \item if $\chkalg_\cF(k,y,0,x_0) = \chkalg_\cF(k,y,1,x_1) = 1\,,$
                \item if $(x_0,x_1) \in \cR_k\,,$
                \item if $x_0 \in \cX_0$ and $x_1 \in \cX_1\,$,
                \item if $(J(x_0) \oplus J(x_1))\cdot d = c\,,$
                \item if $d \in G_{k,0,x_0}\cap G_{k,1,x_1}\,.$
            \end{itemize}
            \StateNoNumber If all of these conditions hold, the verifier accepts.
        \MyElsIf{$a = \imtest$}
            \MyState The verifier samples $(k,t_k) \leftarrow \genalg_{\mathcal{G}}(1^\lambda)$ and sends $k$ to the prover.
            \State{The verifier receives a tuple $(y, d, c)$ from the prover, computes $(b,x) = \invalg_\mathcal{G}(t_k,y)$ using}
            \StateNoNumber the trapdoor, and checks whether $\chkalg_\mathcal{G}(k,y,b,x) = 1$. 
            If this holds, the verifier accepts.
        \EndIf
    \end{myalgo}
\end{myprotocol}

The reason why we only need one image test in this case lies in the fact that the extractability property of the \eiintcf{} (\cref{def:extractability}) is stronger than that of the \eiiintcf{} (\cref{def:weak extractability}): in particular, the extractor whose existence is required by the extractability property already returns a preimage in the domain of the function family $\cF$, which makes the strong image test from \protref{poq:3test} (whose purpose was to allow a bound on the probability that the extractor returns an preimage in the ``desired'' domain) redundant. 

Similarly, the soundness proof for \protref{poq:2test} is a simplification of the soundness proof of \protref{poq:3test}. 
For completeness, we include a full soundness proof in \cref{app:e2soundness}.

\begin{theorem}
\label{thm:poq}
Let $\lambda \in \mathbb{N}$, 
and let $(\cF,\cG)$ be an \eiintcf{} pair with paremeter $\injconstant\cF$.
We consider \protref{poq:2test} with inputs $\lambda$ and $(\cF, \cG)$.
\begin{myenumi}
    \item \textbf{Completeness.} There is a QPT prover which succeeds in the protocol with probability at least
    $$
        \frac{1 + \injconstant\cF}{2} - \negl(\lambda) \,.
    $$
    \item \textbf{Soundness.} Any PPT adversary succeeds in the protocol with probability at most
    $$
        \frac{3}{4} + \negl(\lambda) \,.
    $$
\end{myenumi}
\end{theorem}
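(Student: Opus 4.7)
\begin{myproof}[Proof sketch]

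The plan is to closely mirror the proof of \cref{thm:poq2}, but with the simplifications afforded by the stronger extractability property of an \eiintcf.

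\textbf{Completeness.} I would exhibit the same honest quantum prover as in \cref{alg:successfulqpt}: it prepares the state $\frac{1}{\sqrt{2|\cX|}}\sum_{b,x,y}\sqrt{f'_{k,b}(x)(y)}\ket{b}\ket{x}\ket{y}$ using $\sampalg_\cF$, measures $\sY$ in the computational basis to obtain $y$, and measures $\sB\sX$ in the Hadamard basis to obtain $(c,d)$. For the image test, because $\cF$ is injective invariant, $\sampalg_\cF = \sampalg_\cG$, so the state is actually the superposition over evaluations of $g_{k,b}$. Any outcome $y$ therefore lies in $\supp(g_{k,b'}(x'))$ for some $(b',x') \in \bits\times\cX$, and the disjoint-support property (\cref{def:trapdoor injective function family}, item~\ref{item:disjoint trapdoor injective pair}) together with correctness of $\invalg_\cG$ on $\cI_\cG$ implies the verifier accepts except with probability $\negl(\lambda)$ (the bad-key event). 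For the equation test, the same analysis as in the equation-test case of \cref{thm:poq2} applies, giving success probability at least $\injconstant\cF - \negl(\lambda)$. Averaging the two cases yields completeness $(1+\injconstant\cF)/2 - \negl(\lambda)$.

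\textbf{Soundness.} Suppose a PPT adversary $\cA$ wins with probability $3/4 + 1/q(\lambda)$. Denote by $S_\cA^\eqtest$ and $S_\cA^\imtest$ the events that $\cA$ passes the respective test (under the corresponding key distribution). The averaged success assumption gives $\Pr_{\cF}[S_\cA^\eqtest] + \Pr_{\cG}[S_\cA^\imtest] \geq 3/2 + 2/q(\lambda)$. Let $\cA^*$ be the PPT extractor guaranteed by the extractability property of $\cG$ (\cref{def:extractability}); crucially, $\cA^*$ already outputs a preimage $(b,x) \in \bits \times \cX$, so no analogue of the ``restricted-domain'' post-processing from \cref{thm:poq2} is needed. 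First, passing $S_\cA^\imtest$ implies $y \in \supp\{g_{k,b'}(x')\}_{b'\in\bits, x' \in \cX}$, so by \cref{eq:extractability property} we have $\Pr_{\cG}[\chkalg_\cF(k,y,b,x)=1] \geq \Pr_{\cG}[S_\cA^\imtest] - \negl(\lambda)$. Since $\chkalg_\cF$, $\cA$, and $\cA^*$ are all PPT and trapdoor-free, the computational indistinguishability of $\genalg_\cF$ and $\genalg_\cG$ (item~\ref{item:inj inv key indist} of \cref{def:injinv}) transfers this bound to $\cF$-keys. A union bound then yields
\[
\prss{\kt \leftarrow \genalg_\cF(1^\lambda)}{\chkalg_\cF(k,y,b,x)=1 \,\land\, S_\cA^\eqtest} \;\geq\; \prss{\cG}{S_\cA^\imtest} + \prss{\cF}{S_\cA^\eqtest} - 1 - \negl(\lambda) \;\geq\; \tfrac{1}{2} + \tfrac{2}{q(\lambda)} - \negl(\lambda)\,.
\]

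Define $\cB$ as in \cref{algo:soundness b} (running $\cA$ to get $(y,d,c)$ and $\cA^*$ on the same coins to get $(b,x)$, and outputting $(b,x,d,c)$). On the joint event above, combined with $(k,t_k)\in\cI_\cF$ (which holds except with negligible probability), the inversion function returns the ``correct'' $x_0,x_1$ and $x = x_b$, so the tuple lies in $H_k$ exactly as in the final calculation of \cref{claim:advsuccessbound}. Thus $\Pr_{\cF}[\cB(k) \in H_k] \geq 1/2 + 2/q(\lambda) - \negl(\lambda)$, while by AHCB we must have $\Pr_{\cF}[\cB(k) \in \overline H_k] \geq \Pr_{\cF}[\cB(k) \in H_k] - \negl(\lambda)$. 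Since $H_k$ and $\overline H_k$ are disjoint, summing gives more than $1$, a contradiction. Therefore $\cA$'s success probability is at most $3/4 + \negl(\lambda)$.

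\textbf{Main obstacle.} The only subtle point is confirming that the stronger extractability of \cref{def:extractability} really does make the strong image test redundant: one needs to check that $\cA^*$'s guaranteed output in $\bits\times\cX$ (rather than some extended domain) suffices to slot directly into the AHCB reduction without any intermediate ``domain-restricting'' wrapper $\cE$ as in the \eiiintcf{} proof. Once this is observed, the rest of the argument is a streamlined version of the proof of \cref{thm:poq2}.
\end{myproof}
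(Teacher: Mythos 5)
Your proposal is correct and follows essentially the same path as the paper's own proof in \cref{app:e2soundness}: the same honest prover for completeness, the same use of the extractability property of $\cG$ to relate $\Pr_\cG[S_\cA^\imtest]$ to the extractor's success probability, the same transfer to $\cF$-keys via key indistinguishability, the same union bound, and the same adversary $\cB$ built from $\cA$ and $\cA^*$ to contradict the AHCB property. The one cosmetic slip is your citation of \cref{algo:soundness b} (the \eiiintcf{} version, where the extractor is $\cE^*$ and does not take $y$ as input); the correct reference here would be \cref{algo:soundness2}, where $\cA^*$ is run on $(k,r,y)$ as required by \cref{def:extractability}, but this does not affect the argument.
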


\section{An Extended Extractable Trapdoor Claw-Free Family from LWE and a Lattice Knowledge Assumption}
\label{section:lwentcf}

In this section, we present a construction of an \eiintcf{} based on LWE (\cref{assump:lwe}). 
This construction is based on the injective invariant NTCF family constructed in~\cite{mahadev2018classical}.
We use the same construction with a slight modification in parameters.
We explain this construction and our modification to it in \cref{sec:ntcf,sec:ntcf_lwe_params}.
In this section, we focus on the extractability property of the trapdoor injective function family $\cG$, which is the main additional feature compared to the construction in~\cite{mahadev2018classical}.

\subsection{Parameter choices for NTCFs from LWE} \label{sec:ntcf_lwe_params}

In~\cite{randomness}, the authors construct an NTCF family based on the LWE assumption, and~\cite{mahadev2018classical} extended this construction to include the injective invariance property.
In our work we use the same NTCF construction, but we need to modify a parameter in the construction slightly in order to combine this NTCF with the LK-$\eps$ knowledge assumption in \cref{section:lwentcf}.
This modification has no substantial impact on the proof in~\cite{randomness}, but for completeness  we explain in this section how and why we modify this parameter and why this change is inconsequential for the reduction in~\cite{randomness}.

Concretely, we use the NTCF construction $\mathcal{F}_{\rm LWE}$ in Chapter 4 of \cite{randomness}, with a minor modification. We replace the parameter choice for $B_P$ (condition (\textbf{A.3}) in~\cite{randomness}) with the following:
\begin{equation*}
    B_P = \frac{q}{2C_Tm\sqrt{(n+1)\log(q)}} = \frac{1}{\poly(\lambda)} \cdot B_P^{\rm BCM}
\end{equation*}
 where $B_P^{\rm BCM}$ denotes the choice of $B_P$ used in \cite{randomness}.
The motivation behind this choice of $B_P$ will become apparent when considering the derivation of the extractability property from the lattice knowledge assumption. It does not impact the $\mathcal{F}_{\rm LWE}$'s property of being a NTCF (i.e. all arguments in the proof of Theorem 4.1 in~\cite{randomness} still hold with the modified parameter choice). In particular, the ratios $B_P/B_V,B_V/B_L $ are still superpolynomial in $\lambda$ (required for $f_{k,b}$, $f'_{k,b}$ to be within negligible distance, Item 3~(iii) of \cref{def:ntcf} of an NTCF). Furthermore, the only part of the proof of $\cF_{\rm LWE}$ being an NTCF (Theorem 4.1 in \cite{randomness}) where the exact value of $B_P$ matters is for the existence of a trapdoor-aided inversion function (Item 2~(i) of \cref{def:ntcf}). Naturally, reducing the value $B_P$ means that the inversion function only has to work on a smaller error radius, thus this condition is still satisfied. The only quantity affected by this change is the statistical security parameter of the construction. In particular, we still have $c_\cF = 1$ for this construction.

\subsection{Trapdoor Injective Function Family}
For completeness, we also give the definition of the corresponding trapdoor injective family $\cG_{\rm LWE}$, which is essentially the same as in~\cite{mahadev2018classical}, again with very slightly different parameters.
Specifically, we define $\mathcal{G}_{\rm LWE}$ as follows: 

\begin{myenumi}
    \item\textbf{Function Family}: Define $g_{k,b}$ by the same algebraic expression as  $f'_{k,b}$.
    \item\textbf{Check Function and State Preparation Algorithm}: $\chkalg_{\mathcal{G}_{\rm LWE}} \coloneqq \chkalg_{\mathcal{F}_{\rm LWE}}$, $\mathrm{SAMP}_{\mathcal{G}_{\rm LWE}} \coloneqq\mathrm{SAMP}_{\mathcal{F}_{\rm LWE}}\;.$ Furthermore, let $\cB_\cG = \bits$.
    \item\textbf{Key Generation Procedure}: Set $\genalg_{\mathcal{G}_{\rm LWE}}(1^\lambda)$ to be the sampling procedure $\mathrm{GenTrap}(1^{n+1}, 1^m, q)$ from Theorem 2.6 in \cite{randomness} and denote by \algofont{Invert} the corresponding inversion function. Let $\mathcal{I}_{\mathcal{G}_{\rm LWE}}$ be the corresponding set of key-trapdoor pairs such that for all $(k,t_k) \in \mathcal{I}_{\mathcal{G}_{\rm LWE}}$, $\algofont{Invert}(t_k,k\mathbf{s}+\mathbf{e}) = (\mathbf{s},\mathbf{e})$ for all $\mathbf{s} \in \mathbb{Z}_q^{n+1}$ and  $\mathbf{e} \in \mathbb{Z}_q^m$ with $\lVert \mathbf{e} \rVert \leq q/C_T\sqrt{(n+1)\log(q)}\;.$
    \item\textbf{Inversion Algorithm}: Define $\invalg_{\mathcal{G}_{\rm LWE}}(t_k,y) = (s_{n+1},\mathbf{s}')$ where $s_{n+1}$ is the last bit and $\mathbf{s}'$ the first n bits of the output $\mathbf{s}$ of $\algofont{Invert}(t_k,\mathbf{y})$.
\end{myenumi}

\begin{lemma}  
\label{lemma:tif}
 $\mathcal{G}_{\rm LWE}$ is a trapdoor injective family.
\end{lemma}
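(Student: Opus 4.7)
The plan is to verify each of the three conditions of \cref{def:trapdoor injective function family} in turn, leveraging the fact that $\mathcal{G}_{\rm LWE}$ is defined essentially by recycling the components of $\mathcal{F}_{\rm LWE}$ together with the trapdoor sampler $\mathrm{GenTrap}$ of Theorem 2.6 in \cite{randomness}. Since the construction is only a minor parameter modification of the trapdoor injective family from \cite{mahadev2018classical}, most of the work reduces to citing known facts and checking that the choice of the extended domain $\cB_\cG = \bits$ does not introduce any new issues.

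First I would handle \textbf{Efficient Function Generation}: this is immediate, because $\genalg_{\mathcal{G}_{\rm LWE}}$ is defined to be $\mathrm{GenTrap}(1^{n+1},1^m,q)$, which is a PPT algorithm by Theorem 2.6 in \cite{randomness}. Similarly, \textbf{Efficient Range Superposition} is essentially free: we defined $\chkalg_{\mathcal{G}_{\rm LWE}} \coloneqq \chkalg_{\mathcal{F}_{\rm LWE}}$ and $\sampalg_{\mathcal{G}_{\rm LWE}} \coloneqq \sampalg_{\mathcal{F}_{\rm LWE}}$, so the corresponding efficient procedures exist and the state-preparation claim follows at once from the fact that $g_{k,b}$ is defined to be $f'_{k,b}$.

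The main content lies in the \textbf{Disjoint Trapdoor Injective Pair} condition. I would take $\mathcal{I}_{\mathcal{G}_{\rm LWE}}$ to be the set of key-trapdoor pairs $(k,t_k)$ for which $\algofont{Invert}(t_k,\cdot)$ correctly recovers $(\mathbf{s},\mathbf{e})$ whenever $\lVert\mathbf{e}\rVert \leq q/(C_T\sqrt{(n+1)\log q})$; by Theorem 2.6 in \cite{randomness}, this set has overwhelming measure under $\genalg_{\mathcal{G}_{\rm LWE}}$, so \cref{eq:trapdoor injective bad keys} is satisfied. For $(k,t_k) \in \mathcal{I}_{\mathcal{G}_{\rm LWE}}$, disjointness of the supports of $g_{k,b}(x)$ for distinct $(b,x) \in \bits \times \cX$ follows from the fact that any $y \in \supp(g_{k,b}(x))$ is of the form $k\cdot\binom{x}{b} + \mathbf{e}$ with error $\mathbf{e}$ of norm bounded (through the choice of $B_P$) well below the correctness radius of $\algofont{Invert}$, so $\algofont{Invert}$ recovers $\binom{x}{b}$ uniquely. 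The inversion function $\invalg_{\mathcal{G}_{\rm LWE}}$ then just reads off the first $n$ and last entry of this vector as $(x,b)$, giving an output in $\bits \times \cX$ and certifying correctness of inversion.

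The only step that requires a bit of care is checking that the bound on $B_P$ introduced in \cref{sec:ntcf_lwe_params} is still large enough to both (i) keep the noise radius strictly below the inversion threshold $q/(C_T\sqrt{(n+1)\log q})$, and (ii) ensure disjoint supports between $g_{k,0}(x)$ and $g_{k,1}(x')$ for all $(x,x')$. Since $B_P$ is only reduced by a $\mathrm{poly}(\lambda)$ factor relative to the choice in \cite{randomness,mahadev2018classical}, both remain superpolynomially smaller than the correctness threshold, so the arguments of \cite{mahadev2018classical} for the original parameters carry over verbatim. I expect this parameter-checking to be the main (and essentially only) obstacle, and it is straightforward rather than deep.
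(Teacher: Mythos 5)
Your proposal is correct and takes essentially the same route as the paper: both check the three conditions of \cref{def:trapdoor injective function family} directly, invoking Theorem 2.6 of \cite{randomness} for efficient generation and for the measure of $\cI_{\cG_{\rm LWE}}$ plus correctness of $\algofont{Invert}$, and invoking the equalities $\chkalg_{\cG_{\rm LWE}} = \chkalg_{\cF_{\rm LWE}}$, $\sampalg_{\cG_{\rm LWE}} = \sampalg_{\cF_{\rm LWE}}$ with $g_{k,b} = f'_{k,b}$ for efficient range superposition. You are a bit more explicit than the paper in deriving support-disjointness from the well-definedness of $\algofont{Invert}$ and in flagging that the reduced $B_P$ stays well below the inversion threshold, but the content is the same.
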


\begin{myproof}
~
\begin{enumerate}[label=\arabic*.]
    \item By Theorem 2.6 of \cite{randomness}, $\genalg_{\mathcal{G}_{\rm LWE}}(1^\lambda)$ fulfills this condition.
    \item By Theorem 2.6 of \cite{randomness}, the probability of sampling $(k,t_k) \notin \mathcal{I}_{\mathcal{G}_{\rm LWE}}$ is negligible. 
    Furthermore, the theorem imples that for our choice of $B_P$ and $g_{k,b}$, we have for all $(k,t_k) \in \mathcal{I}_{\mathcal{G}_{\rm LWE}}$, $b \in \bits$, $\mathbf{x} \in \mathcal{X}$ and $\mathbf{y} \in \supp(g_{k,b}(\mathbf{x}))$ that $\invalg_{\mathcal{G}_{\rm LWE}}(t_k,\mathbf{y}) = (b,\mathbf{x})$.
    \item This follows directly from $\mathcal{F}_{\rm LWE}$ being an NTCF family and our choice of $\mathcal{G},\chkalg_{\mathcal{G}_{\rm LWE}}$ and $\mathrm{SAMP}_{\mathcal{G}_{\rm LWE}}\;.$
\end{enumerate}
\end{myproof}

\begin{lemma} 
$\mathcal{F}_{\rm LWE}$ is an injective invariant family under the hardness assumption \emph{LWE}$_{l,q,D_{\mathbb{Z}_q,B_L}}$.
\end{lemma}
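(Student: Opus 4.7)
\medskip

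\noindent\textbf{Proof proposal.}
The plan is to verify the two conditions of \cref{def:injinv}, namely that (i) $\chkalg$ and $\sampalg$ coincide for $\cF_{\rm LWE}$ and $\cG_{\rm LWE}$, and (ii) the marginal key distributions $\genalg_{\cF_{\rm LWE}}(1^\lambda)$ and $\genalg_{\cG_{\rm LWE}}(1^\lambda)$ are computationally indistinguishable. Condition (i) is immediate by construction, since we explicitly defined $\chkalg_{\cG_{\rm LWE}} \deq \chkalg_{\cF_{\rm LWE}}$ and $\mathrm{SAMP}_{\cG_{\rm LWE}} \deq \mathrm{SAMP}_{\cF_{\rm LWE}}$ at the start of this section, and since $g_{k,b}$ is defined by the same algebraic expression as $f'_{k,b}$. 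So the content of the proof lies entirely in (ii).

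For (ii), the key observation is that the marginal distributions on keys are (up to a statistical error) exactly the two sides of a decisional LWE instance. Specifically, the BCMVV key-generation procedure $\genalg_{\cF_{\rm LWE}}$ produces a key of the form of an LWE matrix concatenated with an LWE sample, i.e.~a matrix of the shape $[\mathbf{A}\,|\,\mathbf{As}+\mathbf{e}]$ (with suitable dimensions, and with $\mathbf A$ uniform and $\mathbf e$ drawn from a discrete Gaussian of width $B_L$), whereas $\genalg_{\cG_{\rm LWE}}$ invokes $\mathrm{GenTrap}(1^{n+1},1^m,q)$, whose output key is by Theorem 2.6 of~\cite{randomness} statistically close to uniform over $\mathbb{Z}_q^{m\times(n+1)}$. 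I would therefore argue indistinguishability via a short hybrid:
\begin{myenumi}
    \item $H_0$: the real distribution of $\genalg_{\cF_{\rm LWE}}(1^\lambda)$.
    \item $H_1$: replace $\mathbf{As}+\mathbf{e}$ by a uniformly random vector in $\mathbb{Z}_q^m$, so that the whole key is uniform on $\mathbb{Z}_q^{m\times(n+1)}$.
    \item $H_2$: the real distribution of $\genalg_{\cG_{\rm LWE}}(1^\lambda)$.
\end{myenumi}
The step $H_0 \approx_c H_1$ is a direct reduction to the decisional $\mathrm{LWE}_{l,q,D_{\mathbb{Z}_q,B_L}}$ assumption: any PPT distinguisher between the two keys immediately yields a distinguisher between $(\mathbf A, \mathbf{As}+\mathbf{e})$ and $(\mathbf A, \mathbf u)$ with non-negligible advantage. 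The step $H_1 \approx_s H_2$ is statistical and follows from the guarantee of $\mathrm{GenTrap}$ in Theorem 2.6 of~\cite{randomness}, which states that the marginal of the key output is within negligible statistical distance of uniform.

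Combining these two steps by the triangle inequality and recalling that statistical indistinguishability implies computational indistinguishability (noted directly after \cref{comp_ind}) gives $H_0 \approx_c H_2$, which is precisely item~\ref{item:inj inv key indist} of \cref{def:injinv}. I do not expect any substantive obstacle here: the modified parameter $B_P$ from \cref{sec:ntcf_lwe_params} affects only the statistical closeness of $f_{k,b}$ and $f'_{k,b}$ (used in completeness) and does not enter the key distribution, so the LWE reduction is unaffected. The only minor care needed is to ensure that the auxiliary data (such as the short basis trapdoor produced by $\mathrm{GenTrap}$) is not part of the marginal distribution used in condition~\ref{item:inj inv key indist}; since \cref{def:injinv} explicitly refers to the \emph{marginal over the keys}, the trapdoor plays no role in the reduction.
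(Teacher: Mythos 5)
Your high-level structure matches the paper's: both arguments reduce injective invariance to (i) the algorithmic identity of $\chkalg$ and $\sampalg$, which is by construction, and (ii) computational indistinguishability of the two key marginals, argued by showing each is indistinguishable from the uniform distribution on $\mathbb{Z}_q^{m\times(n+1)}$. Your $H_1 \approx_s H_2$ step (statistical closeness of the $\algofont{GenTrap}$ key to uniform, via Theorem~2.6 of~\cite{randomness}) is correct, and so are your remarks that the modified $B_P$ does not enter the key distribution and that the trapdoor is excluded from the marginal in Item~\ref{item:inj inv key indist} of \cref{def:injinv}.

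The gap is in $H_0 \approx_c H_1$. You describe the $\cF_{\rm LWE}$ key as a ``direct'' LWE instance $[\mathbf{A}\,|\,\mathbf{A}\mathbf{s}+\mathbf{e}]$ with $\mathbf{A}$ uniform and $\mathbf{e}$ of width $B_L$, and conclude that any distinguisher ``immediately'' yields an $\mathrm{LWE}_{l,q,D_{\mathbb{Z}_q,B_L}}$ distinguisher. But in the BCMVV construction the key is built with a \emph{binary} secret $\mathbf{s}\in\{0,1\}^n$ and noise of width $B_V$, whereas $\mathrm{LWE}_{l,q,D_{\mathbb{Z}_q,B_L}}$ refers to a secret uniform over $\mathbb{Z}_q^l$ with $l < n$ and a \emph{smaller} noise width $B_L < B_V$. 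These mismatches in both the secret distribution and the error parameter mean that there is no direct reduction: a distinguisher between $H_0$ and $H_1$ does not automatically break $\mathrm{LWE}_{l,q,D_{\mathbb{Z}_q,B_L}}$. The paper closes this gap by invoking Lemma~9.3 of~\cite{mahadev2018classical} (in addition to Theorem~2.6 of~\cite{randomness} to handle the $\algofont{GenTrap}$ matrix), which is a lossy-sampling argument that first makes $\mathbf{A}$ lossy under $\mathrm{LWE}_{l,q,D_{\mathbb{Z}_q,B_L}}$ and then uses an information-theoretic step to absorb the binary secret and the $B_V$ noise. You should cite that lemma (or reproduce its argument) rather than asserting a direct reduction, and you should also note explicitly that $\mathbf{A}$ in $\genalg_{\cF_{\rm LWE}}$ comes from $\algofont{GenTrap}$ and is only statistically close to uniform, which is an additional (small) hybrid on the $H_0$ side that your writeup omits.
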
 

\begin{myproof}
For $\mathcal{G} = \mathcal{G}_{\rm LWE}$, the first condition is true by construction. The second condition follows from the fact that the marginal distributions of $\genalg_{\mathcal{F}_{\rm LWE}}$ and $\genalg_{\mathcal{G}_{\rm LWE}}$ on the key $k$ are both computationally indistinguishable from the uniform distribution on $\mathbb{Z}_q^{m\times(n+1)}$. For the former, this be seen by applying Lemma 9.3 from \cite{mahadev2018classical}, which assumes the hardness of LWE$_{l,q,D_{\mathbb{Z}_q,B_L}}$, as well as Theorem 2.6 of \cite{randomness}. 
For the latter, this follows from Theorem 2.6 of \cite{randomness}. Thus, they are also computationally indistinguishable from each other.
\end{myproof}

\subsection{Extractability Property from Lattice Knowledge Assumption}

Let $(\cF_{\rm LWE}, \cG_{\rm LWE})$ be the injective invariant NTCF family described in \cref{sec:ntcf_lwe_params}.
The extractability property for this family can be derived from a lattice knowledge assumption. The particular lattice knowledge assumption that we use states (informally) that any classical algorithm which can find a point suitably close to a lattice must have ``known'' the corresponding lattice point. This notion was formalized in \cite{k_assumption} as the LK-$\epsilon$ assumption, see \cref{assumpt:lke}. 

\begin{lemma}
\label{lemma:extprop}
$\mathcal{G}_{\rm LWE}$ satisfies the extractability property under the assumption that the marginal distribution of
$\algofont{GenTrap}(1^{n+1}, 1^m, q)$ over the key $\mathbf{A}$ satisfies the LK-$\frac{1}{4}$ assumption. 
\end{lemma}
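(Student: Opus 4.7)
The plan is to reduce the extractability property to the LK-$\tfrac14$ assumption by viewing the key $\mathbf{A}$ output by $\mathrm{GenTrap}(1^{n+1},1^m,q)$ as the basis of the lattice $\cL(\mathbf{A}) = \mathbf{A}\mathbb{Z}_q^{n+1}$. Given a PPT adversary $\cA$ that on input $k=\mathbf{A}$ (and coins $r$) outputs $y\in\cY$, I would construct a lattice-querying adversary $\tilde\cA$ which simply runs $\cA(\mathbf{A},r)$, issues a single oracle query $\mathbf{y}=y$, and halts. Invoking LK-$\tfrac14$ on $\tilde\cA$ yields a PPT lattice extractor $\tilde\cA^*$ with the guarantee that whenever the query $y$ is within $\tfrac14\lambda_\infty(\cL(\mathbf{A}))$ of a lattice point, $\tilde\cA^*$ returns the nearest lattice point except with negligible probability.

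The crux of the argument is a parameter check. For every $(b',x')\in\{0,1\}\times\cX$ and every $y\in\supp(g_{k,b'}(x'))$, we have $\|y - \mathbf{A}\binom{x'}{b'}\|_\infty \le B_P$; it therefore suffices that $B_P \le \tfrac14\,\lambda_\infty(\cL(\mathbf{A}))$ with overwhelming probability over $k\leftarrow\mathrm{GenTrap}$. This is exactly why we modified $B_P$ to $\tfrac{q}{2C_T m\sqrt{(n+1)\log q}}$ in \cref{sec:ntcf_lwe_params}: the analysis of $\mathrm{GenTrap}$ (Theorem 2.6 of \cite{randomness}) already implicitly lower bounds the norm of any nonzero lattice vector by $q/(C_T\sqrt{(n+1)\log q})$, giving the desired factor of $1/(2m)$ slack. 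Moreover, this bound makes the nearest lattice point to any such $y$ unique, since the $B_P$-balls around distinct lattice points are disjoint.

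I would then define the NTCF extractor $\cA^*(k,r,y)$ to internally simulate the LK game: initialize $\mathrm{St} := (\mathbf{A},r)$, sample fresh coins $\mathrm{coins}[\tilde\cA^*]$, and run $\tilde\cA^*(y,\mathrm{St},\mathrm{coins}[\tilde\cA^*])$ to obtain a candidate $\mathbf{p}$. From $\mathbf{p}$ recover the unique $\binom{x}{b}\in\mathbb{Z}_q^{n+1}$ with $\mathbf{A}\binom{x}{b}=\mathbf{p}$ (using that $\mathbf{A}$ has trivial kernel with overwhelming probability); if $(b,x)\in\{0,1\}\times\cX$ output $(b,x)$, otherwise output a default pair. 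To verify extractability, suppose $y\in\supp\{g_{k,b'}(x')\}_{b',x'}$ for some $(b',x')$. Then $y$ is within $B_P\le\tfrac14\lambda_\infty(\cL(\mathbf{A}))$ of the unique lattice point $\mathbf{A}\binom{x'}{b'}$, so by LK-$\tfrac14$, $\tilde\cA^*$ returns this point except with negligible probability; hence $\cA^*$ recovers $(b',x')$, forcing $y\in\supp(g_{k,b'}(x'))$ and ruling out the failure event of \cref{def:extractability}.

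The main obstacle is the parameter balancing in the middle step: one must verify simultaneously that our modified $B_P$ is superpolynomially larger than the LWE noise width $B_V$ (already discussed in \cref{sec:ntcf_lwe_params}) and at most $\tfrac14\lambda_\infty(\cL(\mathbf{A}))$ with overwhelming probability. A secondary subtlety is bookkeeping of random coins: the LK-$\tfrac14$ experiment feeds $\mathcal{A}$'s coins to $\mathcal{A}^*$ via the state while our NTCF definition supplies $r$ and $y$ separately to the extractor, so one must check that the two conventions match up and that the extra coins $\mathrm{coins}[\tilde\cA^*]$ can be absorbed into those of $\cA^*$. Finally, the negligible events that $\mathbf{A}$ has nontrivial kernel, that $\lambda_\infty(\cL(\mathbf{A}))$ is atypically small, or that $(k,t_k)\notin\cI_{\cG_{\mathrm{LWE}}}$, are all absorbed into the overall $\negl(\lambda)$ bound.
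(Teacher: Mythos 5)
Your reduction follows the same strategy as the paper: wrap $\cA$ as a lattice-querying adversary making a single oracle call, invoke LK-$\tfrac14$ to obtain a lattice-point extractor, do a parameter check to ensure the queried points fall within the $\tfrac14\lambda_\infty$ radius, and then undo the linear map $\mathbf{A}$ (on its trivial-kernel set of keys) to recover the preimage $(b,x)$. There is, however, a concrete error in your parameter bookkeeping: you assert that $\|y - \mathbf{A}\binom{x'}{b'}\|_\infty \le B_P$ for $y\in\supp(g_{k,b'}(x'))$, but the truncated discrete Gaussian noise $D_{\mathbb{Z}_q^m,B_P}$ used in the NTCF construction from~\cite{randomness} is supported on vectors of $\ell_2$-norm (hence also $\ell_\infty$-norm) up to $\sqrt{m}\,B_P$, not $B_P$. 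The correct requirement is therefore $\sqrt{m}\,B_P \le \tfrac14\lambda_\infty(\cL(\mathbf{A}))$, which is exactly what the modified choice $B_P = \tfrac{q}{2C_T m\sqrt{(n+1)\log q}}$ together with the spacing bound $\lambda_\infty(\cL(\mathbf{A})) \ge \tfrac{2}{\sqrt{m}}\cdot\tfrac{q}{C_T\sqrt{(n+1)\log q}}$ delivers with no slack to spare; your version of the inequality would be loose by a factor of $\sqrt{m}$ and obscures why the particular choice of $B_P$ is needed. Aside from that correction, your handling of the bad-key and atypically-short-lattice-vector events, the coin bookkeeping, and the uniqueness of the nearest lattice point all line up with the paper's argument, which the paper phrases via the four events $S_\cA,\ S_{\cA^*},\ R_\cA,\ R_{\cA^*}$ but is otherwise the same reduction.
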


\begin{myproof}
To emphasize that the key $k$ produced by $\algofont{GenTrap}(1^{n+1}, 1^m, q)$ is a matrix, we denote it as $\mathbf{A}$ in the following. 

First, recall that by Theorem 4.6 of \cite{randomness}, for all $(\mathbf{A},t_\mathbf{A}) \in \mathcal{I}_{\mathcal{G}_{\rm LWE}}$ and all $y \in \mathbb{Z}_q^m$, if $\mathbf{y}$ is within $l_2$-distance $r(\lambda) = q/C_T\sqrt{(n+1)\log q}$ of a lattice point $\mathbf{p}$ of $\cL(\mathbf{A})$, \algofont{Invert} can correctly recover the preimage $(\mathbf{x},b)$ of the closest lattice point $\mathbf{p}$ under $\mathbf{A}$. In particular, this implies that the $l_2$-distance between 2 lattice points of $\cL(\mathbf{A})$ is at least $2r$. Thus, in particular, the $l_2$ norm of a shortest lattice vector with respect to the infinity norm is at least $2r$. Since $\lVert\cdot\rVert_2 \leq \sqrt{m} \lVert\cdot\rVert_\infty$ on $\mathbb{R}^m$, we can conclude that 
$2r \leq \sqrt{m}\lambda_\infty(\cL(\mathbf{A}))$. From this we can conclude that for our choice $B_P = r/2\sqrt{m}$, we have that for all $(\mathbf{A},t_\mathbf{A}) \in \mathcal{I}_{\mathcal{G}_{\rm LWE}}\,,$
\begin{equation}
\label{eq:bound}
    \sqrt{m}B_P \leq \frac{1}{4}\lambda_\infty(\cL(\mathbf{A}))\,.
\end{equation}
Let $\mathcal{A}$, $\mathcal{A}^*$ be classical probabilistic algorithms as in the definition of the LK-$\epsilon$ assumption, though, for the sake of convenience, let $\mathcal{A}$ output the point $\mathbf{y}$ it generates, and $\mathcal{A}^*$ outputs a preimage $(\mathbf{x},b)$ of $\mathbf{p}$ under $\mathbf{A}$ by performing Gaussian elimination, which is an efficient procedure. We denote the random coin distribution of $\mathcal{A},\mathcal{A}^*$ by $r \leftarrow R$.
Furthermore, we introduce the following events:
\begin{itemize}
    \item[\textbullet] $S_\mathcal{A}$: $\mathcal{A}$ produces a point $\mathbf{y}$ within $\frac{1}{4}\lambda_\infty(\mathcal{L}(\mathbf{A}))$ of a grid point of $\mathbf{A}$.
    \item[\textbullet] $S_{\mathcal{A}^*}$: $\mathcal{A}^*$ produces a point $\Tilde{\mathbf{x}} = (\mathbf{x},b)$ such that $\mathbf{A}\Tilde{\mathbf{x}}$ is the closest lattice point to the output $\mathbf{y}$ of $\mathcal{A}$.
    \item[\textbullet] $R_\mathcal{A}$: $\mathcal{A}$ produces a point within $d_{max} \coloneqq \sqrt{m}B_P$ of a grid point $\mathbf{p}$ of $\mathbf{A}$, and there exists a preimage $\Tilde{\mathbf{x}}$ of $\mathbf{p}$ under $\mathbf{A}$ such that $\Tilde{\mathbf{x}} \in \Z_q^n\times\{0,1\}$.
    \item[\textbullet] $R_{\mathcal{A}^*}$: $\mathcal{A}^*$ produces a point $\Tilde{\mathbf{x}} = (\mathbf{x},b)\in \Z_q^n\times\{0,1\}$ such that $\mathbf{A}\Tilde{\mathbf{x}}$ is within distance $d_{max}$ of the output $\mathbf{y}$ of $\mathcal{A}$.
\end{itemize}
We will denote the probability distribution Pr$_{(k,t_k) \leftarrow \genalg_{\mathcal{G_{\rm LWE}}}(1^\lambda), \ r \leftarrow R}$ simply as Pr from now on. 
Then, the LK-$\frac{1}{4}$ assumption states that 
\begin{equation*}
    \text{Pr}\left[\overline{S_{\mathcal{A}}}_{^*} \land S_{\mathcal{A}}\right] = \negl(\lambda)\;.
\end{equation*}
It follows that:
\begin{align*}
    \text{negl}(\lambda) &= \text{Pr}\left[\,\overline{S_{\mathcal{A}}}_{^*}\land S_{\mathcal{A}}\right] \geq \text{Pr}\left[\,\overline{S_{\mathcal{A}}}_{^*} \land S_{\mathcal{A}}\land (\mathbf{A},t_\mathbf{A}) \in \mathcal{I}_{\mathcal{G}_{\rm LWE}}\right]  \\
    &\annotatesign{Eq. \eqref{eq:bound}}{\geq}\hspace{15pt} \text{Pr}\left[\,\overline{S_{\mathcal{A}}}_{^*} \land R_{\mathcal{A}} \land (\mathbf{A},t_\mathbf{A}) \in \mathcal{I}_{\mathcal{G}_{\rm LWE}}\right] = \text{Pr}\left[\,\overline{R_{\mathcal{A}}}_{^*} \land R_{\mathcal{A}} \land (\mathbf{A},t_\mathbf{A}) \in \mathcal{I}_{\mathcal{G}_{\rm LWE}}\right]\, \\
    &\geq \text{Pr}\left[\,\overline{R_{\mathcal{A}}}_{^*} \land R_{\mathcal{A}}\right] - \text{Pr}\left[(\mathbf{A},t_\mathbf{A}) \notin \mathcal{I}_{\mathcal{G}_{\rm LWE}}\right]\,
    = \text{Pr}\left[\,\overline{R_{\mathcal{A}}}_{^*} \land R_{\mathcal{A}}\right] - \negl(\lambda)\;.
\end{align*}
Now, note that the event $\overline{R_{\mathcal{A}}}_{^*} \land R_{\mathcal{A}}$ is equivalent to the event $y \notin \supp\big(g_{k,b}(x)\big) \land y \in \supp \big\{ g_{k,b'}(x') \big\}_{\substack{b'\in\bits, x' \in \cX}}$.
Thus,
\begin{align*}
    \text{negl}(\lambda) = \text{Pr}\left[y \notin \supp\big(g_{k,b}(x)\big) \land y \in \supp \big\{ g_{k,b'}(x') \big\}_{\substack{b'\in\bits,\,x' \in \cX}}\right]\,.
\end{align*}
\end{myproof}

Together with the properties in \cref{sec:ntcf_lwe_params}, this implies the existence of an \eiintcf~based on the LWE and LK-$\eps$ assumptions:
\begin{theorem}[Existence of an \eiintcf{}]
\label{thm:e^2NTCF}
The family pair ($\mathcal{F}_{\rm LWE},\mathcal{G}_{\rm LWE}$) is an \eiintcf{} with completeness parameter $c_\cF = 1$ under the hardness assumption \emph{LWE}$_{l,q,D_{\mathbb{Z}_q,B_L}}$ and assuming that the marginal distribution of
$\algofont{GenTrap}(1^{n+1}, 1^m, q)$ over the key $\mathbf{A}$ satisfies the \emph{LK}-$\frac{1}{4}$ assumption.
\end{theorem}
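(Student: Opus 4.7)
The proof is essentially an assembly argument: the theorem asserts that the pair $(\cF_{\rm LWE}, \cG_{\rm LWE})$ satisfies the three conditions of \cref{def:e^2ntcf}, and each of these three conditions has been established by a separate lemma or reference earlier in the section. The plan is therefore to verify each bullet of \cref{def:e^2ntcf} in turn, citing the appropriate result, and to track the assumptions that each requires so that the final list of hypotheses matches the statement of the theorem.

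First I would argue that $\cF_{\rm LWE}$ is an NTCF (the first bullet). This is essentially Theorem 4.1 of \cite{randomness}, but applied with our modified parameter $B_P = q/(2C_T m \sqrt{(n+1)\log q})$. Here I would recall the discussion in \cref{sec:ntcf_lwe_params}: the ratios $B_P/B_V$ and $B_V/B_L$ remain superpolynomial in $\lambda$, so \cref{eq:hellinger ntcf} continues to hold, the trapdoor inversion only needs to work on a strictly smaller error ball (so still works), and all the other properties (including the AHCB property) carry through verbatim. In particular the completeness constant $c_{\cF}$ remains $1$ for this construction. The AHCB and the inversion argument both rely on LWE$_{l,q,D_{\Z_q, B_L}}$, which is thus inherited as an assumption.

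Second I would combine the two lemmas already established in this section. \cref{lemma:tif} shows that $\cG_{\rm LWE}$ is a trapdoor injective family (no cryptographic hardness needed; this is purely structural, relying only on Theorem 2.6 of \cite{randomness} for the properties of $\algofont{GenTrap}$). The subsequent lemma in this section shows that $\cF_{\rm LWE}$ is injective invariant with $\cG_{\rm LWE}$ as the associated trapdoor injective family, which uses exactly the LWE hardness assumption LWE$_{l,q,D_{\Z_q, B_L}}$. Together these two facts give the second bullet of \cref{def:e^2ntcf}.

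Finally, the third bullet is precisely the content of \cref{lemma:extprop}, which shows that $\cG_{\rm LWE}$ satisfies the extractability property under the LK-$\frac{1}{4}$ assumption on the marginal key distribution of $\algofont{GenTrap}(1^{n+1},1^m,q)$. Collecting the three bullets and their hypotheses yields exactly the LWE and LK-$\frac{1}{4}$ assumptions stated in the theorem. I do not expect any real obstacle: every step is essentially a citation, and the only care required is to verify that the parameter change in $B_P$ (made precisely to accommodate \cref{lemma:extprop}) does not break any of the inputs to the NTCF theorem from \cite{randomness}, which was already explained in \cref{sec:ntcf_lwe_params}. The proof can therefore be written in just a few lines.
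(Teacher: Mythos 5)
Your proposal is correct and mirrors the paper's approach exactly: the paper does not give an explicit proof body for this theorem but presents it as an immediate consequence of the preceding results, namely the parameter discussion in \cref{sec:ntcf_lwe_params} (establishing that $\cF_{\rm LWE}$ remains an NTCF with $c_\cF = 1$ under the modified $B_P$), \cref{lemma:tif} and the subsequent injective-invariance lemma (establishing the second bullet), and \cref{lemma:extprop} (establishing extractability). Your assembly argument, including the tracking of which assumption supports which bullet, matches this decomposition precisely.
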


Using this \eiintcf~construction in \protref{poq:2test}, we then get a single-round proof of quantumness from LWE and the LK-$\eps$ assumption.

\begin{corollary}[Existence of Single-Round Proof of Quantumness based on LWE and LK-$\eps$]
\label{corr:2testsingleroundpoq} Assuming that \emph{LWE}$_{l,q,D_{\mathbb{Z}_q,B_L}}$ is hard and that the marginal distribution of
$\algofont{GenTrap}(1^{n+1}, 1^m, q)$ over the key $\mathbf{A}$ satisfies the \emph{LK}-$\frac{1}{4}$ assumption., there exists a single-round proof of quantumness with completeness 1 and soundness $\frac{3}{4}$.  
\end{corollary}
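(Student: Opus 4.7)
The plan is straightforward since the corollary is essentially a direct composition of two results already established in the paper. First I would invoke \cref{thm:e^2NTCF}, which guarantees that under the assumed hardness of LWE$_{l,q,D_{\mathbb{Z}_q,B_L}}$ together with the LK-$\frac{1}{4}$ assumption on the marginal key distribution of $\algofont{GenTrap}(1^{n+1},1^m,q)$, the pair $(\mathcal{F}_{\rm LWE},\mathcal{G}_{\rm LWE})$ constitutes a valid \eiintcf{} with completeness parameter $c_\cF = 1$. This reduces the statement to exhibiting a proof of quantumness from a concrete \eiintcf.

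Next, I would instantiate \protref{poq:2test} with the pair $(\mathcal{F}_{\rm LWE},\mathcal{G}_{\rm LWE})$ and apply \cref{thm:poq}. Substituting $c_\cF = 1$ into the completeness bound $\tfrac{1+c_\cF}{2} - \negl(\lambda)$ yields $1 - \negl(\lambda)$, which matches the claimed completeness (up to a negligible term, as is standard). The soundness bound of $\tfrac{3}{4} + \negl(\lambda)$ carries over verbatim from \cref{thm:poq}.

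I expect no real obstacle here, as the work has already been done in \cref{thm:e^2NTCF} and \cref{thm:poq}. The only thing worth verifying when writing out the argument is that the assumptions of \cref{thm:e^2NTCF} (in particular, the modified parameter choice for $B_P$ discussed in \cref{sec:ntcf_lwe_params}, which is needed for the LK-$\frac{1}{4}$-based derivation of the extractability property) are consistent with those assumed in the corollary, which is immediate by construction of $\mathcal{F}_{\rm LWE}$ and $\mathcal{G}_{\rm LWE}$. No additional probabilistic analysis or reduction is required beyond chaining the two theorems.
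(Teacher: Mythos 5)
Your proposal is correct and matches the paper's own (implicit) argument exactly: the corollary is obtained by combining \cref{thm:e^2NTCF}, which establishes that $(\mathcal{F}_{\rm LWE},\mathcal{G}_{\rm LWE})$ is an \eiintcf{} with $c_\cF = 1$ under the stated assumptions, with \cref{thm:poq} applied to \protref{poq:2test}. The substitution $c_\cF = 1$ into the completeness bound $\tfrac{1+c_\cF}{2}-\negl(\lambda)$ and the soundness bound $\tfrac{3}{4}+\negl(\lambda)$ is all that is needed.
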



\bibliographystyle{alpha}
\bibliography{main}

\appendix

\section{Results used in Proof of \cref{thm:poq2}}
In this appendix, we collect a number of technical results that we use in the proof of \cref{thm:poq2}.

\begin{lemma} 
\label{lemma:rank1A}
Let $\lambda$ be a security parameter, $a,b \geq 1$ integer functions of $\lambda$, $q$ a prime in the range $\left[2^{\lambda - 1},2^\lambda\right)$.
Then statistical distance of the distributions $\bR\leftarrow_U \rank_1(\mathbb{Z}_q^{a\times b})$ and $\bR\leftarrow_U \rank_{\leq 1}(\mathbb{Z}_q^{a\times b})$
is negligible in $\lambda$.
\end{lemma}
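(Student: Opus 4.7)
The plan is to show that the ``rank-at-most-1'' distribution differs from the ``rank-exactly-1'' distribution only through the inclusion of the zero matrix, and that this single extra point carries negligible weight.

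First I would observe the key set-theoretic identity $\rank_{\leq 1}(\mathbb{Z}_q^{a\times b}) = \rank_1(\mathbb{Z}_q^{a\times b}) \sqcup \{\mathbf{0}\}$, since the only matrix of rank $0$ is the zero matrix. Next I would count $N_1 \deq |\rank_1(\mathbb{Z}_q^{a\times b})|$ by parametrising any rank-$1$ matrix as $\mathbf{u}\mathbf{v}^T$ with $\mathbf{u}\in\mathbb{Z}_q^a\setminus\{\mathbf 0\}$ and $\mathbf{v}\in\mathbb{Z}_q^b\setminus\{\mathbf 0\}$, then dividing out by the scaling symmetry $(\mathbf{u},\mathbf{v})\sim(c\mathbf{u},c^{-1}\mathbf{v})$ for $c\in \mathbb{Z}_q^\times$ (which acts freely on nonzero pairs because $\mathbb{Z}_q$ is a field). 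This yields
\[
    N_1 = \frac{(q^a-1)(q^b-1)}{q-1}\,.
\]

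Now let $D_0$ be uniform on $\rank_1(\mathbb{Z}_q^{a\times b})$ and $D_1$ uniform on $\rank_{\leq 1}(\mathbb{Z}_q^{a\times b})$, so $|\supp(D_1)|=N_1+1$. A direct computation gives
\[
    d(D_0,D_1) = \frac{1}{2}\left[ N_1\cdot\left|\frac{1}{N_1}-\frac{1}{N_1+1}\right| + \frac{1}{N_1+1}\right] = \frac{1}{N_1+1}\,.
\]
Since $a,b\geq 1$, we have $N_1 \geq q-1$, and since $q\geq 2^{\lambda-1}$ we conclude $d(D_0,D_1)\leq 1/q \leq 2^{1-\lambda}$, which is negligible in $\lambda$.

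There is no real obstacle here: the argument is essentially a one-line counting observation once the decomposition into rank-$0$ and rank-$1$ matrices is made explicit. The only subtlety is correctly accounting for the overcounting in the outer-product parametrisation, which is handled by the elementary fact that $\mathbb{Z}_q$ is a field for prime $q$.
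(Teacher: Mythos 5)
Your proof is correct and follows the same counting approach as the paper's one-line argument, merely making the bound fully explicit: you compute $|\rank_1(\mathbb{Z}_q^{a\times b})| = (q^a-1)(q^b-1)/(q-1)$ and derive the exact statistical distance $1/(N_1+1) \leq 2^{1-\lambda}$, whereas the paper simply notes there is one rank-$0$ matrix versus exponentially many rank-$1$ matrices and concludes. Both arguments are sound; yours is more quantitative but not substantively different.
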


\noindent\emph{Proof}. There is only one rank 0 matrix and exponentially (in $\lambda$) many rank 1 matrices in the set $\rank_{\leq 1}(\mathbb{Z}_q^{a\times b})$, so the statistical distance between the distributions is negligible in $\lambda$.  \qed

It is easy to see that one can sample uniformly from $\rank_{\leq 1}$ by taking the outer product of two random vectors (see e.g.~\cite[page 4]{Goldwasser2010RobustnessOT}).
\begin{lemma}
\label{lemma:rank1B}
Let $q$ be a prime, $a,b$ integers.
The distributions $\bR \leftarrow_U \rank_{\leq 1}(\mathbb{Z}_q^{a\times b})$ and $R \leftarrow \{\bu\cdot \bv^T | \bu \leftarrow_U \mathbb{Z}_q^{a}, \bv \leftarrow_U \mathbb{Z}_q^{b}\}$ are identical. 
\end{lemma}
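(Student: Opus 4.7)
The plan is to prove the identity of distributions by a fiber-counting argument on the outer-product map $\varphi : \mathbb{Z}_q^a \times \mathbb{Z}_q^b \to \mathbb{Z}_q^{a \times b}$ defined by $\varphi(\mathbf{u}, \mathbf{v}) = \mathbf{u} \mathbf{v}^T$. The outer-product distribution is by definition the pushforward $\varphi_*$ of the uniform distribution on the source, so it suffices to show that $\varphi$ is surjective onto $\rank_{\leq 1}(\mathbb{Z}_q^{a \times b})$ and that every fiber over its image has the same cardinality, at which point the pushforward is automatically uniform on $\rank_{\leq 1}$.

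First, I would verify the image of $\varphi$ equals $\rank_{\leq 1}(\mathbb{Z}_q^{a \times b})$. The forward inclusion is immediate because $\bu \bv^T$ manifestly has rank at most one. For the reverse inclusion, every $M \in \rank_{\leq 1}$ admits an outer-product factorization: the zero matrix factors as $0 \cdot 0^T$, and any nonzero rank-$1$ matrix has a nonzero column $\mathbf{u}_0$ and a uniquely determined row vector $\mathbf{v}_0$ (after a choice of rescaling) so that $M = \mathbf{u}_0 \mathbf{v}_0^T$.

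Next, I would compute the fiber sizes of $\varphi$. For a nonzero rank-$1$ matrix $M = \mathbf{u}_0 \mathbf{v}_0^T$ with $\mathbf{u}_0, \mathbf{v}_0 \neq 0$, the group of rescalings $(\mathbf{u}, \mathbf{v}) \mapsto (c \mathbf{u}, c^{-1} \mathbf{v})$ for $c \in \mathbb{Z}_q^*$ acts freely and transitively on $\varphi^{-1}(M)$, yielding $|\varphi^{-1}(M)| = q - 1$. Combined with the total count $q^{a+b}$ of source pairs and $|\rank_{\leq 1}(\mathbb{Z}_q^{a \times b})|$ from a standard Gaussian-binomial computation, this pins down the mass $\varphi_*$ assigns to each nonzero rank-$1$ matrix and matches it with the uniform weight on $\rank_{\leq 1}$.

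The main obstacle is the zero matrix, whose preimage under $\varphi$ includes all pairs $(\mathbf{u}, \mathbf{v})$ with $\mathbf{u} = 0$ or $\mathbf{v} = 0$ and is therefore not a single scalar orbit; this fiber must be shown to contribute the correct mass to make the pushforward uniform on all of $\rank_{\leq 1}$ rather than just on the nonzero rank-$1$ stratum. Handling this degenerate fiber is the key technical step, which I would justify by invoking the cited treatment in~\cite{Goldwasser2010RobustnessOT} where the correspondence between the outer-product sampling procedure and the uniform distribution on $\rank_{\leq 1}$ is established with the appropriate normalization.
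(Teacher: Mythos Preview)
Your fiber-counting approach is the natural one, and your computation that each nonzero rank-$1$ matrix has exactly $q-1$ preimages under $\varphi$ is correct. However, the step you flag as ``the main obstacle'' is in fact a genuine obstruction, not a technicality that a citation can dispatch: the fiber over the zero matrix consists of all pairs with $\mathbf{u}=0$ or $\mathbf{v}=0$, which by inclusion--exclusion has cardinality $q^a + q^b - 1$. For $a,b \geq 1$ and $q \geq 2$ this is strictly larger than $q-1$, so the fibers are \emph{not} all the same size and the pushforward $\varphi_*$ is \emph{not} uniform on $\rank_{\leq 1}(\mathbb{Z}_q^{a\times b})$. The outer-product distribution places mass $(q^a+q^b-1)/q^{a+b}$ on the zero matrix versus $(q-1)/q^{a+b}$ on each nonzero rank-$1$ matrix, so the two distributions in the lemma are not literally identical.

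The paper itself does not give a proof of this lemma; it simply asserts the claim and points to \cite{Goldwasser2010RobustnessOT}. What actually holds (and what suffices for the downstream use in \cref{corollary:rank1} and \cref{lemma:comp_ind2}) is that the two distributions are \emph{statistically} indistinguishable: the excess mass on the zero matrix is $O(q^{-\min(a,b)}) = \negl(\lambda)$. Your writeup would be correct if you replaced the goal ``identical'' with ``within negligible statistical distance'' and simply bounded the discrepancy coming from the zero fiber, rather than deferring to the reference to make the fibers equal---they are not.
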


\begin{lemma}
\label{lemma:rank1C}   
Let $\lambda$ be a security parameter, $a,b \geq 1$ integer functions of $\lambda$, $q$ a prime in the range $\left[2^{\lambda - 1},2^\lambda\right)$. Then the statistical distance between the distributions  
\begin{equation*}
    D_0 = \big\{\bu\bv^T |\bu \leftarrow_U \mathbb{Z}_q^a , \bv \leftarrow_U \mathbb{Z}_q^{b}\big\}
\end{equation*}
and 
\begin{equation*}
    D_1 = \big\{(1,\bu)\bv^T |\bu \leftarrow_U \mathbb{Z}_q^{a-1} , \bv \leftarrow_U \mathbb{Z}_q^{b}\big\}
\end{equation*}
is negligible in $\lambda$.
\end{lemma}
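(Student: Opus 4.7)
The plan is to compute the statistical distance directly by grouping matrices according to their first row. The key preliminary observation is that for any nonzero $\mathbf{w} \in \mathbb{Z}_q^b$, the distributions $D_0$ and $D_1$ induce the \emph{same} conditional distribution on the full matrix given that the first row equals $\mathbf{w}^T$. Indeed, conditioning a $D_1$-sample on first row $=\mathbf{w}^T$ forces $\mathbf{v}=\mathbf{w}$, so that the remaining rows are $(u_i \mathbf{w}^T)_{i=1}^{a-1}$ with $u_i$ i.i.d.\ uniform on $\mathbb{Z}_q$. Similarly, conditioning a $D_0$-sample on first row $=\mathbf{w}^T$ forces $u_1\neq 0$ and $\mathbf{v}=\mathbf{w}/u_1$; writing $\tilde u_i := u_i/u_1$ for $i\geq 2$, the remaining rows become $\tilde u_i \mathbf{w}^T$ with $\tilde u_i$ again i.i.d.\ uniform on $\mathbb{Z}_q$.

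This observation implies that for every matrix $M$ with nonzero first row, $|D_0(M)-D_1(M)|$ equals $|\Pr_{D_0}[\mathrm{row}_1=\mathrm{row}_1(M)]-\Pr_{D_1}[\mathrm{row}_1=\mathrm{row}_1(M)]|$ times the common conditional density, so that summing over all $M$ with first row equal to a fixed $\mathbf{w}\neq\mathbf{0}$ telescopes to a difference of first-row marginals. The first row of a $D_1$-sample is simply $\mathbf{v}^T$, uniform on $\mathbb{Z}_q^b$. The first row of a $D_0$-sample is $u_1\mathbf{v}^T$ with $u_1,\mathbf{v}$ uniform, which a short computation shows equals any fixed $\mathbf{w}\neq\mathbf{0}$ with probability $(q-1)/q^{b+1}$. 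Summing over the $q^b-1$ nonzero $\mathbf{w}$ thus contributes $(q^b-1)/q^{b+1}\leq 1/q$ to $\sum_M |D_0(M)-D_1(M)|$.

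For the case of first row $=\mathbf{0}$, the supports of $D_0$ and $D_1$ differ: only the zero matrix lies in the support of $D_1$ (receiving mass $1/q^b$), whereas $D_0$ additionally puts mass on nonzero matrices whose first row is zero (the event $u_1=0$ with $(u_2,\ldots,u_a)\neq\mathbf{0}$ and $\mathbf{v}\neq\mathbf{0}$). I would therefore compute $\sum_{M:\,\mathrm{row}_1(M)=\mathbf{0}}|D_0(M)-D_1(M)|$ directly, using $D_0(\mathbf{0})=1/q^a+1/q^b-1/q^{a+b}$ and $\Pr_{D_0}[\mathrm{row}_1=\mathbf{0}]=1/q+1/q^b-1/q^{b+1}$; the two terms combine to $1/q-1/q^{b+1}$. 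Adding the two contributions and dividing by two yields $d(D_0,D_1)=1/q-1/q^{b+1}\leq 2^{-\lambda+1}=\negl(\lambda)$.

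The only subtle point in this plan is the asymmetric ``first row zero'' case, since one cannot simply reduce to a difference of first-row marginals there; but it is handled by the short explicit calculation above. Everything else is routine algebra on the marginals of an outer product.
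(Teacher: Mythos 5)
Your proof is correct, but it takes a more computational route than the paper's. The paper's argument is a two-step change-of-coordinates: first replace the distribution of $u_1$ (uniform on $\mathbb{Z}_q$) with the uniform distribution on $\mathbb{Z}_q\setminus\{0\}$, which changes $D_0$ by statistical distance at most $1/q$ via a data-processing argument; then observe that, once $u_1\neq 0$, the reparametrization $\bv\mapsto u_1\bv$, $u_i\mapsto u_i/u_1$ for $i\geq 2$ is measure-preserving and turns the sample $\bu\bv^T$ into exactly a $D_1$-sample, so the second step incurs zero distance. Your argument instead fixes the first row $\bw^T$ of the output matrix, shows that the two conditional distributions of the remaining rows coincide whenever $\bw\neq\mathbf 0$ (so that the per-$\bw$ $L^1$-mass difference telescopes to a difference of first-row marginals), and handles the asymmetric case $\bw=\mathbf 0$ by a direct calculation on the mass of the zero matrix and the off-diagonal mass under $D_0$. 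Your computation is correct throughout — in particular $D_0(\mathbf 0)=q^{-a}+q^{-b}-q^{-a-b}$, $\Pr_{D_0}[\mathrm{row}_1=\mathbf 0]=q^{-1}+q^{-b}-q^{-b-1}$, and the final total $d(D_0,D_1)=1/q-1/q^{b+1}$ — and it has the minor bonus of producing the exact statistical distance rather than just the bound $1/q$. The paper's version is shorter and more structural; yours is more explicit and, by pinning down the exact value, makes it obvious that the bound cannot be substantially improved. Both are valid.
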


\noindent \emph{Proof.} 
First, note that the statistical distance between the distributions $u \leftarrow_U \mathbb{Z}_q$ and $u \leftarrow_U \mathbb{Z}_q \setminus\{0\}$ is negligible in $\lambda$. Thus, the statistical distance between the distributions $D_0$ and
\begin{equation*}
    D' = \big\{\bu\bv^T |\bu \leftarrow_U (\mathbb{Z}_q \setminus\{0\})\times \mathbb{Z}_q^{a-1} , \bv \leftarrow_U \mathbb{Z}_q^{b}\big\}
\end{equation*}
is also negligible in $\lambda$. 

Furthermore, the statistical distance between $D'$ and $D_1$ is zero. Thus, $D_0$ is indistinguishable from $D_1$. \qed

\begin{corollary}[Corollary of \crefrange{lemma:rank1A}{lemma:rank1C}]
\label{corollary:rank1}
Let $\lambda$ be a security parameter, $a,b \geq 1$ integer functions of $\lambda$, $q$ a prime in the range $\left[2^{\lambda - 1},2^\lambda\right)$.
Then the distributions $\bR\leftarrow_U \rank_1(\mathbb{Z}_q^{a\times b})$, $\bR\leftarrow \big\{\bu\bv^T |\bu \leftarrow_U \mathbb{Z}_q^{a}, \bv \leftarrow_U \mathbb{Z}_q^{b}\big\}$ and  $R\leftarrow \big\{(1,\bu)\bv^T |\bu \leftarrow_U \mathbb{Z}_q^{a-1}, \bv \leftarrow_U \mathbb{Z}_q^{b}\big\}$ are computationally indistinguishable.
\end{corollary}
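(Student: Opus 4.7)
The plan is to chain the three preceding lemmas via the triangle inequality for statistical distance, and then invoke the standard fact that statistical indistinguishability implies computational indistinguishability (which was noted right after \cref{comp_ind}). Denote the three distributions in the statement by $D_A$ (uniform on $\rank_1(\mathbb{Z}_q^{a\times b})$), $D_B$ (outer product $\bu \bv^T$ with $\bu \unifsample \mathbb{Z}_q^a$, $\bv \unifsample \mathbb{Z}_q^b$), and $D_C$ (the $(1,\bu)\bv^T$ distribution). First I would observe that by \cref{lemma:rank1A}, $D_A$ is at negligible statistical distance from the uniform distribution on $\rank_{\leq 1}(\mathbb{Z}_q^{a\times b})$, and by \cref{lemma:rank1B} that latter distribution is \emph{identical} to $D_B$; hence $d(D_A, D_B) = \negl(\lambda)$. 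Next, \cref{lemma:rank1C} gives $d(D_B, D_C) = \negl(\lambda)$ directly.

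By the triangle inequality for $d(\cdot,\cdot)$, any pair drawn from $\{D_A, D_B, D_C\}$ is at negligible statistical distance. Since statistical indistinguishability implies computational indistinguishability for PPT distinguishers (as stated immediately after \cref{comp_ind}), the three ensembles are pairwise computationally indistinguishable, which is exactly the claim. There is essentially no obstacle here: the corollary is a bookkeeping step that packages the earlier lemmas into the form needed by the DDH-based constructions (in particular for \cref{lemma:comp_ind2} and \cref{claim:aux}), where the three concrete sampling descriptions $D_A$, $D_B$, $D_C$ are used interchangeably for the rank-1 part of a key matrix.
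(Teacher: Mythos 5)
Your proposal is correct and is exactly the intended argument: the paper gives no explicit proof and simply labels the statement as a corollary of Lemmas~\ref{lemma:rank1A}--\ref{lemma:rank1C}, so the chaining of those three lemmas through the triangle inequality for statistical distance, followed by the standard implication from statistical to computational indistinguishability noted after \cref{comp_ind}, is precisely the implicit justification.
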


It is a standard combinatorial result that a uniformly random matrix  has full rank with overwhelming probability:
\begin{lemma}[Rank of uniformly distributed matrix]
\label{lemma:rudm}
Let $\lambda$ be a security parameter, q be a prime in the range $\left[2^{\lambda-1},2^\lambda\right)$ and $a = O(\lambda)$, $b$ an integer function of $\lambda$. Then 
\begin{equation*}
    \Pr_{\bM \leftarrow_U \mathbb{Z}_q^{a\times (a+b)}}\left[{\rm rank}(\bM) < a\right] = O(2^{-\lambda})\,.
\end{equation*}
\end{lemma}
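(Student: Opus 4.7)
The plan is to sample $\bM$ one row at a time and then union-bound over $a$ ``bad events'' in which a freshly drawn row falls into the span of the previous ones. Concretely, for $i=1,\ldots,a$, I would let $E_i$ denote the event that the $i$-th row of $\bM$ lies in the linear span of the first $i-1$ rows. The key observation is that $\{\mathrm{rank}(\bM)<a\}=\bigcup_{i=1}^{a}E_i$, since a row-rank deficiency is equivalent to at least one row being a linear combination of the earlier ones. It will therefore suffice to estimate each $\Pr[E_i]$ individually.

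Conditioned on $\overline{E_1}\cap\cdots\cap\overline{E_{i-1}}$, the first $i-1$ rows span a subspace of $\mathbb{Z}_q^{a+b}$ of dimension exactly $i-1$, which contains $q^{i-1}$ vectors. Since the $i$-th row is drawn uniformly and independently from $\mathbb{Z}_q^{a+b}$, the conditional probability of $E_i$ will be exactly $q^{i-1}/q^{a+b}=q^{i-1-(a+b)}$. A union bound over $i$ then yields
\begin{equation*}
    \Pr\!\left[\mathrm{rank}(\bM)<a\right]\;\leq\;\sum_{i=1}^{a}q^{i-1-(a+b)}\;\leq\;\frac{q^{-b-1}}{1-q^{-1}}\;\leq\;2q^{-b-1}\,,
\end{equation*}
where in the middle step I reindex and use the standard geometric-series bound together with $q\geq 2$.

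Finally, since $q\geq 2^{\lambda-1}$ and (implicitly) $b\geq 0$ in the lemma's setting, this quantity is at most $2\cdot 2^{-(\lambda-1)}=O(2^{-\lambda})$, which is the claimed bound. The argument is entirely standard, so there is no real obstacle; the only minor subtlety worth flagging is that the geometric sum is dominated by its last term $q^{-b-1}$, so the final bound does not depend on $a$ and the hypothesis $a=O(\lambda)$ is not actually invoked in the probability estimate itself (it only matters insofar as $a\leq a+b$ must hold for full row rank to be possible).
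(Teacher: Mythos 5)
Your argument is correct, and it is the standard proof of this fact. Note that the paper does not actually supply a proof of this lemma at all — it prefaces it with ``It is a standard combinatorial result that a uniformly random matrix has full rank with overwhelming probability'' and then moves on — so there is no paper proof against which to compare your route.

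One small presentational wrinkle worth tightening: you compute the conditional probability $\Pr\bigl[E_i \mid \overline{E_1}\cap\cdots\cap\overline{E_{i-1}}\bigr] = q^{i-1-(a+b)}$ and then invoke a ``union bound,'' but the union bound in its usual form $\Pr[\bigcup_i E_i]\leq\sum_i\Pr[E_i]$ wants the unconditional probabilities $\Pr[E_i]$. This is harmless here, and can be patched either by observing that $\Pr[E_i]\leq q^{i-1-(a+b)}$ unconditionally (the span of any $i-1$ vectors in $\mathbb{Z}_q^{a+b}$ has at most $q^{i-1}$ elements, whatever its exact dimension, and row $i$ is independent of the earlier rows), or by writing $\bigcup_i E_i$ as the disjoint union of the events $E_i\cap\overline{E_1}\cap\cdots\cap\overline{E_{i-1}}$ and bounding each term by the conditional probability you computed. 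With either repair the geometric-series estimate and the final $O(2^{-\lambda})$ bound go through exactly as you wrote them. Your side remark — that the hypothesis $a=O(\lambda)$ is not used in the probability estimate, while $b\geq 0$ is implicitly needed for full row rank to be achievable at all — is also accurate.
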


\begin{corollary}[Corollary of \cref{lemma:rudm}]
\label{corr:rudm}
Let $\lambda$ be a security parameter, q be a prime in the range $\left[2^{\lambda-1},2^\lambda\right)$ and $a,b = O(\lambda)$. Then the distributions $\bR \leftarrow_U \mathbb{Z}_q^{a\times b}$ and $\bR \leftarrow_U \rank_{\min\{a,b\}}(\mathbb{Z}_q^{a\times b})$ are statistically indistinguishable.
\end{corollary}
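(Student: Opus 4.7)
\textbf{Proof plan for \cref{corr:rudm}.}
The plan is to show that the uniform distribution on $\mathbb{Z}_q^{a\times b}$ is overwhelmingly concentrated on full-rank matrices, and then observe that conditioning a distribution on an event of overwhelming probability changes it only by a negligible amount in statistical distance.

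First, I would handle the matter of dimensions. Without loss of generality, assume $a \le b$, so that $\min\{a,b\} = a$ and full rank means rank exactly $a$. Writing $b = a + (b-a)$ with $b-a \ge 0$, and noting that both $a$ and $b-a$ are $O(\lambda)$, I apply \cref{lemma:rudm} directly to the uniform distribution $\bR \leftarrow_U \mathbb{Z}_q^{a\times b}$ to obtain
\[
    \Pr_{\bR \leftarrow_U \mathbb{Z}_q^{a\times b}}[\rank(\bR) < a] = O(2^{-\lambda}).
\]
If instead $a > b$, the same bound follows by considering the transpose, since rank is preserved under transposition and the uniform distribution on $\mathbb{Z}_q^{a\times b}$ maps to the uniform distribution on $\mathbb{Z}_q^{b\times a}$.

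Second, let $U$ denote the uniform distribution on $\mathbb{Z}_q^{a\times b}$ and let $V$ denote the uniform distribution on $\rank_{\min\{a,b\}}(\mathbb{Z}_q^{a\times b})$. By construction, $V$ is exactly the distribution of $U$ conditioned on the event $F = \{\rank(\bR) = \min\{a,b\}\}$. Letting $p = \Pr[F] = 1 - O(2^{-\lambda})$, for any full-rank $\bM$ we have $\Pr[U = \bM] = q^{-ab}$ and $\Pr[V = \bM] = (p\, q^{ab})^{-1}$, while non-full-rank matrices have zero mass under $V$. A direct calculation then yields
\[
    d(U, V) \;=\; \tfrac{1}{2}\bigl[\, p\, q^{ab}\cdot \tfrac{1-p}{p\, q^{ab}} \,+\, (1-p)\, q^{ab}\cdot q^{-ab}\,\bigr] \;=\; 1-p \;=\; O(2^{-\lambda}),
\]
which is negligible in $\lambda$. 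This proves statistical indistinguishability.

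There is no real obstacle here; the argument is the standard ``conditioning on an overwhelming event is statistically close to the unconditional distribution'' principle. The only minor care needed is to ensure that \cref{lemma:rudm} applies in both orientations of $a$ versus $b$, which is handled by the transpose argument above.
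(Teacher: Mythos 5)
Your proof is correct and takes the approach the paper clearly intends but leaves implicit: apply \cref{lemma:rudm} (using a transpose when $a > b$) to get that full rank occurs with probability $1 - O(2^{-\lambda})$, then observe that conditioning on this event perturbs the uniform distribution by exactly $1-p$ in statistical distance. The explicit computation of $d(U,V) = 1-p$ is a clean way to make the standard conditioning argument rigorous.
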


\begin{lemma}[Lemma 4.6 from~\cite{randomness}] \label{lemma:modmat} Let $q$ be a prime, $l,n \geq 1$ integers and $\mathbf{C}\in \Z_q^{l\times n}$ a uniformly random matrix. With probability at least $1 - q^l\cdot2^{-\frac{n}{8}}$ over the choice of $\mathbf{C}$ the following holds. For a fixed $\mathbf{C}$, all $\mathbf{v}\in\Z_q^l$ and $\hat{d} \in \bits^n\setminus\{0^n\}$, the distribution of $(\hat{d}\cdot s\mod 2)$, where $s$ is uniform in $\bits^n$, conditioned on $\mathbf{Cs} = \mathbf{v}$, is within statistical distance $O(q^\frac{3l}{2}\cdot 2^{-\frac{n}{40}})$ of the uniform distribution over $\bits$.
\end{lemma}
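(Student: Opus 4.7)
The plan is to bound the statistical distance via discrete Fourier analysis on $\mathbb{Z}_q^l$. I would start by writing
$$d_{\mathrm{TV}}(\text{conditional}, \text{unif}) = \tfrac{1}{2}\bigl|\mathbb{E}_s[(-1)^{\hat d \cdot s} \mid \mathbf{C}s = \mathbf{v}]\bigr| = \frac{\bigl|\mathbb{E}_s[(-1)^{\hat d \cdot s}\,\mathbb{1}[\mathbf{C}s = \mathbf{v}]]\bigr|}{2\Pr_s[\mathbf{C}s = \mathbf{v}]}\,,$$
where $s$ is uniform in $\bits^n$. Using Fourier inversion $\mathbb{1}[\mathbf{C}s = \mathbf{v}] = q^{-l}\sum_{\mathbf{t}\in\mathbb{Z}_q^l}\omega_q^{\mathbf{t}^\top(\mathbf{C}s-\mathbf{v})}$ with $\omega_q = e^{2\pi i/q}$, and exploiting the independence of the $s_i$, the numerator equals
$$\frac{1}{q^l}\sum_{\mathbf{t}}\omega_q^{-\mathbf{t}^\top\mathbf{v}}\prod_{i=1}^n\tfrac{1}{2}\bigl(1+(-1)^{\hat d_i}\omega_q^{\mathbf{t}^\top\mathbf{c}_i}\bigr)\,,$$
where $\mathbf{c}_i$ are the columns of $\mathbf{C}$. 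The $\mathbf{t}=\mathbf{0}$ contribution vanishes because $\hat d$ is nonzero, so the product contains at least one factor of $\tfrac{1}{2}(1-1)=0$.

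Next I would bound the modulus of each factor for $\mathbf{t}\neq 0$. Writing $\alpha_i := \mathbf{t}^\top\mathbf{c}_i$, the $i$-th factor has modulus $|\cos(\pi\alpha_i/q)|$ when $\hat d_i=0$ and $|\sin(\pi\alpha_i/q)|$ when $\hat d_i=1$. Both quantities are bounded by $1$, are zero only at specific isolated points of $\mathbb{Z}_q$ (note that for odd prime $q$, $|\cos(\pi\alpha/q)|$ never vanishes), and average over uniform $\alpha\in\mathbb{Z}_q$ to a constant strictly below $1$. For a fixed nonzero $\mathbf{t}$, the values $\alpha_i$ are i.i.d.~uniform on $\mathbb{Z}_q$ over the randomness of $\mathbf{C}$. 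A Chernoff-style concentration bound then gives that the product $\prod_i|\cdots|$ is at most $2^{-\Omega(n)}$ except with probability $2^{-\Omega(n)}$ over $\mathbf{C}$.

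The main obstacle is promoting the per-$\mathbf{t}$ statement to one that holds uniformly in $(\mathbf{t},\hat d,\mathbf{v})$ with only the stated $q^l \cdot 2^{-n/8}$ failure probability, rather than the naive $q^l \cdot 2^n$ from union-bounding over $\hat d$. My approach would be to define the ``good $\mathbf{C}$'' event purely in terms of how well-spread the multiset $\{\alpha_i\}_{i=1}^n$ is in $\mathbb{Z}_q$ for each nonzero $\mathbf{t}$, in such a way that good spread simultaneously controls every product indexed by a nonzero $\hat d$. Union-bounding this spread condition over the $q^l - 1$ nonzero $\mathbf{t}$ then yields the $q^l \cdot 2^{-n/8}$ failure bound. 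Finally, to produce the $q^{3l/2}$ factor in the statistical distance, I would lower-bound $\Pr_s[\mathbf{C}s = \mathbf{v}] \gtrsim 2^n/q^{3l/2}$ via a second-moment computation for the number of $s\in\bits^n$ with $\mathbf{C}s=\mathbf{v}$, and combine with the triangle inequality over the $q^l - 1$ character sums. Tuning the Chernoff exponents so that the two error terms interlock to yield the specific constants $1/8$ and $1/40$ is what I would expect to be the most delicate bookkeeping.
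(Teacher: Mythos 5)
This is a lemma the paper \emph{cites} from~\cite{randomness} (BCMVV Lemma~4.6) and does not reprove, so there is no in-paper proof against which to compare your sketch; you should be comparing against the proof in BCMVV itself.

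Your Fourier-analytic route is sound in outline and is, in my view, the cleanest way to get this statement. The decomposition into a character sum over $\mathbf{t}\in\mathbb{Z}_q^l$, the observation that the $\mathbf{t}=0$ term vanishes because $\hat d\neq 0$, and — most importantly — bounding every per-$\hat d$ product by $\prod_i\max\bigl(|\cos(\pi\alpha_i/q)|,|\sin(\pi\alpha_i/q)|\bigr)$ so that a single Chernoff bound per $\mathbf{t}$ handles all $\hat d$ simultaneously, is exactly the right idea and avoids the fatal $2^n$ union bound over $\hat d$. This ``good spread'' event also automatically rules out nonzero $\mathbf{t}$ with $\mathbf{t}^\top\mathbf{C}=0$ (those have $\prod\max=1$), which you should make explicit since it is needed in two places: such $\mathbf{t}$ contribute $0$ to the numerator only because $\hat d\neq 0$, and in the denominator they govern whether $\mathbf{v}$ lies in the row space.

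Two concrete issues. First, the written lower bound ``$\Pr_s[\mathbf{C}s=\mathbf{v}]\gtrsim 2^n/q^{3l/2}$'' cannot be right as a probability; you presumably mean $|\{s\in\bits^n:\mathbf{C}s=\mathbf{v}\}|\gtrsim 2^n/q^{3l/2}$, i.e.\ $\Pr_s[\cdot]\gtrsim q^{-3l/2}$. Second, a separate second-moment computation is the wrong tool for the denominator: a second moment gives a bound for \emph{typical} $\mathbf{v}$, not the worst case over all $\mathbf{v}$ that the lemma requires, and fixing that costs you another union bound over $\mathbf{v}$. You do not need it at all: the same character sum you already wrote for the numerator, applied with $\hat d=0$, gives $\Pr_s[\mathbf{C}s=\mathbf{v}]=q^{-l}\bigl(1+\sum_{\mathbf{t}\neq 0}\omega_q^{-\mathbf{t}^\top\mathbf{v}}\prod_i\cos(\pi\alpha_i/q)\,\omega_q^{\alpha_i/2}\bigr)$, and under the same ``good $\mathbf{C}$'' event the error term is $\leq q^l\cdot 2^{-cn}$, so $\Pr_s[\mathbf{C}s=\mathbf{v}]\geq \tfrac12 q^{-l}$ uniformly in $\mathbf{v}$ in the image. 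With that change the argument closes, and the resulting bound $q^l\cdot 2^{-cn}$ with $c$ coming from Hoeffding (roughly $c\approx 1/20$) is in fact slightly stronger than the stated $O(q^{3l/2}\cdot 2^{-n/40})$, so the constants are not an obstruction — your worry about delicate bookkeeping is unfounded provided you drop the second-moment detour.
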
 

\begin{lemma}
\label{lemma:comp_ind2}
Let $\groupgen$ be an algorithm that takes as input a security parameter $\lambda$ and outputs a tuple $(q,\mathbb{G},g)$, where $p$ is an $\lambda$-bit prime, $\mathbb{G}$ a cyclic group of order $q$ and $g$ a generator of $\mathbb{G}$. Let a,b be integer functions of $\lambda$, s.t. $ 40\cdot log(q) \leq a, b = O(log(q))$. Assume that $\groupgen$ fulfills the DDH assumption. Then the distributions
\begin{align*}
    D_0 &= \big\{(\mathbb{G},q,g,g^\bR,g^{\bR\bs}) | (q,\mathbb{G},g) \shortleftarrow \groupgen, \bR\shortleftarrow_U \mathbb{Z}_q^{a\times b}, \bs \shortleftarrow_U \bits^b\big\}\\
    D_1 &= \big\{(\mathbb{G},q,g,g^\bR,g^{\bR\bs}) | (q,\mathbb{G},g) \shortleftarrow \groupgen,  \bu \shortleftarrow_U \mathbb{Z}_q^a, \bv \shortleftarrow_U \mathbb{Z}_q^b, \bs \shortleftarrow_U \bits^b, \bR = \bu\bv^T\big\}\\
    D_2 &= \big\{(\mathbb{G},q,g,g^{\bR'}) | (q,\mathbb{G},g) \shortleftarrow \groupgen, \bR'\shortleftarrow_U \mathbb{Z}_q^{a\times (b+1)}\big\}\\
    D_3 &= \big\{(\mathbb{G},q,g,g^{\bR'}) | (q,\mathbb{G},g) \shortleftarrow \groupgen, \bu \shortleftarrow_U \mathbb{Z}_q^{a-1} , \bv \shortleftarrow_U \mathbb{Z}_q^{b+1}, \bR' = (1,\bu)\bv^T\big\}
\end{align*}
are computationally indistinguishable for a PPT adversary.

\end{lemma}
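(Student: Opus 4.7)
The plan is to chain $D_0 \approx_c D_1 \approx_s D_3 \approx_c D_2$, where the outer steps use the matrix $d$-linear assumption with $d=1$ (a corollary of DDH, see \cref{asmp:matrixdlin}) and the inner step uses the leftover hash lemma together with the rank-$1$ structural results in the appendix.

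For $D_0 \approx_c D_1$, I would observe that $g^{\mathbf{R}\mathbf{s}}$ can be computed from $g^{\mathbf{R}}$ and $\mathbf{s}$ using only group operations (by multiplying the columns of $g^{\mathbf{R}}$ indexed by the support of $\mathbf{s}$). Hence any efficient distinguisher between $D_0$ and $D_1$ reduces to a distinguisher acting only on $g^{\mathbf{R}}$, between the cases $\mathbf{R} \leftarrow_U \mathbb{Z}_q^{a\times b}$ and $\mathbf{R} = \mathbf{u}\mathbf{v}^T$. These two are computationally indistinguishable via the chain: uniform $\mathbf{R}$ is statistically close to uniform rank-$\min(a,b)$ by \cref{corr:rudm}; the matrix $d$-linear assumption with $d=1$ makes uniform rank-$\min(a,b)$ and uniform rank-$1$ computationally indistinguishable in the exponent; and \cref{corollary:rank1} makes uniform rank-$1$ statistically close to $\mathbf{u}\mathbf{v}^T$. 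An identical argument applied directly to a matrix in $\mathbb{Z}_q^{a\times(b+1)}$ (with no $\mathbf{s}$ component to handle) yields $D_2 \approx_c D_3$.

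For $D_1 \approx_s D_3$, I would rewrite the joint view in $D_1$ as a single rank-$1$ matrix in the exponent: $g^{\mathbf{R}}$ and $g^{\mathbf{R}\mathbf{s}}$ together equal $g^{\mathbf{u}\mathbf{w}^T}$ where $\mathbf{w} = (\mathbf{v}^T,\, \mathbf{v}^T\mathbf{s})^T \in \mathbb{Z}_q^{b+1}$. The family $\{\mathbf{s}\mapsto \mathbf{v}^T\mathbf{s}\bmod q\}_{\mathbf{v}}$ is a universal hash family from $\{0,1\}^b$ to $\mathbb{Z}_q$, and since the source min-entropy $b \geq 40\log q$ exceeds $\log q + 2\lambda$ by a wide margin, the leftover hash lemma gives that $(\mathbf{v},\, \mathbf{v}^T\mathbf{s})$ is statistically close to a uniform vector in $\mathbb{Z}_q^{b+1}$. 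Consequently $D_1$ is statistically close to the distribution of $g^{\mathbf{u}\mathbf{w}'^T}$ with $\mathbf{u},\mathbf{w}'$ independently uniform, which by \cref{lemma:rank1C} (applied with the role of $b$ played by $b+1$) is statistically close to $g^{(1,\mathbf{u}')\mathbf{w}'^T}$, namely $D_3$. Composing the three steps establishes simultaneous computational indistinguishability of all four distributions.

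The main obstacle I anticipate is simply the bookkeeping in the first step: the reduction must cleanly separate the adversary's classical sampling of $\mathbf{s}$ and its group-operation computation of $g^{\mathbf{R}\mathbf{s}}$ from the underlying matrix $d$-linear indistinguishability acting on $g^{\mathbf{R}}$ alone, so that appending or stripping the $\mathbf{s}$-dependent component does not disturb the reduction. The leftover hash step is routine given the generous entropy gap, and once the rank-$1$ structure is exposed in each of the four distributions the appendix lemmas supply the remaining transitions essentially by inspection.
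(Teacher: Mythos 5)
Your proof is correct and follows essentially the same hybrid chain as the paper's, passing through rank-$1$ outer-product forms via the appendix rank lemmas and the matrix $1$-linear assumption. The one noteworthy difference is the step $D_1\approx_s D_3$. The paper cites \cref{lemma:modmat} (BCMVV's Lemma 4.6) to assert that $\bv^T\bs$ is close to uniform over $\mathbb{Z}_q$ when $\bv\unifsample\mathbb{Z}_q^b$ and $\bs\unifsample\bits^b$, but as stated that lemma concerns the conditional distribution of $\hat d\cdot\bs\bmod 2$ given $\mathbf{C}\bs=\mathbf{v}$, so the inference to the claim in the proof is not a literal application. Your direct leftover-hash-lemma argument---treating $\bs\mapsto\bv^T\bs\bmod q$ as a universal hash family from $\bits^b$ to $\mathbb{Z}_q$ and exploiting the large entropy gap $b\geq 40\log q$---reaches the same conclusion in a cleaner and more self-contained way, and is in fact the technique underlying the cited lemma. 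Your reduction framing of $D_0\approx_c D_1$ (computing $g^{\bR\bs}$ from $g^\bR$ and $\bs$ inside the distinguisher, so that the problem reduces to indistinguishability of the marginal over $g^\bR$ alone) is the right bookkeeping and is what the paper relies on implicitly when it applies the rank transitions to the full joint view.
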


\begin{myproof}
First, note that due to \cref{corr:rudm} the distribution $D_0$ is computationally indistinguishable from 
\begin{equation*}
    D^{(0)} = \big\{(\mathbb{G},q,g,g^\bR,g^{\bR\bs}) | (q,\mathbb{G},g) \leftarrow \groupgen, \bR\leftarrow_U \rank_{\min(a,b)}(\mathbb{Z}_q^{a\times b}), \bs \leftarrow_U \bits^b\big\}
\end{equation*}
which in turn is indistinguishable from 
\begin{equation*}
    D^{(1)} = \big\{(\mathbb{G},q,g,g^\bR,g^{\bR\bs}) | (q,\mathbb{G},g) \leftarrow \groupgen, \bR\leftarrow_U \rank_1(\mathbb{Z}_q^{a\times b}), \bs \leftarrow_U \bits^b\big\}
\end{equation*}
due to the matrix 1-linear assumption, which holds since we assume $\groupgen$ to fulfill the DDH assumption. 

Next, it follows from \cref{corollary:rank1} that $D^{(1)}$ is computationally indistinguishable from
\begin{multline*}
    D_1 = \big\{(\mathbb{G},q,g,g^\bR,g^{\bR\bs}) | (q,\mathbb{G},g) \leftarrow \groupgen, \bR = \bu\bv^T \text{ for } \bu \leftarrow_U \mathbb{Z}_q^a,
    \bv \leftarrow_U \mathbb{Z}_q^b, \bs \leftarrow_U \bits^b\big\} \,.
\end{multline*}
Now, note that \cref{lemma:modmat} implies in particular that for $\bv \leftarrow_U \mathbb{Z}_q^b$, $\bs \leftarrow_U \bits^b$ the distribution of $\bv^T\bs$ is within statistical distance at most $q^{1/2}\cdot2^{-b/40} = O(2^{-\lambda})$ from the uniform distribution over $\mathbb{Z}_q$. Thus, $D_1$ is computationally indistinguishable from 
\begin{align*}
    D^{(2)} &= \big\{(\mathbb{G},q,g,g^\bR,g^{z\cdot \bu}) | (q,\mathbb{G},g) \leftarrow \groupgen, \bu \leftarrow_U \mathbb{Z}_q^a, \bv \leftarrow_U \mathbb{Z}_q^b, z \leftarrow_U \mathbb{Z}_q, \bR = \bu\bv^T\big\} \\
    &= \big\{(\mathbb{G},q,g,g^{\bR'}) | (q,\mathbb{G},g) \leftarrow \groupgen, \bu \leftarrow_U \mathbb{Z}_q^a, \bv' \leftarrow_U \mathbb{Z}_q^{(b+1)}, \bR' = \bu\bv'^T\big\} \,.
\end{align*}
Again using \cref{corollary:rank1}, we can conclude that $D^{(2)}$ is indistinguishable from 
\begin{equation*}
    D^{(3)} = \big\{(\mathbb{G},q,g,g^{\bR'}) | (q,\mathbb{G},g) \leftarrow \groupgen, \bR' \leftarrow_U \rank_1(\mathbb{Z}_q^{a\times (b+1)})\big\} \,,
\end{equation*}
which is in turn (due to the matrix 1-linear assumption, which follows from DDH) indistinguishable from 
\begin{equation*}
    D^{(4)} = \big\{(\mathbb{G},q,g,g^{\bR'}) | (q,\mathbb{G},g) \leftarrow \groupgen, \bR' \leftarrow_U \rank_{\rm min\{a,b+1\}}(\mathbb{Z}_q^{a\times (b+1)})\big\} \,.
\end{equation*}
Furthermore, it follows from \cref{corr:rudm} that $D^{(4)}$ is indistinguishable from $D_2$, and from \cref{corollary:rank1} that $D_1$ is indistinguishable from $D_3$. Thus, all these distributions are computationally indistinguishable from each other. 
\end{myproof}

\section{Omitted proofs in \cref{section:ddhntcf}}
\label{app:ddh_ntcf}
In this section, we provide the proof of a key lemma in \cref{section:ddhntcf}, the AHCB property (\cref{app:ddh_ahcb_proof}).

\subsection{Proof of \cref{lem:ddh_ahcb}} \label{app:ddh_ahcb_proof}
\begin{myproof}[\cref{lem:ddh_ahcb}]
It is straightforward to see that $\cF_{\rm DDH}$ fulfills the first three conditions of \cref{def:ntcf}:
\begin{myenumi}
    \item \textbf{Efficient Function Generation.} This condition holds since all steps of the key generation procedure $\genalg_{\cF_{\rm DDH}}$ are efficient. 
    \item \textbf{Trapdoor Injective Pair.} Let $\mathcal{I}_{\cF_{\rm DDH}}$ be the set of key-trapdoor pairs such that $\mathbf{A}$ is full-rank (and thus in particular injective). By \cref{lemma:rudm} this set is negligible in $\lambda$. 

    Now, note that for all pairs $(k,t_k) \in \mathcal{I}_{\cF_{\rm DDH}}$:
    \begin{enumerate}
        \item \emph{Trapdoor}: Note that all steps of the inversion algorithm $\invalg_{\cF_{\rm DDH}}$ are efficient. Since $\mathbf{A}$ is full-rank, its pseudo-inverse exists, and it is easy to check that $\invalg_{\cF_{\rm DDH}}$ recovers the correct preimage for all $b \in \bits$, $\mathbf{x} \in \cX$ and $\mathbf{y} \in $supp$(f_{k,b}(\mathbf{x}))$.
        \item \emph{Injective Pair:} This follows from $\mathbf{A}$ being full-rank and from the same argumentation as in the proof of Theorem 2 in the supplementary information of \cite{poqbell}. As they mention, it is possible to achieve a value of $c_\cF$ arbitrarily close to 1. In particular, if we set $d = n^2$, we have already for $\lambda_{c_\cF} = 1$ that $c_\cF = 0.99$.
    \end{enumerate}
    \item \textbf{Efficient Range Superposition.} \hfill
    \begin{enumerate}
    \item Follows from $f_{k,b} = f'_{k,b}$.
    \item Evident from definition of $\chkalg_{\cF_{\rm DDH}}$.
    \item Follows from $f_{k,b} = f'_{k,b}$ and from the fact that the functions $f_{k,b}$ are efficiently computable classically.
    \end{enumerate}
    \item \textbf{Adaptive Hardcore Bit.}
    The proof of the adaptive hardcore bit property is slightly more involved, so we show it separately as \cref{lemma:ahcbmain} below.
\end{myenumi}
\end{myproof}

To prove the adaptive hardcore bit property of $\cF_{\rm DDH}$, we adapt the corresponding proof for $\cF_{\rm LWE}$ from \cite{randomness}.
The core of their proof is~\cite[Lemma 4.4]{randomness}. We prove a statement that is analogous to~\cite[Lemma 4.4]{randomness} in \cref{lemma:ahcbmain} using a similar lossy sampling argument as in~\cite{randomness} and \cref{lemma:modmat}~(\cite[Lemma 4.6]{randomness}). This result is then used to show \cref{lemma:ahcbside}, which is analogous to~\cite[Lemma 4.3]{randomness}. In contrast to the LWE case, the proof does not end here, since we need to take some technicalities into account due to the fact that not all elements of the domain $\cX$ are part of a claw. A restricted-domain version of \cref{lemma:ahcbside} is proven in  \cref{lemma:ahcbresdom}, and we show in \cref{lemma:setcorr} that this statement is equivalent to the AHCB property we require for $\cF_{\rm DDH}$.

\begin{lemma}[Analog to Lemma 4.4 from \cite{randomness}]
\label{lemma:ahcbmain}
    Let $\groupgen$ be an algorithm that takes as input a security parameter $\lambda$ and outputs a tuple $(q,\mathbb{G},g)$, where $q$ is an $\lambda$-bit prime (as in, q is in the range $\left[2^{\lambda - 1},2^\lambda\right)$), $\mathbb{G}$ a cyclic group of order $q$ and $g$ a generator of $\mathbb{G}$. Assume that $\groupgen$ fulfills the DDH assumption. Let $n$ be an integer function such that $121\cdot$\emph{log}$(q)\leq n = \cO(\lambda)$ and $w = n\lceil$\emph{log}$(q)\rceil$.
    
    Let $\mathcal{A}$ be a PPT algorithm that takes as input $(q,\mathbb{G},g)$ as well as an element of $\mathbb{G}^{(n+1)\times(n+1)}$ and has outputs in $\bits\times\mathbb{Z}_q^{n+1}\times\bits^w\times\bits$. Let $\delta$ be the indicator function and $I_{b,\mathbf{x}}$, $\hat{G}_{\mathbf{s}_{b\oplus1},b,\mathbf{x}}$ are defined as in Lemma 4.4 in \cite{randomness}.

    Then the distributions
    \begin{equation*}
    \begin{split}
        D_0 = \Big\{k = (\mathbb{G},q,g,g^\mathbf{A},g^{\mathbf{As}}), (b,\mathbf{x},\mathbf{d},c) \leftarrow \mathcal{A}(k),I_{b,\mathbf{x}}(\mathbf{d})\cdot \mathbf{s} \mod 2|(q,\mathbb{G},g) \leftarrow \groupgen, \\
        \mathbf{A} \leftarrow_U \mathbb{Z}_q^{(n+1)\times n},\  \mathbf{s}\leftarrow_U \bits^n\Big\}
    \end{split}
    \end{equation*}
    and 
    \begin{equation*}
    \begin{split}
        D_1 = \Big\{k = (\mathbb{G},q,g,g^\mathbf{A},g^{\mathbf{As}}), (b,\mathbf{x},\mathbf{d},c) \leftarrow \mathcal{A}(k),(\delta_{d \in \hat{G}_{\mathbf{s}_{b\oplus1},b,\mathbf{x}}}r)\oplus I_{b,\mathbf{x}}(\mathbf{d})\cdot \mathbf{s} \mod 2 \\| (q,\mathbb{G},g) \leftarrow \groupgen, 
        \mathbf{A} \leftarrow_U \mathbb{Z}_q^{(n+1)\times n}, \ \mathbf{s}\leftarrow_U \bits^n\Big\}
    \end{split}
    \end{equation*}
    are computationally indistinguishable.
\end{lemma}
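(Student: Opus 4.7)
The plan is to follow the overall structure of \cite[Lemma~4.4]{randomness}, replacing the LWE-based lossy step with a DDH-based one. At a high level, I would introduce a computational hybrid in which the key matrix $\mathbf{A}$ is replaced by a rank-one matrix $\mathbf{u}\mathbf{v}^T$; in this hybrid the adversary's view reveals only $\mathbf{u}$, $\mathbf{v}$, and the single scalar $z := \mathbf{v}^T\mathbf{s} \in \mathbb{Z}_q$, so enough entropy remains in $\mathbf{s} \in \bits^n$ to conclude that $I_{b,\mathbf{x}}(\mathbf{d})\cdot\mathbf{s} \bmod 2$ is statistically close to uniform.

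First I would set up the hybrid chain for the key distribution. By \cref{corr:rudm}, $\mathbf{A}\unifsample\mathbb{Z}_q^{(n+1)\times n}$ is statistically close to $\mathbf{A}\unifsample\rank_n(\mathbb{Z}_q^{(n+1)\times n})$; by the matrix $1$-linear assumption (\cref{asmp:matrixdlin}), which is implied by DDH, the latter is computationally indistinguishable from $\mathbf{A}\unifsample\rank_1(\mathbb{Z}_q^{(n+1)\times n})$; and by \cref{lemma:rank1A,lemma:rank1B} this is statistically close to $\mathbf{A} = \mathbf{u}\mathbf{v}^T$ for uniform $\mathbf{u}\unifsample\mathbb{Z}_q^{n+1}$, $\mathbf{v}\unifsample\mathbb{Z}_q^n$. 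Because $\mathcal{A}$ is PPT and the final bit in $D_0$ and $D_1$ is efficiently computable from $(b,\mathbf{x},\mathbf{d},c,\mathbf{s})$, passing to this lossy hybrid changes any PPT distinguisher's acceptance probability on $D_0$ or $D_1$ by at most a negligible amount.

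In the lossy hybrid, $\mathbf{As} = z\,\mathbf{u}$, so conditioned on $(\mathbf{u},\mathbf{v},z)$ and the adversary's random coins, the output $(b,\mathbf{x},\mathbf{d},c)$ is a deterministic function of these values, while $\mathbf{s}$ remains uniform on the fiber $F_z := \{\mathbf{s}\in\bits^n : \mathbf{v}^T\mathbf{s} = z\}$. Applying \cref{lemma:modmat} with $\mathbf{C} = \mathbf{v}^T$ and $l=1$, for an overwhelming fraction of $\mathbf{v}$ and every non-zero $\hat d \in \bits^n$, the distribution of $\hat d\cdot\mathbf{s} \bmod 2$ conditioned on $\mathbf{v}^T\mathbf{s} = z$ is within statistical distance $O(q^{3/2}\cdot 2^{-n/40}) = \negl(\lambda)$ of uniform on $\bits$; the constraint $n \geq 121 \log q$ makes this exponent comfortably negligible. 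As in \cite{randomness}, the set $\hat G_{\mathbf{s}_{b\oplus 1},b,\mathbf{x}}$ is defined precisely so that $\mathbf{d}\in\hat G$ implies $I_{b,\mathbf{x}}(\mathbf{d})\neq 0$; thus the bit $I_{b,\mathbf{x}}(\mathbf{d})\cdot\mathbf{s}\bmod 2$ is statistically close to uniform, and XORing it with the independent uniform bit $r$ appearing in $D_1$ preserves uniformity. Conversely, when $\mathbf{d}\notin\hat G$ the indicator $\delta$ vanishes and the samples of $D_0$ and $D_1$ coincide pointwise. Averaging over $(\mathbf{u},\mathbf{v},z)$ yields statistical indistinguishability of $D_0$ and $D_1$ in the lossy hybrid, and reversing the hybrid chain then transfers this back to computational indistinguishability of the original $D_0$ and $D_1$.

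The main obstacle I anticipate is the careful conditioning step: because $\mathbf{d}$ depends on the adversary's view and hence on $z$, one must fix $(\mathbf{u},\mathbf{v},z)$ together with the adversary's coins before treating $\hat d = I_{b,\mathbf{x}}(\mathbf{d})$ as a deterministic vector, so that the residual entropy in $\mathbf{s}$ over the fiber $F_z$ is independent of the choice of $\hat d$ when \cref{lemma:modmat} is invoked. Once this bookkeeping is in place, combining it with a careful accounting of the negligible slack from the hybrid switches and the statistical-distance bound of \cref{lemma:modmat} yields the claim.
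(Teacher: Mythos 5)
Your hybrid chain (uniform $\to$ full-rank $\to$ rank one via matrix 1-linear $\to$ $\mathbf{u}\mathbf{v}^T$) matches the paper's use of \cref{lemma:comp_ind2} exactly, so that part is fine. The problem is in the statistical step inside the lossy hybrid: you condition only on $(\mathbf{u},\mathbf{v},z,\mathrm{coins}[\cA])$ and then case-split on whether $\mathbf{d}\in\hat G_{\mathbf{s}_{b\oplus 1},b,\mathbf{x}}$, treating the two cases as if they were fixed before $\mathbf{s}$ is drawn. But the set $\hat G_{\mathbf{s}_{b\oplus1},b,\mathbf{x}}$ depends on $\mathbf{s}_{b\oplus1}$, and so the indicator $\delta_{\mathbf{d}\in\hat G}$ is a nontrivial random function of $\mathbf{s}$, not a constant, once you have fixed only $(\mathbf{u},\mathbf{v},z,\mathrm{coins}[\cA])$. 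Conditioning further on the event $\{\mathbf{d}\in\hat G\}$ biases the residual distribution of $\mathbf{s}$ away from uniform on the fiber $F_z$, so the marginal near-uniformity of $\hat d\cdot\mathbf{s}\bmod 2$ that you get from \cref{lemma:modmat} does not transfer to the conditional distribution given $\delta=1$. Concretely, the statistical distance between $A$ and $A'=\delta r\oplus A$ equals $\Pr[\delta=1]\cdot|\tfrac12 - \Pr[A=0\mid\delta=1]|$, and your argument only controls the unconditional $\Pr[A=0]$, not $\Pr[A=0\mid\delta=1]$.

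The paper repairs exactly this by additionally including $S_2=\mathbf{s}_{b\oplus1}$ in the conditioning, so that $\delta$, $\hat G$, and $A_2 = I_{b,\mathbf{x}}(\mathbf{d})_{b\oplus1}\cdot\mathbf{s}_{b\oplus1}$ all become deterministic, and then applies \cref{lemma:modmat} only to the surviving free half $\mathbf{s}_b$ under the single remaining linear constraint; this is also why the paper ends up with the $O(q^{3/2}2^{-n/80})$ bound (applying the lemma in dimension $n/2$) rather than the $O(q^{3/2}2^{-n/40})$ bound you claim. Relatedly, your assertion that $\mathbf{d}\in\hat G$ implies $I_{b,\mathbf{x}}(\mathbf{d})\neq 0$ should be the weaker statement that the $b$-half $I_{b,\mathbf{x}}(\mathbf{d})_b$ is nonzero, which is what the paper's split $A = A_1\oplus A_2$ actually relies on. So the fix is to add $\mathbf{s}_{b\oplus 1}$ to the conditioning and split $\mathbf{s}$ and $I_{b,\mathbf{x}}(\mathbf{d})$ into halves before invoking \cref{lemma:modmat}.
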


\begin{myproof}
First, note that by \cref{lemma:comp_ind2} (for the case $a = n+1,\,b = n$) $D_0$ is indistinguishable from  
\begin{equation*}    
\begin{split}
D^{(1)} = \Big\{k = (\mathbb{G},q,g,g^\mathbf{A},g^{\mathbf{As}}), (b,\mathbf{x},\mathbf{d},c) \leftarrow \mathcal{A}(k),I_{b,\mathbf{x}}(d)\cdot \mathbf{s} \mod 2|(q,\mathbb{G},g) \leftarrow \groupgen,\\
\mathbf{A} = \mathbf{u}\mathbf{v}^T \text{ for }\mathbf{u} \leftarrow_U \mathbb{Z}_q^{n+1},
\mathbf{v}  \leftarrow_U \mathbb{Z}_q^n, 
\mathbf{s}\leftarrow_U \bits^n\Big\}\;
\end{split}
\end{equation*}

Now, consider the following distribution: 
\begin{equation*} 
\begin{split}
D^{(2)} = \Big\{k = (\mathbb{G},q,g,g^\mathbf{A},g^{\mathbf{As}}), (b,\mathbf{x},\mathbf{d},c) \leftarrow \mathcal{A}(k),(\delta_{\mathbf{d} \in \hat{G}_{\mathbf{s}_{b\oplus1},b,\mathbf{x}}}r)\oplus I_{b,\mathbf{x}}(\mathbf{d})\cdot \mathbf{s} \mod 2|\\(q,\mathbb{G},g) \leftarrow \groupgen, \mathbf{A} = \mathbf{uv}^T \text{ for }\mathbf{u} \leftarrow_U \mathbb{Z}_q^{n+1}, \mathbf{v} \leftarrow_U \mathbb{Z}_q^n, \mathbf{s}\leftarrow_U \bits^n\Big\}\;
\end{split}
\end{equation*}

To show the indistinguishability of $D^{(1)}$ and $D^{(2)}$, we first define the relevant random variables. We denote by $I_{b,\mathbf{x}}(\mathbf{d})_b, \mathbf{s}_b$ the first ($b=0$) and last $(b=1)$ half of the vectors $I_{b,\mathbf{x}}(\mathbf{d})$ and $\mathbf{s}$ and denote by $C = f(Y)$ whenever a random variable $C$ is deterministic given $Y$: 
\begin{equation*}
\begin{array}{llll}
X = (\mathbb{G},q,g,\mathbf{u},\mathbf{v},\mathbf{v}^T\mathbf{s},\text{coins}\left[\cA\right]) \\
B = (k,b,\mathbf{x},\mathbf{d},c) = f(X) \\
S_1 = \mathbf{s}_b = f(X,S)\\
S_2 = \mathbf{s}_{b\oplus1} =  f(B,S) = f(X,S)\\
\end{array}
\qquad
\begin{array}{llll}
A_1 = I_{b,\mathbf{x}}(\mathbf{d})_b\cdot \mathbf{s}_b \mod 2 \\
A_2 = I_{b,\mathbf{x}}(\mathbf{d})_{b\oplus1}\cdot \mathbf{s}_{b\oplus1} \mod 2  = f(X,S_2)\\
A = A_1 \oplus A_2\\ 
A' = (\delta_{d \in \hat{G}_{\mathbf{s}_{b\oplus1},b,\mathbf{x}}}r)\oplus I_{b,\mathbf{x}}(\mathbf{d})\cdot \mathbf{s} \mod 2 = f(X)\cdot r 
\end{array}
\end{equation*} 
Note that given $(X,S_2)$, the random variable $B$ is deterministic, and thus the values  $I_{b,\mathbf{x}}(\mathbf{d})_b$ and $\mathbf{v}^T\mathbf{s}_b$ are also fixed. Thus, we can conclude using \cref{lemma:modmat} that given $(X,S_2)$, $A_1$ is within statistical distance $\cO(q^{\frac{3}{2}}\cdot 2^{-n/80}) = \negl(\lambda)$ of the uniform distribution over $\bits$ (with probability $1-\negl(\lambda)$ over the distribution of $\mathbf{v}$, and thus also that of $(X,S_2)$). Thus, in particular, since $A_2$ is constant given $(X,S_2)$, $A$ is within statistical distance $\cO(q^{\frac{3}{2}}\cdot 2^{-n/80}) = \negl(\lambda)$ of the uniform distribution over $\bits$ (with probability $1-\negl(\lambda)$ over $X$). For the values of $(X,S_2)$ where $\mathbf{d} \notin \hat{G}_{s_{b\oplus1},b,\mathbf{x}}$, $A$ and $A'$ are identical and thus the statistical distance is 0. For all other values of $(X,S_2)$, $A' = r \oplus A$ for r drawn uniformly at random and thus $A'$ is distributed uniformly at random and therefore within negligible statistical distance of $A$. Thus, given $(X,S_2)$, $A'$ is within negligible statistical distance of $A$. Since $B$ is a deterministic function given $X$, the joint distribution of $(A,B)$ is within negligible statistical distance of $(A',B)$ given $(X,S_2)$.

Denote by $P_{AB}$, $P_{A'B}$  the joint probability distribution of $(A,B)$ and $(A',B)$ respectively. Note that they correspond to the distributions $D^{(1)},D^{(2)}$.

Then, we have that 
\begin{align*}
    d(P_{AB},P_{A'B}) &= \frac{1}{2}\sum_{a,b} |P_{AB}(a,b)-P_{A'B}(a,b)| \\
    &= \frac{1}{2}\sum_{a,b}|\sum_{\bx,\bs_2}(P_{ABXS_2}(a,b,\bx,\bs_2)-P_{A'BXS_2}(a,b,\bx,\bs_2))| \\
    &\leq \sum_{\bx,\bs_2} P_{XS_2}(\bx,\bs_2)\frac{1}{2}\sum_{a,b}|(P_{AB|XS_2}(a,b|\bx,\bs_2)-P_{A'B|XS_2}(a,b|\bx,\bs_2))| \\
    &\leq (1 - \negl(\lambda))\cdot\negl(\lambda) + \negl(\lambda)\cdot 1 = \negl(\lambda)\;.
\end{align*}

Thus, $D^{(1)},D^{(2)}$ are statistically indistinguishable from one another. 

Finally, we can use the same line of reasoning as between $D_0$ and $D^{(1)}$ to conclude that $D^{(2)}$ is statistically indistinguishable from $D_1$. 
\end{myproof}

Since the proof of Lemma 4.4 from Lemma 4.3 in \cite{randomness} does not depend on the exact key distribution, and \cref{lemma:ahcbmain} is analogous to Lemma 4.3 in \cite{randomness}, the exact same proof as in \cite{randomness} shows that our \cref{lemma:ahcbmain} implies the following AHCB property.
\begin{lemma}[Analog to Lemma 4.3 from \cite{randomness}]
\label{lemma:ahcbside}
Let $\groupgen$ be a PPT algorithm that takes as input a security parameter $\lambda \in \mathbb{N}$ and outputs a tuple $(q,\mathbb{G},g)$, where $q$ is an $\lambda$-bit prime, $\mathbb{G}$ a cyclic group of order $q$ and $g$ a generator of $\mathbb{G}$,  and $n: \mathbb{N}\rightarrow \mathbb{N}$ is such that $121\cdot$\emph{log}$(q)\leq n = \cO(\lambda)$. Let $\genalg_k$ be the algorithm that samples $(q,\mathbb{G},g) \leftarrow \groupgen$ as well as $\mathbf{A} \leftarrow_U \rank_n(\mathbb{Z}_q^{(n+1)\times n})$ and $\mathbf{s} \leftarrow_U\bits^n$ and outputs $k = (q,\mathbb{G},g,g^\mathbf{A},g^\mathbf{As})$ and $t_k = (\mathbf{A},\mathbf{s})$. Let $w = n\lceil$\emph{log}$(q)\rceil$. Let $J: \Z_q^n \rightarrow \{0,1\}^w$ be such that $J(\bx)$ returns the binary representation of $\bx \in \Z_q^n$.

Assume that $\groupgen$ fulfills the DDH assumption.  Let $\mathbf{s} \in \bits^n$ and
\begin{align*}
    H_\mathbf{s} &= \big\{(b,\mathbf{x},\mathbf{d},\mathbf{d}\cdot(J(\mathbf{x})\oplus J(\mathbf{x} - (-1)^b\mathbf{s}))) | b \in \bits, \mathbf{x} \in \mathbb{Z}_q^n, \mathbf{d} \in \hat{G}_{\mathbf{s}_{b\oplus1},b,\mathbf{x}}\big\} \;, \\
    \overline{H}_\mathbf{s} &= \big\{(b,\mathbf{x},\mathbf{d},c) | (b,\mathbf{x},\mathbf{d},c\oplus 1) \in H_\mathbf{s}\big\}\;.
\end{align*} 
Then for any PPT algorithm $\mathcal{A}$ that takes as input $(q,\mathbb{G},g)$ as well as an element of $\mathbb{G}^{(n+1)\times(n+1)}$ and has outputs in $\bits\times\mathbb{Z}_q^{n+1}\times\bits^w\times\bits$, it holds that
\begin{equation*}
\Big|\Pr_{\kt\leftarrow \genalg_{k}(1^{\lambda})}\left[\mathcal{A}(k) \in H_\mathbf{s}\right] - \Pr_{\kt\leftarrow \genalg_{k}(1^{\lambda})}\left[\mathcal{A}(k) \in\overline{H}_\mathbf{s}\right]\Big| = \negl(\lambda)\,.
\end{equation*}
\end{lemma}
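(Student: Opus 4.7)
The plan is to deduce this statement from \cref{lemma:ahcbmain} in exactly the same way that \cite{randomness} derives their Lemma 4.3 from Lemma 4.4. First I would deal with the minor mismatch in key distributions: the lemma samples $\mathbf{A} \leftarrow_U \rank_n(\mathbb{Z}_q^{(n+1)\times n})$, whereas \cref{lemma:ahcbmain} samples $\mathbf{A}$ uniformly over all of $\mathbb{Z}_q^{(n+1)\times n}$. By \cref{corr:rudm}, these distributions are statistically indistinguishable when $n = \cO(\lambda)$, so it suffices to prove the bound under the unrestricted uniform sampling, at the cost of an additive $\negl(\lambda)$ term.

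The central observation is the algebraic identity baked into the definitions of $I_{b,\mathbf{x}}$ and $\hat{G}_{\mathbf{s}_{b\oplus 1}, b, \mathbf{x}}$ carried over from \cite{randomness}: for every $\mathbf{d} \in \hat{G}_{\mathbf{s}_{b\oplus 1}, b, \mathbf{x}}$,
$$\mathbf{d} \cdot \bigl(J(\mathbf{x}) \oplus J(\mathbf{x} - (-1)^b \mathbf{s})\bigr) \;=\; I_{b,\mathbf{x}}(\mathbf{d}) \cdot \mathbf{s} \mod 2.$$
Writing $z$ for the final bit of a sample from $D_0$ or $D_1$ in \cref{lemma:ahcbmain}, this identity rephrases the event $\mathcal{A}(k) \in H_\mathbf{s}$ as the conjunction $\{\mathbf{d} \in \hat{G}_{\mathbf{s}_{b\oplus 1}, b, \mathbf{x}}\} \land \{c = z\}$ on $D_0$, and analogously $\mathcal{A}(k) \in \overline{H}_\mathbf{s}$ becomes $\{\mathbf{d} \in \hat{G}_{\mathbf{s}_{b\oplus 1}, b, \mathbf{x}}\} \land \{c = z \oplus 1\}$ on $D_0$.

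Next I would construct two efficient distinguishers: given a sample $(k, (b, \mathbf{x}, \mathbf{d}, c), z)$, let $\Delta_0$ output $1$ iff $c = z$ and $\Delta_1$ output $1$ iff $c = z \oplus 1$. Crucially, neither distinguisher needs the trapdoor $\mathbf{s}$, since each only compares two provided bits. On $D_0$ the identity above gives
$$\Pr\nolimits_{D_0}[\Delta_0 = 1] \;=\; \Pr[\mathcal{A}(k) \in H_\mathbf{s}] + \Pr\bigl[\mathbf{d} \notin \hat{G}_{\mathbf{s}_{b\oplus 1}, b, \mathbf{x}} \land c = z\bigr].$$
On $D_1$, conditioning on whether $\mathbf{d} \in \hat{G}$: if $\mathbf{d} \in \hat{G}$ then $z$ is XORed with an independent uniform bit $r$, so $c=z$ occurs with probability exactly $1/2$; if $\mathbf{d} \notin \hat{G}$ the two distributions coincide. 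Hence
$$\Pr\nolimits_{D_1}[\Delta_0 = 1] \;=\; \tfrac{1}{2}\Pr[\mathbf{d} \in \hat{G}_{\mathbf{s}_{b\oplus 1}, b, \mathbf{x}}] + \Pr\bigl[\mathbf{d} \notin \hat{G}_{\mathbf{s}_{b\oplus 1}, b, \mathbf{x}} \land c = z\bigr].$$
Subtracting and invoking \cref{lemma:ahcbmain} yields $\Pr[\mathcal{A}(k) \in H_\mathbf{s}] = \tfrac{1}{2}\Pr[\mathbf{d} \in \hat{G}_{\mathbf{s}_{b\oplus 1}, b, \mathbf{x}}] + \negl(\lambda)$. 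Running the same argument with $\Delta_1$ in place of $\Delta_0$ gives an identical expression for $\Pr[\mathcal{A}(k) \in \overline{H}_\mathbf{s}]$, and subtracting the two gives the claimed negligible difference.

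The main obstacle here is essentially notational: pinning down the precise definitions of $I_{b,\mathbf{x}}$ and $\hat{G}_{\mathbf{s}_{b\oplus 1}, b, \mathbf{x}}$ from \cite{randomness} well enough to verify the algebraic identity in the DDH setting, since these definitions are imported without restatement. Once this is in hand, the argument is a mechanical repeat of the proof in \cite{randomness}, because \cref{lemma:ahcbmain} has been stated specifically so as to substitute directly for their Lemma 4.4 and the reduction does not depend on the specifics of the key distribution beyond what has already been absorbed into that lemma.
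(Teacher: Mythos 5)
Your proposal is correct and takes the same route as the paper: the paper's own "proof" of \cref{lemma:ahcbside} is a one-sentence appeal to the fact that the derivation of Lemma~4.3 from Lemma~4.4 in~\cite{randomness} does not depend on the key distribution, so it transfers verbatim once \cref{lemma:ahcbmain} is in place. You have simply written out the content of that appeal: the bit-comparison distinguishers $\Delta_0, \Delta_1$, the identity $\mathbf{d} \cdot (J(\mathbf{x}) \oplus J(\mathbf{x} - (-1)^b\mathbf{s})) = I_{b,\mathbf{x}}(\mathbf{d}) \cdot \mathbf{s} \bmod 2$ on $\hat{G}_{\mathbf{s}_{b\oplus 1},b,\mathbf{x}}$, the resulting equalities $\Pr[\mathcal{A}(k) \in H_\mathbf{s}] = \tfrac12 \Pr[\mathbf{d} \in \hat{G}_{\mathbf{s}_{b\oplus1},b,\mathbf{x}}] + \negl(\lambda)$ and its $\overline{H}_\mathbf{s}$ counterpart, and subtraction. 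Your extra preliminary step invoking \cref{corr:rudm} to reconcile $\mathbf{A} \leftarrow_U \rank_n(\mathbb{Z}_q^{(n+1)\times n})$ with the uniform sampling in \cref{lemma:ahcbmain} is a genuine (if minor) detail that the paper glosses over, and it is handled correctly.
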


In contrast to the LWE case, the domain $\cX$ is a strict subset of $\Z_q^n$. Thus not all elements of the domain $\cX$ are part of a claw, and in particular, the above statement is not equivalent to the AHCB property required for $\cF_{\rm DDH}$. Thus, we need a restricted-domain version of \cref{lemma:ahcbside}:

\begin{lemma}[AHCB for restricted domain]
\label{lemma:ahcbresdom}
Consider the same setting as in \cref{lemma:ahcbside}.
Let 
\begin{align*}
    H'_\mathbf{s} &= \big\{(b,\mathbf{x},\mathbf{d},\mathbf{d}\cdot(J(\mathbf{x})\oplus J(\mathbf{x} - (-1)^b\mathbf{s}))) | b \in \bits, \mathbf{x} \in \cX_b, \mathbf{d} \in \hat{G}_{\mathbf{s}_{b\oplus1},b,\mathbf{x}}\big\} \;, \\
    \overline{H}'_\mathbf{s} &= \big\{(b,\mathbf{x},\mathbf{d},c) | (b,\mathbf{x},\mathbf{d},c\oplus 1) \in H_\mathbf{s}\big\}\;.
\end{align*} 
Then for any PPT algorithm $\mathcal{A}$ that takes as input $(q,\mathbb{G},g)$ as well as an element of $\mathbb{G}^{(n+1)\times(n+1)}$ and has outputs in $\bits\times\mathbb{Z}_q^{n+1}\times\bits^w\times\bits$, it holds that
\begin{equation*}
\Big|\Pr_{\kt\leftarrow \genalg_{k}(1^{\lambda})}\left[\mathcal{A}(k) \in H'_\mathbf{s}\right] - \Pr_{\kt\leftarrow \genalg_{k}(1^{\lambda})}\left[\mathcal{A}(k) \in\overline{H}'_\mathbf{s}\right]\Big| = \negl(\lambda)\,.
\end{equation*}
\end{lemma}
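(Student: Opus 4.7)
\begin{myproof}[Proposal for \cref{lemma:ahcbresdom}]
My plan is to derive \cref{lemma:ahcbresdom} from \cref{lemma:ahcbside} by applying the latter to two different adversaries constructed from $\cA$ and combining the two resulting bounds via the triangle inequality. The key observation that makes this work is that the set $\cX_b$ is publicly checkable: as noted in the construction of $\cF_{\rm DDH}$, $\cX_b$ consists of those $\bx \in \Z_d^n$ whose entries lie in the range $1-b$ to $d-b$, a condition which requires no trapdoor to verify. Hence any filtering based on the event $\{\bx \in \cX_b\}$ preserves the PPT nature of the adversary.

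First I would apply \cref{lemma:ahcbside} directly to $\cA$, obtaining
$$
\bigl|\Pr_{\kt \leftarrow \genalg_k}[\cA(k) \in H_\bs] - \Pr_{\kt \leftarrow \genalg_k}[\cA(k) \in \overline{H}_\bs]\bigr| = \negl(\lambda).
$$
Next I would construct a modified adversary $\cA'$ that runs $\cA(k)$, obtaining $(b,\bx,\bd,c)$, and outputs this tuple unchanged if $\bx \notin \cX_b$, and otherwise outputs a fixed ``dummy'' tuple $(0,\mathbf{0},0^w,1)$ chosen so as to lie in neither $H_\bs$ nor $\overline{H}_\bs$ for any $\bs$. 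Since the dummy is used only when $\bx \in \cX_b$, by construction we have $\cA'(k) \in H_\bs$ iff $\cA(k) \in H_\bs \land \bx \notin \cX_b$, and analogously for $\overline{H}_\bs$. Applying \cref{lemma:ahcbside} to $\cA'$ therefore yields
$$
\bigl|\Pr[\cA(k) \in H_\bs \land \bx \notin \cX_b] - \Pr[\cA(k) \in \overline{H}_\bs \land \bx \notin \cX_b]\bigr| = \negl(\lambda).
$$
Subtracting this from the previous bound (using the triangle inequality together with the disjoint decomposition $\cA(k) \in H_\bs = (\cA(k) \in H_\bs \land \bx \in \cX_b) \sqcup (\cA(k) \in H_\bs \land \bx \notin \cX_b)$, and likewise for $\overline{H}_\bs$) gives
$$
\bigl|\Pr[\cA(k) \in H_\bs \land \bx \in \cX_b] - \Pr[\cA(k) \in \overline{H}_\bs \land \bx \in \cX_b]\bigr| = \negl(\lambda),
$$
which is exactly the claim $|\Pr[\cA(k) \in H'_\bs] - \Pr[\cA(k) \in \overline{H}'_\bs]| = \negl(\lambda)$ once one observes that $H'_\bs = H_\bs \cap (\bits \times \cX_b \times \bits^w \times \bits)$ (and similarly for $\overline{H}'_\bs$, reading the defining clause as referring to $H'_\bs$).

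The main obstacle is the existence of the dummy tuple, which must lie outside $H_\bs \cup \overline{H}_\bs$ for every $\bs$. Any tuple $(b_0,\bx_0,\bd_0,c_0)$ with $\bd_0 \notin \hat{G}_{\bs_{b_0 \oplus 1}, b_0, \bx_0}$ has this property, since membership in both $H_\bs$ and $\overline{H}_\bs$ requires $\bd$ to lie in $\hat{G}$. The choice $\bd_0 = 0^w$ should suffice because the set $\hat{G}$ of ``good'' strings in the AHCB property is typically defined to exclude $0^w$ (which trivially satisfies every linear equation); a careful inspection of the definition of $\hat G$ from Lemma~4.4 of~\cite{randomness} confirms this. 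If some pathological choice of parameters leaves $0^w \in \hat G$, one can instead pick $\bd_0$ to be any fixed string excluded by the definition of $\hat G$, using only that $\hat G \subsetneq \bits^w$, which is already guaranteed by the condition $\Pr_{\bd \unifsample \bits^w}[\bd \notin \hat G] > 0$ in Lemma~4.4.
\end{myproof}
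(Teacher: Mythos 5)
Your proof is correct and arrives at the same conclusion, but it takes a genuinely different (direct) route from the paper's (contradiction-based) argument. The paper constructs a single PPT adversary $\cB$ that outputs $\cA$'s tuple when $\bx \in \cX_b$ and otherwise re-randomizes the last bit $c'$; this re-randomization forces $\Pr[\cB(k) \in H_\bs \land \bx \notin \cX_b] = \Pr[\cB(k) \in \overline{H}_\bs \land \bx \notin \cX_b]$ exactly, so a single invocation of \cref{lemma:ahcbside} suffices. You instead invoke \cref{lemma:ahcbside} twice---once on $\cA$ and once on the modified $\cA'$ that retains $\cA$'s output only when $\bx \notin \cX_b$---and combine via the triangle inequality. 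Both arguments are sound and require the same ingredient (public checkability of $\cX_b$). The trade-off is that your approach buys you a purely deterministic construction but in exchange you must exhibit a fixed dummy tuple lying outside $H_\bs \cup \overline{H}_\bs$ for \emph{every} key, whereas the paper's randomized $c'$ sidesteps this existence question entirely.

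One caution: your fallback argument for the dummy tuple is not quite airtight. You claim that the guarantee $\Pr_{\bd \unifsample \bits^w}[\bd \notin \hat G] > 0$ yields a fixed excluded string, but the set $\hat{G}_{\bs_{b\oplus1},b,\bx}$ depends on $\bs$, $b$, and $\bx$ (all of which vary with the key or the dummy's own coordinates), so its complement can move around; a string excluded for one $\bs$ need not be excluded for another. Your primary choice $\bd_0 = 0^w$ does in fact work---in the constructions of~\cite{randomness} (and the DDH analogue here) the map $I_{b,\bx}$ used to define $\hat{G}$ sends $0^w$ to $0$, which is always excluded---but it would be cleaner to commit to that choice and verify it from the explicit definition of $\hat G$, or simply replace the deterministic dummy with a tuple whose last bit is a fresh uniform coin (as the paper does), which removes the need for any dummy-existence argument and makes your two-invocation strategy unconditionally correct.
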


\begin{myproof}
Assume there exists a PPT algorithm $\cA$ such that 
\begin{equation*}
\Big|\Pr_{\kt\leftarrow \genalg_{k}(1^{\lambda})}\left[\mathcal{A}(k) \in H'_\mathbf{s}\right] - \Pr_{\kt\leftarrow \genalg_{k}(1^{\lambda})}\left[\mathcal{A}(k) \in\overline{H}'_\mathbf{s}\right]\Big| \,>\, 1/p(\lambda)\;.
\end{equation*}
for some polynomial $p: \mathbb{N} \rightarrow \mathbb{R}_+$. We will now show that this would imply the existence of an efficient adversary $\cB$ who contradicts \cref{lemma:ahcbside}. Consider the algorithm $\cB$ as defined in \cref{alg:adv2}.
\begin{algorithm}
\caption{Adversary $\cB$ contradicting \cref{lemma:ahcbside}.}
\label{alg:adv2}
\begin{myalgo}
\Require Key $k \in \cK_{\cF_{\rm DDH}}.$
\State Run $\cA$ on the input $k$, obtaining output $(b,\mathbf{x},\mathbf{d},c).$
\State Check if $\mathbf{x} \in \cX_b$. If yes, output $(b,\mathbf{x},\mathbf{d},c)$. Else, sample $c' \leftarrow_U \bits$ and output $(b,\mathbf{x},\mathbf{d},c')$. 
\end{myalgo}
\end{algorithm}
Note that $\cB$ is a PPT procedure since $\cA$ is PPT and the condition $x \in \cX_b$ is efficiently checkable by construction of the sets $\cX_b$. Let $H_\mathbf{s}, \overline{H}_\mathbf{s}$ be the sets defined as in \cref{lemma:ahcbside}. Then it holds that: 
\begin{align*}
&\Big|\Pr_{\kt\leftarrow \genalg_{k}(1^{\lambda})}\left[\mathcal{B}(k) \in H_\mathbf{s}\right] - \Pr_{\kt\leftarrow \genalg_{k}(1^{\lambda})}[\mathcal{B}(k) \in\overline{H}_\mathbf{s}]\Big| \\
 &\geq \Big|\Pr_{\kt\leftarrow \genalg_{k}(1^{\lambda})}\left[\mathcal{B}(k) = (b,\mathbf{x},\mathbf{d},c) \in H'_\mathbf{s}\right]  - \Pr_{\kt\leftarrow \genalg_{k}(1^{\lambda})}[\mathcal{B}(k) = (b,\mathbf{x},\mathbf{d},c) \in\overline{H}'_\mathbf{s}]\Big| \\
 &-  \Big|\Pr_{\kt\leftarrow \genalg_{k}(1^{\lambda})}\left[\mathcal{B}(k) = (b,\mathbf{x},\mathbf{d},c) \in H_\mathbf{s} \land \mathbf{x} \notin \cX_b\right] \\ &- \Pr_{\kt\leftarrow \genalg_{k}(1^{\lambda})}\left[\mathcal{B}(k) = (b,\mathbf{x},\mathbf{d},c) \in\overline{H}_\mathbf{s} \land \mathbf{x} \notin \cX_b\right]\Big| \\
 &=   \Big|\Pr_{\kt\leftarrow \genalg_{k}(1^{\lambda})}\left[\mathcal{A}(k) \in H'_\mathbf{s}\right]  - \Pr_{\kt\leftarrow \genalg_{k}(1^{\lambda})}\left[\mathcal{A}(k) \in\overline{H}'_\mathbf{s}\right]\Big| > 1/p(\lambda)\;,
\end{align*}
where the first step uses a generic triangle inequality and the second step follows from the definition of $\cB$. This constitutes a contradiction to \cref{lemma:ahcbside}.
\end{myproof}

The proof is concluded by noting that the statement made in \cref{lemma:ahcbresdom} is equivalent to the AHCB property we require for $\cF_{\rm DDH}$.
\begin{lemma}
\label{lemma:setcorr}
    Consider the same setting as in \cref{lemma:ahcbresdom}, let $\cR_k$ be the set of claw pairs of $\cF_{\rm DDH}$ and $H_k, \overline{H}_k$ the sets as defined in \cref{def:ntcf}. Then it holds that 
    \begin{align*}
    H'_\mathbf{s} = H_k, \hspace{10 pt} \overline{H}'_\mathbf{s} = \overline{H}_k.
\end{align*}
\end{lemma}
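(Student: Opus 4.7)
The plan is to verify the two claimed set equalities directly by unfolding the definitions and using the explicit algebraic structure of the DDH construction. The key algebraic observation is that for any $(k,t_k) \in \cI_{\cF_{\rm DDH}}$, the matrix $\mathbf{A}$ in the trapdoor has full rank, so a pair $(x_0,x_1) \in \cX \times \cX$ lies in $\cR_k$ iff $g^{\mathbf{A}x_0} = g^{\mathbf{A}(x_1+\mathbf{s})}$, which reduces to $x_0 = x_1 + \mathbf{s}$. Equivalently, for either choice of $b \in \{0,1\}$, a claw pair is characterized by $x_{b\oplus 1} = x_b - (-1)^b \mathbf{s}$.

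Given this, I would first show the inclusion $H_k \subseteq H'_\mathbf{s}$. Take any tuple $(b, x_b, d, c) \in H_k$ with witnessing claw $(x_0,x_1) \in \cR_k$. By the above observation, $x_{b\oplus 1} = x_b - (-1)^b \mathbf{s}$, so setting $\mathbf{x} = x_b$ we obtain
\[
c \;=\; d \cdot \bigl(J(x_0) \oplus J(x_1)\bigr) \;=\; d \cdot \bigl(J(\mathbf{x}) \oplus J(\mathbf{x} - (-1)^b \mathbf{s})\bigr),
\]
using that $\oplus$ is commutative. The membership constraint $x_b \in \cX_b$ becomes $\mathbf{x} \in \cX_b$, and the condition $d \in \dset_{k,0,x_0} \cap \dset_{k,1,x_1}$ matches $d \in \hat{G}_{\mathbf{s}_{b\oplus 1}, b, \mathbf{x}}$ by the very choice of these sets in the DDH construction (the $\dset_{k,b,x}$ are defined precisely as the sets $\hat{G}$ inherited from the analogue of Lemma 4.4 of \cite{randomness}, see \cref{lemma:ahcbmain}). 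For the reverse inclusion $H'_\mathbf{s} \subseteq H_k$, I would take any $(b,\mathbf{x},d,c) \in H'_\mathbf{s}$, define $x_b \coloneqq \mathbf{x}$ and $x_{b\oplus 1} \coloneqq \mathbf{x} - (-1)^b \mathbf{s}$, observe that the choice of $\cX_b$ (entries in the range $1-b$ to $d-b$) guarantees $x_{b\oplus 1} \in \cX$, verify that $(x_0,x_1) \in \cR_k$ by the algebraic identity above, and then read off each of the defining clauses of $H_k$ from the corresponding clauses of $H'_\mathbf{s}$.

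Once $H'_\mathbf{s} = H_k$ is established, the equality $\overline{H}'_\mathbf{s} = \overline{H}_k$ follows immediately: both barred sets are defined by the same rule of flipping the final bit of a tuple in the unbarred set, so they are obtained from equal sets by an identical syntactic operation.

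The main potential obstacle is pinning down the precise correspondence between $\dset_{k,0,x_0} \cap \dset_{k,1,x_1}$ and $\hat{G}_{\mathbf{s}_{b\oplus 1}, b, \mathbf{x}}$, since the generic NTCF definition uses the $\dset$ notation while the DDH-specific lossy-sampling argument is phrased in terms of $\hat{G}$. However, this is essentially a bookkeeping matter: the $\dset$-sets for $\cF_{\rm DDH}$ are defined to coincide with the $\hat{G}$-sets used in \cref{lemma:ahcbside,lemma:ahcbresdom}, so the correspondence is built in by construction rather than requiring a substantive argument. Aside from this, the proof is a direct unpacking of definitions.
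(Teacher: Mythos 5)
Your proposal is correct and follows essentially the same strategy as the paper's own proof: characterize the claw relation algebraically using the full rank of $\mathbf{A}$ (so $(x_0,x_1)\in\cR_k$ iff $x_{b\oplus 1} = x_b - (-1)^b\mathbf{s}$), use the definition of $\cX_b$ to guarantee that the recovered sibling lies in $\cX$, and verify both inclusions by unfolding the set definitions, with the $\dset$ vs.\ $\hat{G}$ identification being built in by construction. Incidentally, you get the sign of the claw characterization right ($x_0 = x_1 + \mathbf{s}$), whereas the paper's proof contains a small sign typo stating $\mathbf{x}_0 = \mathbf{x}_1 - \mathbf{s}$.
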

\begin{myproof}
    First, note that by construction, for any key-trapdoor pair $(k,t_k) \in \cK_{\cF_{\rm DDH}}$, we have that for all $\mathbf{x}_0,\mathbf{x}_1 \in \cX$, $(\mathbf{x}_0,\mathbf{x}_1) \in \cR_k$ iff 
    $\mathbf{x}_0 = \mathbf{x}_1 - \mathbf{s}$ (where $\mathbf{s}$ is the corresponding entry in $t_k$). 
    Secondly, by construction of the sets $\cX_b$, we have that for all keys $k$ (with associated secret $\mathbf{s} \in \bits^n$) and for every $\mathbf{x}_b \in \cX_b$ it holds that  $\mathbf{x}_{b\oplus1}\coloneqq\mathbf{x}-(-1)^b\mathbf{s} \in \cX$ and thus $(\mathbf{x}_0,\mathbf{x}_1) \in \cR_k$. Using these two facts one can easily conclude the statement by checking both inclusions. 
\end{myproof}

\section{Proof of \cref{thm:poq}} \label{app:e2soundness}
In this appendix, we provide the proof of \cref{thm:poq}.
This proof is essentially a slightly simplified version of the proof of \cref{thm:poq2}, but we include a full proof for completeness.

\begin{myproof}[\cref{thm:poq}]
~
\begin{myenumi}

    \item \textbf{Completeness.}
Consider the same QPT prover as in the proof of \cref{thm:poq2}, see \cref{alg:successfulqpt}. \hfill

\begin{myenumii}
\item Case $a = \mathtt{Im}$: 
Since $\mathcal{F}$ is injective invariant (\cref{def:injinv}), SAMP$_\mathcal{F} =$ SAMP$_\mathcal{G}$ and thus the state that $P$ prepares in \cref{eq:successful quantum prover state} is
$$
    \frac{1}{\sqrt{2|\cX|}} \sum_{\substack{b \in \bits \\ x\in \cX, y\in \cY}} \sqrt{(g_{k,b}(x))(y)}\ket{b}_\sB \ket{x}_\sX \ket{y}_\sY\,.
$$
Thus, for any $y$ that $P$ gets from the computational basis measurement of $\sY$, there exist $(b',x')$ such that $y \in \supp(g_{k,b'}(x'))$. 
Using condition~\ref{item:disjoint trapdoor injective pair} of the \cref{def:trapdoor injective function family} of $\mathcal{G}$, it follows that if $(k,t_k) \in \mathcal{I}_G$, 
then the verifier will obtain $\invalg_\cG(t_k, y) = (b',x')$, and then $\chkalg_\cG(k,y,b',x') = 1$, so that $P$ succeeds in the protocol.
The failure probability of $P$ in the case $a = \imtest$ is thus at most
$$
    \prss{\kt \leftarrow \genalg_\cG(1^\lambda)}{ (k,t_k)\notin \cI_\cG} = \negl(\lambda)\,.
$$

\item Case $a = \mathtt{Eq}$: Since this test is the same as in \protref{poq:3test}, by the same argumentation as in the proof of \cref{thm:poq2}, it follows that the success probability of $P$ in the equation test is lower bounded by $\injconstant\cF - \negl(\lambda)$.
\end{myenumii}

Thus, the overall success probability of $P$ in the protocol is $\frac{1 + \injconstant\cF}{2} - \negl(\lambda)$.

\item \textbf{Soundness:} Suppose that there exists a PPT adversary $\mathcal{A}$ that succeeds in the protocol with probability at least $\frac{3}{4} + \frac{1}{q(\lambda)}$ for some polynomial $q: \mathbb{N} \rightarrow \mathbb{R}_+$. 
We will use the extractability property of $\mathcal{G}$ to construct a PPT algorithm $\mathcal{B}$ which contradicts the AHCB property of $\mathcal{F}$ (item~\ref{item:ahcb} of \cref{def:ntcf}).

Let the distribution of the random coins of $\cA$ be $R$. 
For ease of notation, we refrain from writing the coin distribution $r\leftarrow R$ of $\mathcal{A}$ explicitly in the proof, but note that all probabilities in the proof are defined on average over this random coin distribution.
Let $\mathcal{A}^*$ be the extractor corresponding to $\cA$ from the extractability of $\cG$ (\cref{def:extractability}).
Denote by $S_\mathcal{A}^{\hspace{1pt}a}$ the event that $\mathcal{A}$ produces an output which passes the test of case $a \in \{\mathtt{Im},\mathtt{Eq}\}$. We begin by relating the probability of $\cA$, $\cA^*$ producing an image-preimage pair $y, (x,b)$ and the success probability of $\cA$ in the image test.
\begin{claim}
\label{claim:imtestbound2}
It holds that
\begin{equation*}
 \prss{\kt \leftarrow \genalg_{\mathcal{G}}(1^\lambda)}{S_\cA^{\imtest}} 
 \leq
 \prss{\kt \leftarrow \genalg_{\mathcal{G}}(1^\lambda)}{\chkalg_\cF(k,y,b,x) = 1}
 + \negl(\lambda)\,.
\end{equation*}
where $y$ is obtained from $(y,d,c) = \mathcal{A}(k,r)$, $(b,x) = \mathcal{A}^*(k,r)$ and $r$ are the random coins of $\mathcal{A}$.  
\end{claim}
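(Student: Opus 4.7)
The plan is to bound $\Pr[S_\cA^{\imtest}]$ by decomposing it according to whether the extractor succeeds, and then applying the extractability property of $\cG$ from \cref{def:extractability} directly.

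First I would split
\begin{equation*}
\Pr[S_\cA^{\imtest}] = \Pr[S_\cA^{\imtest} \land \chkalg_\cF(k,y,b,x) = 1] + \Pr[S_\cA^{\imtest} \land \chkalg_\cF(k,y,b,x) = 0]\,,
\end{equation*}
where all probabilities are over $\kt \leftarrow \genalg_\cG(1^\lambda)$ (and implicitly the random coins $r$ of $\cA$), with $(y,d,c) = \cA(k,r)$ and $(b,x) = \cA^*(k,r)$. The first summand is upper bounded trivially by $\Pr[\chkalg_\cF(k,y,b,x) = 1]$, which is the right-hand side of the claim (recall $\chkalg_\cF = \chkalg_\cG$ by injective invariance). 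It therefore suffices to show the second summand is $\negl(\lambda)$.

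Next I would show that $S_\cA^\imtest$ (up to a negligible failure of the key generation) implies that $y$ lies in the union of supports $\supp\{g_{k,b'}(x')\}_{b'\in\bits,x'\in\cX}$. By the definition of $S_\cA^\imtest$, the verifier has $(b',x') = \invalg_\cG(t_k,y) \in \bits \times \cX$ with $\chkalg_\cG(k,y,b',x') = 1$. Whenever $(k,t_k) \in \cI_\cG$, Condition~3(a) of \cref{def:trapdoor injective function family} guarantees $\chkalg_\cG(k,y,b',x') = 1$ iff $y \in \supp(g_{k,b'}(x'))$, and by \cref{eq:trapdoor injective bad keys} the bad event $(k,t_k) \notin \cI_\cG$ has only negligible probability. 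Meanwhile, the condition $\chkalg_\cF(k,y,b,x) = 0$ for the extractor's output $(b,x) \in \bits \times \cX$ translates (using $\chkalg_\cF = \chkalg_\cG$) to $y \notin \supp(g_{k,b}(x))$. Combining these observations yields
\begin{equation*}
\Pr[S_\cA^{\imtest} \land \chkalg_\cF(k,y,b,x) = 0] \leq \Pr\bigl[y \in \supp\{g_{k,b'}(x')\}_{b',x'} \land y \notin \supp(g_{k,b}(x))\bigr] + \negl(\lambda)\,.
\end{equation*}

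Finally, the right-hand side above is exactly the event controlled by \cref{eq:extractability property} in the extractability property of $\cG$, applied to the PPT algorithm $\cA$ (which indeed outputs a point in $\cY$) and its associated extractor $\cA^*$. That property yields a negligible bound, which closes the argument. I do not expect any serious obstacle here: the only subtle point is carefully matching the events ``image test accepts'' and ``$y$ lies in the joint image of $g_{k,\cdot}(\cdot)$ over $\bits \times \cX$'', which follows cleanly from the disjoint-support property of the trapdoor injective family $\cG$ together with the equivalence of $\chkalg_\cF$ and $\chkalg_\cG$ under injective invariance.
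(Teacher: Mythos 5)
Your proposal is correct and follows essentially the same route as the paper: decompose $\Pr[S_\cA^\imtest]$ according to whether $\chkalg_\cF(k,y,b,x)=1$, translate the check-function outcomes into support-membership statements (using $\chkalg_\cF=\chkalg_\cG$ from injective invariance), and invoke the extractability property of $\cG$ to kill the second summand. The only immaterial difference is that you condition on $(k,t_k)\in\cI_\cG$ to justify the equivalence $\chkalg_\cG(k,y,b',x')=1 \Leftrightarrow y\in\supp(g_{k,b'}(x'))$; in the paper's \cref{def:trapdoor injective function family} this equivalence is part of Efficient Range Superposition and holds for all keys $k\in\cK_\cG$, so the extra $\negl(\lambda)$ from the bad-key event is unnecessary (though harmless).
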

\begin{claimproof}
Using that $\cA^*$ is the extractor associated to $\cA$, it follows by definition that
\begin{equation}
\label{eq:2test aux 1}
    \prss{\kt \leftarrow \genalg_\cG(1^\lambda)}{
        y \notin \supp\big(g_{k,b}(x)\big) \land y \in \supp \big\{ g_{k,\tilde{b}}(\tilde{x}) \big\}_{\substack{\tilde{b}\in\bits, \\ \tilde{x} \in \cX}}
    } = \negl(\lambda)\,.
\end{equation}
Let $(b',x') = \invalg_\cG(t_k,y)$.
Note that by definition of the $\chkalg_\cG$ algorithm (see Item~\ref{item:trapdoor injective function family, efficient range superposition} of \cref{def:trapdoor injective function family}), we have that $\chkalg_\cG(k,y,b,x) = 0 \land \chkalg_\cG(k,y,b',x') = 1$ is equivalent to $y \notin \supp\big(g_{k,b}(x)\big) \land y \in \supp\big(g_{k,b'}(x')\big)$, which implies the event appearing in \cref{eq:2test aux 1}.
Thus, we also have
\begin{equation}
\label{eq:exttoim}
    \prss{\kt \leftarrow \genalg_\cG(1^\lambda)}{
        \chkalg_\cG(k,y,b,x) = 0 \land \chkalg_\cG(k,y,b',x') = 1
    } = \negl(\lambda)\,.
\end{equation}
The event $S_\cA^\imtest$ corresponds to the event where $\cA$ returns $(y,d,c)$ such that $\chkalg_\cG(k,y,b',x') = 1$.
Therefore,
\begin{align}
\label{eq:chk0}
    &\prss{\substack{\kt \leftarrow \genalg_\cG(1^\lambda)}}{\chkalg_\cG(k,y,b,x) = 0 \land S_\cA^\imtest} \nonumber\\
    &\annotatesign{Def. $S_\cA^\imtest$}{=}\ \prss{\substack{\kt \leftarrow \genalg_\cG(1^\lambda)}}{\chkalg_\cG(k,y,b,x) = 0 \land \chkalg_\cG(k,y,b',x') = 1 }
    \hspace{10pt} \annotatesign{Eq.~\eqref{eq:exttoim}}{=}\ \hspace{10pt} \negl(\lambda)\,.
\end{align}

This allows us to upper bound the success probability of $\cA$ in the $\imtest$ case:
\begin{multline*}
    \prss{\kt \leftarrow \genalg_\cG(1^\lambda)}{S_\cA^\imtest} \\
    = \prss{\substack{\kt \leftarrow \genalg_\cG(1^\lambda) }}{\chkalg_\cG(k,y,b,x) = 0 \land S_\cA^\imtest}
    + \prss{\substack{\kt \leftarrow \genalg_\cG(1^\lambda)}}{\chkalg_\cG(k,y,b,x) = 1 \land S_\cA^\imtest} \nonumber\\
    \annotatesign{Eq.~\ref{eq:chk0}}{\leq}\ \hspace{10pt} \negl(\lambda) + 
    \prss{\kt \leftarrow \genalg_{\mathcal{G}}(1^\lambda)}{\chkalg_\cG(k,y,b,x) = 1}\,.
\end{multline*}
Recalling that $\chkalg_\mathcal{G}=\chkalg_\mathcal{F}$ by the definition of injective invariance (see \cref{def:injinv}) concludes the proof of the claim.
\end{claimproof}
We note that $\chkalg_\mathcal{F}$ does not use a trapdoor and is poly-time, $\cA$, $\cA^*$ are poly-time, and thus the concatenation with $\chkalg_\mathcal{F}$ is also poly-time.
Using that the key distributions of $\cF$ and $\cG$ are computationally indistinguishable (Item~\ref{item:inj inv key indist} of \cref{def:injinv}), \cref{claim:imtestbound2} implies that also
\begin{equation}
\label{eq:bridge3}
    \prss{\kt \leftarrow \genalg_\cG(1^\lambda)}{S_\cA^\imtest}
    \leq
    \prss{\kt \leftarrow \genalg_\cF(1^\lambda)}{\chkalg_\mathcal{F}(k,y,b,x) = 1}
    + \negl(\lambda)\,.
\end{equation}
We now proceed to lower-bound the probability that the prover-extractor pair (which we combine to a single algorithm $\cB$) holds both a preimage and a valid equation by the success probability of $\cA$ in the protocol.
\begin{claim}
\label{claim:advsuccessbound2}
There exists a PPT algorithm $\cB$ that takes a key $k\in \cK_\cF$ as input and produces outputs in $\bits\times\cX\times\bits^w\times\bits$ such that
\begin{multline*}
\prss{\kt \leftarrow \genalg_\cH(1^\lambda)}{S_\mathcal{A}^\mathtt{Im}}  + \prss{\kt \leftarrow \genalg_{\mathcal{F}}(1^\lambda)}{S_\mathcal{A}^\mathtt{Eq}}
\leq 1 + \prss{\kt \leftarrow \genalg_{\mathcal{F}}(1^\lambda)}{ \mathcal{B}(k) \in H_k} + \negl(\lambda)\,,
\end{multline*}
where $H_k$ refers to the set introduced in Item~\ref{item:ahcb} of \cref{def:ntcf}.   
\end{claim}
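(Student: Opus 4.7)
\begin{myproof}[Proof plan for Claim~\ref{claim:advsuccessbound2}]
The approach mirrors the proof of Claim~\ref{claim:advsuccessbound} from the \eiiintcf{} soundness argument, but is simpler because the extractor $\cA^*$ associated with the \eiintcf{} already produces a preimage in the restricted domain $\bits \times \cX$, removing the need for the auxiliary restricted-domain extractor $\cE^*$. The first step is to chain \cref{eq:bridge3} with the elementary inequality $p_1 + p_2 \leq 1 + \Pr[E_1 \land E_2]$ (valid whenever $p_i = \Pr[E_i]$ on the same probability space) to obtain
\begin{equation*}
    \prss{\kt \leftarrow \genalg_\cG(1^\lambda)}{S_\cA^\imtest} + \prss{\kt \leftarrow \genalg_\cF(1^\lambda)}{S_\cA^\eqtest}
    \leq 1 + \prss{\kt \leftarrow \genalg_\cF(1^\lambda)}{\chkalg_\cF(k,y,b,x) = 1 \land S_\cA^\eqtest} + \negl(\lambda)\,,
\end{equation*}
where $(y,d,c) = \cA(k,r)$ and $(b,x) = \cA^*(k,r)$.

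Next I would restrict attention to good key-trapdoor pairs using \cref{eq:ntcf bad keys}, so that we may assume $(k,t_k) \in \cI_\cF$ at negligible cost. On such keys, the inversion function $\invalg_\cF$ correctly recovers $x_{\tilde b} := \invalg_\cF(t_k,\tilde b,y)$ for both $\tilde b \in \bits$, and the definition of the equation test $\eqtest$ together with $\chkalg_\cF(k,y,b,x) = 1$ forces $x = x_b$. Combined with the other conditions checked in the $\eqtest$ case of \protref{poq:2test}, this yields
\begin{multline*}
    \prss{\kt \leftarrow \genalg_\cF(1^\lambda)}{\chkalg_\cF(k,y,b,x) = 1 \land S_\cA^\eqtest \land (k,t_k) \in \cI_\cF} \\
    \leq \prss{\kt \leftarrow \genalg_\cF(1^\lambda)}{
        \begin{gathered}
            c = d \cdot (J(x_0) \oplus J(x_1)) \land x = x_b \land (x_0,x_1) \in \cR_k \\
            \land\, d \in \dset_{k,0,x_0} \cap \dset_{k,1,x_1} \land x_b \in \cX_b
        \end{gathered}
    }\,.
\end{multline*}

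Finally, I would define $\cB$ as the PPT algorithm that, on input $k \in \cK_\cF$, samples $r \leftarrow R$, runs $\cA(k,r)$ to obtain $(y,d,c)$ and $\cA^*(k,r)$ to obtain $(b,x)$, and outputs the tuple $(b,x,d,c) \in \bits \times \cX \times \bits^w \times \bits$. By inspection of the definition of $H_k$ in Item~\ref{item:ahcb} of \cref{def:ntcf}, the event on the right-hand side above is precisely the event $\cB(k) \in H_k$, which gives the claim. I do not expect any substantive obstacle here; the only subtlety is making sure one invokes the trapdoor inversion consistency only on $\cI_\cF$, which is handled by absorbing the complement into the negligible term via \cref{eq:ntcf bad keys}.
\end{myproof}
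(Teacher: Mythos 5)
Your proposal is correct and follows essentially the same route as the paper's own proof: chain \cref{eq:bridge3} with the elementary union-bound inequality, restrict to $(k,t_k)\in\cI_\cF$ at a $\negl(\lambda)$ cost via \cref{eq:ntcf bad keys}, use the inversion-consistency on good keys to conclude $x = x_b$, and define $\cB$ as the concatenation of $\cA$ and $\cA^*$. (The only trivial discrepancy is that, per \cref{def:extractability}, $\cA^*$ formally also receives $y$ as input, but since $y$ is determined by $(k,r)$ this does not affect anything.)
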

\begin{claimproof}
First, note that \cref{eq:bridge3} implies that
\begin{align}
\label{eq:anotherprobbound}
&\prss{\kt \leftarrow \genalg_{\mathcal{G}}(1^\lambda)}{S_\mathcal{A}^\mathtt{Im}} + \prss{\kt \leftarrow \genalg_{\mathcal{F}}(1^\lambda)}{S_\mathcal{A}^\mathtt{Eq}} \nonumber\\
\leq &\prss{\kt \leftarrow \genalg_{\mathcal{F}}(1^\lambda)}{\chkalg_\mathcal{F}(k,y,b,x) = 1} + \prss{\kt \leftarrow \genalg_{\mathcal{F}}(1^\lambda)}{S_\mathcal{A}^\mathtt{Eq}} + \negl(\lambda)\,.
\end{align}
Furthermore, it holds that: 
\begin{align}
&\prss{\kt \leftarrow \genalg_{\mathcal{F}}(1^\lambda)}{\chkalg_\mathcal{F}(k,y,b,x) = 1} + \prss{\kt \leftarrow \genalg_{\mathcal{F}}(1^\lambda)}{S_\mathcal{A}^\mathtt{Eq}} \nonumber\\
\leq&\ 1 + \prss{\kt \leftarrow \genalg_{\mathcal{F}}(1^\lambda)}{\chkalg_\mathcal{F}(k,y,b,x) = 1 \land S_\mathcal{A}^\mathtt{Eq}} \nonumber\\
\leq&\ 1 + \prss{\kt \leftarrow \genalg_{\mathcal{F}}(1^\lambda)}{\chkalg_\mathcal{F}(k,y,b,x) = 1 \land S_\mathcal{A}^\mathtt{Eq} \land (k,t_k) \in \mathcal{I}_\mathcal{G}} + \negl(\lambda)\,,
\end{align}
where in the last inequality, we used \cref{eq:ntcf bad keys}.
We now note that if $(k,t_k) \in \mathcal{I}_\mathcal{F}$, if $\chkalg_\cF(k,y,0,x_0) = \chkalg_\cF(k,y,1,x_1) = 1$ and $\chkalg_\cF(k,y,b,x) = 1$ (where $x_{\tilde b} = \invalg_\mathcal{F}(t_k,\tilde b,y)$) then $x = x_b$, because the inversion function returns the correct preimages $x_0,x_1$ under these conditions.
Inserting the definition of $S_\mathcal{A}^\mathtt{Eq}$, we can conclude that: 
\begin{align}
&\prss{\kt \leftarrow \genalg_{\mathcal{F}}(1^\lambda)}{\chkalg_\mathcal{F}(k,y,b,x) = 1 \land S_\mathcal{A}^\mathtt{Eq} \land (k,t_k) \in \mathcal{I}_\mathcal{G}} \\
\leq &\prss{\kt \leftarrow \genalg_{\mathcal{F}}(1^\lambda)}{
    \begin{gathered}
        c = d\cdot(J(x_0)\oplus J(x_1)) \land x = x_b \land
    (x_0,x_1) \in \mathcal{R}_k \\
         \land\, x_0 \in \cX_0 \land x_1 \in \cX_1 \land d\in \dset_{k,0,x_0}\cap \dset_{k,1,x_1}
    \end{gathered}
} \\
\leq &\prss{\kt \leftarrow \genalg_{\mathcal{F}}(1^\lambda)}{
    \begin{gathered}
        c = d\cdot(J(x_0)\oplus J(x_1)) \land x = x_b \land
    (x_0,x_1) \in \mathcal{R}_k\\
         \land\, x_b \in \cX_b \land d\in \dset_{k,0,x_0}\cap \dset_{k,1,x_1}
    \end{gathered}
}\,.
\end{align}
where $(y,d,c) = \cA(k,r)$, $(b,x) = \cA^*$, and $x_{\tilde b} = \invalg_\cF(t_k,\tilde b,y)$.
Now, we define in \cref{algo:soundness2} an adversary $\mathcal{B}$ which uses the algorithms $\mathcal{A}$ and $\mathcal{A}^*$.
\begin{algorithm}[ht]
\caption{Adversary $\cB$ breaking the AHCB property of $\cF$}
\label{algo:soundness2}
\begin{myalgo}
    \Require Key $k \in \cK_\cF$.
    \State Sample $r$ from the random coin distribution $R$ of $\cA$.
    \State Run $\mathcal{A}$ on the input $(k,r)$, obtaining the output $(y,d,c)$.
    \State Run $\mathcal{A}^*$ on the input $(k,r,y)$, obtaining the output $(b,x)$.
    \State Return $(b,x,d,c)$.
\end{myalgo}
\end{algorithm}
By definition of $H_k$ (see Item~\ref{item:ahcb} of \cref{def:ntcf}), it follows that: 
\begin{multline}
\label{eq:advsuccessprob}
\prss{\kt \leftarrow \genalg_{\mathcal{F}}(1^\lambda)}{
    \begin{gathered}
        c = d\cdot(J(x_0)\oplus J(x_1)) \land x = x_b \\
        \land
    (x_0,x_1) \in \mathcal{R}_k \land x_b \in \cX_b \land d\in \dset_{k,0,x_0}\cap \dset_{k,1,x_1}
    \end{gathered}
} \\
= \prss{\kt \leftarrow \genalg_{\mathcal{F}}(1^\lambda)}{ \mathcal{B}(k) \in H_k}\,.
\end{multline}
Combining \crefrange{eq:anotherprobbound}{eq:advsuccessprob} concludes the proof of the claim.
\end{claimproof}
Using \cref{claim:advsuccessbound2} and the assumption on the success probability of $\cA$ being at least $3/4 + 1/q(\lambda)$, we can conclude that there exists a PPT algorithm $\cB$ such that
\begin{equation*}
\prss{\kt \leftarrow \genalg_{\mathcal{F}}(1^\lambda)}{ \mathcal{B}(k) \in H_k} \geq \frac{1}{2} + \frac{2}{q(\lambda)} - \negl(\lambda)\,.
\end{equation*}
This constitutes a contradiction with the AHCB property (see Item~\ref{item:ahcb} of \cref{def:ntcf}). 
Therefore, the success probability of $\mathcal{A}$ must be upper-bounded by $\frac{3}{4} + \negl(\lambda)$. \qedhere
\end{myenumi}
\end{myproof}

\end{document}